\theoremstyle{plain}
\newtheorem{theorem}{Theorem}[section]
\newtheorem{proposition}[theorem]{Proposition}
\newtheorem{lemma}[theorem]{Lemma}
\theoremstyle{definition}
\newtheorem{definition}[theorem]{Definition}
\newtheorem{assum}[theorem]{Assumption}
\theoremstyle{remark}
\newtheorem{remark}[theorem]{Remark}
\newcommand{\R}{\mathbb{R}}
\newcommand{\E}{\mathbb{E}}
\newcommand{\X}{\mathbf{X}}
\newcommand{\Y}{\mathbf{Y}}
\newcommand{\Z}{\mathbf{Z}}
\newcommand{\z}{\mathbf{z}}
\newcommand{\ab}{\mathbf{a}}
\newcommand{\Rn}{\mathbb{R}^n}
\newcommand{\T}{\top}
\newcommand{\dd}{\mathrm{d}}
\newcommand{\I}{\mathcal{I}}
\newcommand{\id}{\mathrm{I}}
\newcommand{\abs}[1]{\ensuremath{\left\vert #1\right\vert}}
\newcommand{\norm}[1]{\ensuremath{\left\| #1\right\|}}
\newcommand{\setp}[1]{\ensuremath{\left\{ #1\right\}}}
\DeclarePairedDelimiter\ceil{\lceil}{\rceil}
\title{Provably Private Distributed Averaging Consensus: \\
	An Information-Theoretic Approach}
\author{
	{\normalfont Mohammad Fereydounian}\thanks{Department of Electrical and Systems Engineering, University of Pennsylvania, Philadelphia, PA, USA. Emails: \ttfamily\bfseries\{mferey, hassani\}@seas.upenn.edu} 	
	\and {Aryan Mokhtari}\thanks{Department of Electrical and Computer Engineering, University of Texas at Austin, Austin, TX, USA. Email: \ttfamily\bfseries mokhtari@austin.utexas.edu}
	\and {Ramtin Pedarsani} \thanks{Department of Electrical and Computer Engineering, University of California, Santa Barbara, CA, USA. Email: \ttfamily\bfseries ramtin@ece.ucsb.edu} 
	\and {Hamed Hassani}\footnotemark[1]
}
\date{}
\begin{document}

	\maketitle

\begin{abstract}
 In this work, we focus on solving a decentralized consensus problem in a private manner. Specifically, we consider a setting in which a group of nodes, connected through a network, aim at computing the mean of their local values without revealing those values to each other. The distributed consensus problem is a classic problem that has been extensively studied and its convergence characteristics are well-known. Alas, state-of-the-art consensus methods build on the idea of exchanging local information with neighboring nodes which leaks information about the users' local values. We propose an algorithmic framework that is capable of achieving the convergence limit and rate of classic consensus algorithms while keeping the users’ local values private. The key idea of our proposed method is to carefully design noisy messages that are passed from each node to its neighbors such that the consensus algorithm still converges precisely to the average of local values, while a minimum amount of information about local values is leaked. We formalize this by precisely characterizing the mutual information between the private message of a node and all the messages that another adversary collects over time. We prove that our method is capable of preserving users’ privacy for any network without a so-called {\it generalized leaf}, and formalize the trade-off between privacy and convergence time. Unlike many private algorithms, any desired accuracy is achievable by our method, and the required level of privacy only affects the convergence time.
\end{abstract}

\section{Introduction}

In this paper, we focus on a classic distributed computing problem in which a group of connected agents aim to find the average of their local values, also known as the \textit{consensus problem}. Due to  applications of the consensus problem in many domains, such as sensor fusion \cite{olfati2005consensus,xiao2005scheme,he2015multiperiod}, distributed energy management \cite{zhao2016consensus}, Internet of Things \cite{chen2017narrowband}, 
and large-scale machine learning and federated learning \cite{bekkerman2011scaling, mcmahan2017communication}, it has been widely studied in the literature \cite{xiao2004fast,olfati2004consensus,jadbabaie2003coordination}.

A common feature in most classic consensus algorithms is the requirement for sharing the local values with neighboring nodes. However, this approach is, indeed, problematic in settings that nodes are not willing to share their exact local value (opinion or belief) due to \textit{privacy} concerns \cite{mo2016privacy}. An example would be in the case where members of a social network aim to compute their common opinion on a subject, but want to keep their personal opinions secret \cite{degroot1974reaching}. This issue has motivated a new line of research which focuses on the possibility of achieving consensus in a fully decentralized manner without disclosing the initial nodes' value.

To provide privacy in the consensus problem, the first step  is to define a measure that quantifies  it. Differential privacy \cite{cortes2016differential,dwork2014algorithmic} is the most studied measure of privacy for an algorithm running over a dataset. This notion measures the privacy based on the statistical dependency of an algorithm's output to the perturbation of a single element of the input dataset. Noting that the classic consensus method is a deterministic procedure, the most common approach for a private consensus algorithm in the literature is that each node perturbs its signals by some form of noise such that the resulting stochastic algorithm is differentially private \cite{huang2012differentially,nozari2017differentially,feldman2014sample,chaudhuri2013stability,bassily2014private,kamath2019privately,heikkila2017differentially,he2018privacy,mo2016privacy,le2013differentially}. However, these approaches suffer from some drawbacks. Adding perturbing noises  affects the convergence properties compared to non-private consensus algorithms in two ways: (i) It leads to a non-exact limit, and (ii) it compromises the convergence rate, i.e., it leads to slower convergence rates, e.g., slower than the rates achieved by \cite{xiao2004fast}. While some methods address the non-exact limit by adding zero-sum correlated noise \cite{he2018privacy,mo2016privacy}, no prior study has addressed both (i) and (ii) simultaneously. This issue persists for any iterative method that  perturbs in some way the communication messages at each iteration either randomly or in a deterministic manner. This includes works that use methods like output masking \cite{altafini2020system}, state decomposition \cite{wang2019privacy}, splitting messages into segments and adding noise to each segment \cite{heikkila2017differentially}, observability methods \cite{kia2015dynamic,pequito2014design}, edge-based perturbation methods \cite{gupta2018information,xiong2021privacy}, and Homomorphic encryption-based methods \cite{hadjicostis2018privary,yin2019accurate,ruan2019secure}.

Indeed, differential-privacy is a powerful measure for indicating the level of privacy of a learning algorithm running over a dataset. This is done by quantifying the statistical dependency of the algorithm's output to a perturbation of a single input data point. This approach is best applicable when a full list of data points is accessible by the adversary.  Alas, in a consensus setting, the data available at one user, i.e., the messages it collects from its neighbors over time, is different from the data available at other users. By injecting extra noise to all communication messages in a consensus algorithm, the differential-privacy approaches mask the private values against an adversary that can eavesdrop all messages. The cost to defeat this adversary appears on convergence side as discussed above. However, in a setting that the adversary can gain access to only a single node's information, such approaches provide more than necessary privacy with an unwanted cost. Hence, to handle the consensus setting with such type of adversary and asymmetric data distribution, differential privacy must be replaced with a more compatible privacy measure. 

Motivated by this point, we provide a novel information-theoretic scheme to measure and analyze privacy leakage in consensus problems. We further provide simple noise-aggregation methods to reach the exact consensus in a private manner while achieving the fastest rate that a non-private consensus method can achieve. These advantages are based on the fact that our proposed \textit{provably private averaging consensus} (PPAC) method only modifies the initial values of the consensus dynamic and leaves the consensus procedure intact. Moreover, we use a probabilistic model where the private values are random variables, which makes our method fundamentally different from most of the prior work. In this probabilistic framework, we introduce a new measure of privacy leakage based on the mutual information between the initial value of a node and the set of messages that is available at another node. In this way, we can capture how private the initial value of a node is with respect to any other node in the network. Most of our theoretical results hold regardless of the shape of the distribution of noises and private values. Next, we state our contributions:
\vspace{-1.5mm}
\begin{enumerate}
	\item Given a network structure and a consensus matrix, we propose an algorithm for reaching consensus among the nodes,  while keeping the initial values private throughout the algorithm and preserving the fastest rate that a non-private consensus method can achieve.
	\vspace{-1.5mm}
	\item We formalize a continuous notion based on the mutual information function that measures the privacy leakage of any (victim) node at another (adversary) node that can be (optimally) tuned by our adjustable noise parameters.  
	\vspace{-1.5mm}
	\item We provide a precise characterization of the information flow of a private value over the network over time. Having this, even in ill-shaped structures that leak privacy, we can determine the amount of information that an adversary receives as a function of time and compute the  waiting time until recovering a private value.
\end{enumerate}
In summary, this paper has two main messages: (i) Privacy from a node's perspective can be measured and efficiently analyzed based on the concept of mutual information. (ii) Splitting private messages into fragments prior to running the classical consensus method maintains the convergence guarantees and ensures the privacy for most structures. No further additive noise is required.

\noindent\textbf{Notation.}
Column vectors are denoted by small letters in bold font, $\mathbf{a},\mathbf{b}$, while matrices are denoted by capital non-bold letters, $A,B$, and scalars by small letters $a,b$. Moreover, $\{a_i\}_{i\in I}$ is considered as an ordered sequence of mathematical objects $a_i$, $i\in I$. Concatenation of ordered sequences is denoted by Cartesian product and when the number of elements are finite, these ordered sequences are considered as column vectors. For scalars $a_i$, the following example illustrates these notations: 
\begin{equation}\label{ac1}
	\setp{a_i}_{i\in\{1,2\}}\times (a_3,a_4) \times \setp{a_5} = \left[a_1, a_2, a_3, a_4, a_5\right]^{\top}.
\end{equation}
The right-hand side of \eqref{ac1} indicates a column vector with elements $a_1, a_2, a_3, a_4, a_5$. When $a_i\,$s in \eqref{ac1} are replaced with row vectors, \eqref{ac1} refers to the matrix with rows $a_1, a_2, a_3, a_4, a_5$. Moreover, note that $\setp{a_i}_{i\in\{1,2\}}$ can be interpreted as either $[a_1,a_2]^{\top}$ or $[a_2,a_1]^{\top}$. In this paper, whenever the choice of order is not specified, the argument holds regardless of the choice. The identity matrix of size $k$ is $I_{k}$ and all ones (column) vector of length $k$ is $\mathds{1}_k$. We use both $A_{ij}$ and $[A]_{ij}$ to denote the $ij$-th element of the matrix $A$. Also, we use $[A]_{i*}$ and $[A]_{*j}$ to represent the $i$-th row of $A$ as a row vector and the $j$-th column of $A$ as a column vector. Similarly, $[\mathbf{v}]_i$ refers to the 
$i$-th coordinate of the vector $\mathbf{v}$. Further, $A \geq 0$ means $A_{ij}\geq 0$ for all $i,j$. Further, The notation $\operatorname{diag}(a_1,\ldots,a_n)$ represents an $n\times n$ diagonal matrix with $a_i$ as its $i$-th diagonal element. If $A=\setp{\ab_1,\ldots,\ab_k}\subset \Rn$, then $\operatorname{span}(A) = \{\sum_{i=1}^{k}t_i\ab_i \mid t_1,\ldots,t_k\in\R\}$. Further, $[n]=\{1,\ldots,n\}$ and $A\setminus B$ represents set difference for sets $A$ and $B$. Finally, $|A|$ denotes the size of set $A$ and for (real or complex) scalar $a$, $|a|$ denotes the absolute value of $a$.

\section{Preliminaries}\label{pre}

In this section, we recap some basic concepts that we require for presenting  our framework.

\noindent \textbf{Graph Theory.}
By $G=(V,E)$, we denote a simple graph where $V$ is the set of nodes and $E$ is the set of (undirected) edges. The distinct nodes $i$ and $j$ are called neighbors if $\{i,j\}\in E$. Further, the neighborhood of node $i$, denoted by $N_{i}$, is the set of all neighbors of node $i$. The degree of node $i$ is $\deg_{G}(i)=|N_i|$. A walk of length $k$ between $i$ and $j$, is the sequence of nodes $(\ell_0,\ell_{1},\ldots,\ell_{k})$, where $\ell_0 = i$, $\ell_k=j$, and all consecutive nodes are neighbors. A graph is called connected if for every node pair, there is a walk between them.
The minimum $k$ for which a walk of length $k$ exists between $i$ and $j$ is called the distance between $i$ and $j$ and denoted by $d_G(i,j)$. The eccentricity of node $i$ is the largest distance we can get from node $i$, i.e., $\operatorname{ecc}_G(i) = \max_{j\in V}d_G(i,j)$ and the radius of $G$ is defined as $r(G) = \min_{i\in V} \operatorname{ecc}_G(i)$. The adjacency matrix of a graph $G$ with $n$ nodes is an $n\times n$ matrix denoted by $A_G$, where $\left[A_G\right]_{ij}$ is $1$ if $\{i,j\}\in E$ and $0$ otherwise. 

\noindent \textbf{Matrix Theory.}
For matrices $A$ and $B$, we define $A \sim B$ if for every $i\neq j$, $A_{ij} =0$ if and only if $B_{ij} = 0$. 
Spectral radius of $A\in \R^{n \times n}$ with eigenvalues $\lambda_1,\ldots,\lambda_n$ is  $\rho(A) = \max_{i\in [n]}\{|\lambda_i|\}$.
For $A\in \R^{n \times n}$, its minimal polynomial $\mu_A$ is defined as the unique monic polynomial with minimum degree such that $\mu_A(A)=0$. By Cayley-Hamilton theorem, $\deg(\mu_A)\leq n$ (see \cite{roman2013advanced}). 

Given an infinite sequence of vectors $\{\ab_i\}_{i=0}^{\infty}$, where $\ab_i\in\Rn$ and $\ab_0\neq \mathbf{0}$, we consider a basis pursuit procedure which constructs a finite subset $B$ of the elements of the sequence $\setp{\ab_i}_{i=0}^{\infty}$ such that the elements of $B$ are linearly independent and $\ab_i\in\operatorname{span}(B)$ for all $i$ and in this sense, it is a called a {\it basis} for $\{\ab_i\}_{i=0}^{\infty}$ and is denoted by $B = \mathcal{B}(\{\ab_i\}_{i=0}^{\infty})$. The procedure is as follows: Add $\ab_0$ to $B$. For $i>0$, if $\ab_{i}\notin \operatorname{span}(B)$, then add  $\ab_{i}$ to $B$ and go to the next iteration. We formally prove in the supplementary material that this procedure results in a basis for $\{\ab_i\}_{i=0}^{\infty}$.

\noindent \textbf{Information Theory.}
Consider continuous random variables $X$ and $Y$ which are defined over spaces $\mathcal{X}$ and $\mathcal{Y}$ with probability density functions (PDFs) $f_X$ and $f_{Y}$, respectively. Moreover, let $f_{XY}(x,y)$ denote joint PDF of $X$ and $Y$, then their mutual information $\mathcal{I}(X;Y)$ is defined as
\begin{equation}\label{mut}
	\mathcal{I}(X;Y) = \int_{\mathcal{X}\times \mathcal{Y}}f_{XY}(x,y)\log\left(\frac{f_{XY}(x,y)}{f_X(x)f_Y(y)}\right)dxdy.
\end{equation}

\noindent \textbf{Classical Averaging Consensus.}
Consider a network represented by a simple connected graph $G=(V,E)$ over a set of nodes $V=\setp{1,\ldots,n}$ with $m=|E|$. We assume only adjacent nodes can exchange information with each other. Moreover, suppose $u_i\in \R$ is the initial value of user $i$. The main goal in the consensus problem is to reach a state that all nodes in the network learn the average of initial vectors, i.e., $u^*=\frac{1}{n}\sum_{i=1}^n u_i$. 
Letting $v_i(t)$ be the value of node $i$ at iteration $t$ which is initially set to $v_i(0)=u_i$, one can consider a linear update given by
\begin{equation}\label{update}
	v_i(t+1) = W_{ii}\, v_i(t) + \sum_{j\in N_i}\,W_{ij} v_j(t).
\end{equation} 
Having this, the averaging consensus aims to have $\lim_{t \to \infty } v_i(t) = u^*$ for all $i\in[n]$. {\it The consensus matrix} $W\in\R^{n\times n}$ is formed by considering $W_{ij}$ as its $ij$-th element. Letting $\mathbf{v}(t) = [v_1(t),\ldots,v_n(t)]^{\T}$, one can write $\mathbf{v}(t)=W^t\mathbf{v}(0)$ seeking to have $\lim_{t \to \infty } \mathbf{v}(t) = \mathbf{u}^*$, where $\mathbf{u}^*= u^*\mathds{1}_n$. This is equivalent to
\begin{align}\label{cons conv}
	\lim_{t \to \infty } W^t = \frac{1}{n}\mathds{1}_n\mathds{1}_n^{\T}.
\end{align} 
It is known that \cref{cons conv} holds if and only if (i) $W\mathds{1}_n = W^{\T}\mathds{1}_n = \mathds{1}_n$ and (ii) $\rho( W - {1}/{n}\mathds{1}_n\mathds{1}_n^{\T}) <1$, 
where $\rho(\cdot)$ denotes the spectral radius of a matrix; see \cite{xiao2004fast} for more details. Further, the convergence rate of \eqref{cons conv} can be computed as follows:
\begin{equation}\label{conv rate}
	\sup _{\mathbf{v}(0) \neq \mathbf{u}^*} \lim _{t \rightarrow \infty}\left(\frac{\|\mathbf{v}(t)-\mathbf{u}^*\|_{2}}{\|\mathbf{v}(0)-\mathbf{u}^*\|_{2}}\right)^{\frac 1t} = \rho\left( W - \frac{1}{n}\mathds{1}_n\mathds{1}_n^{\T}\right).
\end{equation} 

\vspace{-2mm}
\section{Problem Setup}\label{sec: setup}
We consider the averaging dynamic described in \eqref{update} under the constraint that the original messages $u_i$ must be kept private from other nodes. Satisfying this requirement rules out the naive initialization of $v_i(0)=u_i$. Therefore, we seek new initializations $v_i(0)$ to fulfill the privacy and convergence requirements while following the consensus dynamic \eqref{update}.
Moreover, we assume that the initial values $u_i$ are independent scalar-valued continuous random variables, but a similar argument is applicable to a discrete case. Moreover, for a vector-valued $u_i$ with independent coordinates, our results can be applied to each coordinate.

\noindent \textbf{Adversary model.}
We assume all nodes are honest but curious, meaning that they follow the protocols and communication principles of the network, but they might be willing to recover the initial value of other nodes in the network. This is equivalent to the situation in which an adversary obtains access to the messages received by a single node. 
Moreover, we assume that nodes (adversaries) do not collude in recovering private messages and the underlying graph and the full consensus matrix are accessible by all nodes (adversaries).

\noindent \textbf{Privacy model.}
Consider arbitrary distinct nodes $i$ and $j$ and suppose node $i$ is curious about node $j$'s private value $u_j$. Node $i$ only has access to the $u_i$ and all messages it sends to or receives from its neighbors over time. Denote the concatenation of all these by $\mathfrak{D}_i$. The privacy fails if node $i$ can deterministically recover $u_j$ from $\mathfrak{D}_i$. Otherwise, it is important to know how close node $i$ can statistically estimate $u_j$ using $\mathfrak{D}_i$.
This can be evaluated by finding how much information $\mathfrak{D}_i$ reveals about $u_j$ which can be formalized by the mutual information between the joint distribution of the elements of $\mathfrak{D}_i$ and $u_j$, denoted by $\mathcal{I}(\mathfrak{D}_i\,;u_j)\in [0,\infty]$. Achieving zero mutual information is impossible since the consensus goal $u^*$ itself reveals some information about $u_j$. Upcoming sections reveal how tuning the model's parameters push $\mathcal{I}(\mathfrak{D}_i\,;u_j)$  toward~$0$ as much as possible.

\vspace{-1mm}
\section{Private Consensus}\label{PC}
\vspace{-1mm}
In this section, we first explain our method and then introduce a privacy leakage measure in \cref{PLM}. As mentioned earlier, the regular averaging consensus method 
reveals the private messages at $t=0$ due to the initialization $v_i(0)=u_i$. We propose a new initialization for $v_i(0)$ that preserves privacy while leaving most convergence properties intact. The main idea behind the proposed method is the following: If we modify the initial vectors $v_i(0)$ such that their sum preserves the sum of the right local values $u_i$, i.e., $\sum_{i=1}^n v_i(0) = \sum_{i=1}^n u_i$, then by following the averaging consensus dynamic all nodes converge to the optimal value $u^*=\frac{1}{n}\sum_{i=1}^n u_i$. Now our goal is to select these values so that they do not reveal the original values $\{u_i\}_{i\in [n]}$ or minimize the amount of information that is revealed after a specific number of communication rounds.

To do so, we proceed as follows. In the first round, which we call the preparation phase, unlike the traditional consensus approach, each node $i$ does not send the same signal to all its neighbors. Instead, it splits its private message $u_i$ into multiple pieces and distributes it among its neighbors. All but one of these pieces are pure noise terms, independent from $u_i$. 
More precisely, in the preparation phase, each node $i$ splits its private value $u_i$ into fragments $\Gamma_{ij}$, where $j\in N_i$. Hence, the number of fragments is equal to the number of neighbors of node $i$. These elements are selected such that all of them except one are pure noise and their sum recovers the original signal, i.e.,
\begin{equation}\label{split}
	u_i = \sum_{j\in N_i} \Gamma_{ij}.
\end{equation}
Formally, to create such signals, node $i$ arbitrarily chooses one of its neighbors, which we denote by $m_i$. For $j\neq m_i$, node $i$ sets $\Gamma_{ij}$ to be a continuous random variable, independent from all other randomness sources. Then, node $i$ sets $\Gamma_{i\, m_i}$ such that \eqref{split} holds, i.e., it sets
\begin{align}
	\Gamma_{i\,m_i} = u_i-\sum_{j\in N_i\setminus \{m_i\}}\Gamma_{ij}.
\end{align}
The following indexing set $S$ distinguishes all pairs $(i,j)$ such that $\Gamma_{ij}$ is a pure noise term. 
\begin{equation}\label{index-set}
	S = \setp{(i,j)\,\, \mid i\in[n],\, j\in N_i\setminus \setp{m_i}}.
\end{equation}
Note that $|S|=2m-n$, where $m=|E|$. Randomness sources in this model consist of $n$ private values and $2m-n$ pure noisy messages. Next, we state the independence requirement.
\begin{assum}\label{as: 1}
	The concatenation of all $2m$ randomness sources in Algorithm~\ref{alg}, i.e., $\{u_i\}_{i\in [n]}\times \setp{\Gamma_{ij}}_{(i,j)\in S}$, where $S$ is defined as \eqref{index-set} consists of independent elements.
\end{assum}  
The choice of distribution for randomness sources is arbitrary among continuous random variables with a valid PDF unless otherwise is specified. However, it is beneficial to fix a notation for their mean and variances as follows.
\begin{definition}\label{def: parameters}
	For $i\in [n]$ and $(k,\ell)\in S$, let $\mu_i$ and $\sigma_i^2$ be the mean and variance of $u_i$ and $\mu_{k\ell}$ and $\sigma_{k\ell}^2$ be the mean and variance of $\Gamma_{k\ell}$, respectively. Moreover, let $\mu_{\max}$ be the maximum of the absolute value of all $\mu_i$ and $\mu_{k\ell}$, and define $\sigma_{\max}^2$ in a similar manner for variances.
\end{definition}
\begin{algorithm}[t!]
	\caption{Provably Private Averaging Consensus (PPAC)}
	{\bf Preparation-Phase:} 
	\begin{algorithmic}[1]
		\FOR { $i\in [n]$}
		\STATE Node $i$ picks $m_i\in N_i$ arbitrarily.
		\STATE For all $j\in N_i\setminus \setp{m_i}$: node~$i$ generates noise $\Gamma_{ij}$ and sends it to node~$j$ 
		\STATE Node $i$ computes $\Gamma_{i\,m_i}=u_i-\sum_{j\in N_i\setminus \{m_i\}}\Gamma_{ij}$ and sends it to node $m_i$.
		\ENDFOR
		\FOR { $i \in [n]$}
		\STATE Node $i$ computes: $v_{i}{(0)} = \sum_{k\in N_i}\Gamma_{ki}$. \label{e}
		\ENDFOR
	\end{algorithmic}
	{\bf Consensus-Phase:} 
	\begin{algorithmic}[1]
		\STATE Nodes follow the consensus dynamic in \eqref{update} with initial values $\{v_{i}{(0)}\}_{i\in [n]}$. 
	\end{algorithmic}
	\label{alg}
\end{algorithm}
Once the preparation phase is done and the messages $\Gamma_{ij}$ are communicated, every node computes its consensus initialization by summing up the messages that it  has received in the preparation phase, that is, $v_{i}{(0)} = \sum_{k\in N_i}\Gamma_{ki}$. After that,  all nodes follow the consensus dynamic described in \eqref{update}. The steps of our proposed method (PPAC) are summarized in Algorithm~\ref{alg}.

Note that PPAC preserves the sum property after the preparation phase, i.e., $\sum_{i=1}^n v_i(0) = \sum_{i=1}^n u_i$. To achieve the exact limit, it is required to have $W\mathds{1}_n = W^{\T}\mathds{1}_n = \mathds{1}_n$. This implies that the sum property is preserved during the consensus phase too, i.e.,  $\sum_{i=1}^n v_i(t) =\sum_{i=1}^n v_i(0)= \sum_{i=1}^n u_i$.

\subsection{Privacy Leakage Measure}\label{PLM}
We first provide a mathematical definition to formalize the notion of privacy leakage that was earlier described in Section~\ref{sec: setup}. Then, we rigorously analyze this notion.

We start with mathematically formalizing the concatenation of all data that a node $i$ collects, which was earlier denoted by $\mathfrak{D}_i$ in \cref{sec: setup}.  
The first data available at node $i$ is its own private value $u_i$. Next, in the preparation phase, it generates $\{\Gamma_{i\ell}\}_{\ell \in N_i\setminus \{m_i\}}$. Note that $\Gamma_{i\,m_i}$ is a redundant data, since it is simply the subtraction of other pure noises from $u_i$. In the preparation phase, node $i$ receives $\{\Gamma_{\ell i}\}_{\ell\in N_i}$ from its neighbors. Moreover, when the consensus procedure starts, at time $t \geq 0$, node $\ell\in N_i$ transmits $v_{\ell}(t)$ to node $i$ so that it can compute its new update for the next iteration. 
Hence, up to time $t$, node $i$ has received values $v_{\ell}(\tau)$ for all $\ell\in N_i$ and $\tau \in \{0,\ldots,t\}$. Hence, the concatenation of node $i$'s observed data up to time $t$ can be written as
\begin{align}\label{eq11}
	\mathfrak{D}_i(t) = \setp{u_i}\times\setp{\Gamma_{i\ell}}_{\ell \in N_i\setminus \setp{m_i}}\times \setp{\Gamma_{\ell i}}_{\ell\in N_i}\times \setp{v_{\ell}(\tau)}_{\ell\in N_i, 0\leq \tau\leq t}.
\end{align}
The notation $\mathfrak{D}_i(\infty)$ is also applicable and represents all the data that node $i$ can gather if the consensus runs forever. To measure the privacy leakage of $u_j$ at node $i$ up to time $t$, we consider the mutual information between $\mathfrak{D}_i(t)$ and $u_j$. This is formalized in the following definition.
\begin{definition}[Privacy Leakage Measure]\label{def: pr}
Considering the definition of $\mathfrak{D}_i(t)$ in \eqref{eq11}, the privacy leakage of node $j$ from the perspective of node $i$ up to time $t$ is defined as 
	\begin{equation}\label{def_Inf}
		\pi_{i}^{(j)}(t) := \mathcal{I}\left(\mathfrak{D}_i(t)\,;\,u_j\right) \in  [0,\infty).
	\end{equation}  
	\vspace{-7mm}
\end{definition}
Note that smaller $\pi_{i}^{(j)}(t)$ means $u_j$ is more private at node $i$ as less information is revealed about it.

\section{Main Results}
Based on the private consensus method in \cref{PC}, our main results can be stated informally in \cref{informal}. Formal statements regarding privacy and convergence are provided in \cref{sec:me pr} and \cref{sec: conv}, respectively.

\begin{theorem}[informal]\label{informal}
	Running \cref{alg} (PPAC) over a network and assuming all randomness sources are independent, we can conclude the following results:
	\begin{enumerate}[(i)]
		\item \label{flow} For each node $i$, there are only finitely many iterations $t_1,\ldots,t_k$ that the information of node $i$ about some private values strictly increases (\cref{theorem:1}). For the latest of such iterations, i.e., $t_k$, we find an upper-bound $t_k\leq n-1$, where $n$ is the number of all nodes (Proposition~\ref{lem:3}), and a lower-bound  $\operatorname{ecc}_G(i)-2 \leq t_k$, where $\operatorname{ecc}_G(i)$ is the eccentricity of node $i$ (Proposition~\ref{lem: 4}). 
		\vspace{-1mm}
		\item If the network contains a specific sub-structure called a generalized leaf, defined in Definition~\ref{gene}, some nodes can deterministically recover the private values of some other nodes in the first consensus iteration, i.e., privacy fails.
		\vspace{-1mm}
		\item If the network does not contain a generalised leaf, no node can fully recover a private value (Lemma~\ref{lemma:case2} and \cref{theorem:2}). In this case, $\pi_{i}^{(j)}$, defined in Definition~\ref{def: pr}, measures the amount of leakage of $u_j$ to node $i$, statistically. We obtain a closed form  $\pi_{i}^{(j)}(t)=1/2\log(1+\sigma_j^2\,\ab^{\T}\Sigma^{-1} \ab)$ for some vector $\ab$ and matrix $\Sigma$ (\cref{lem:2}). In \cref{sec:me pr}, we discuss the construction of $\ab$ and $\Sigma$ through several steps. In this construction, the quantities $\sigma_{k\ell}^2$, the variances of noisy fragments $\Gamma_{k\ell},\,(k,\ell)\in S$, appear as linear terms in the elements of $\Sigma$. This shows that $\pi_{i}^{(j)}$ is a decreasing function in terms of $\sigma_{k\ell}^2$. 
		\vspace{-1mm}
		\item PPAC leaves the convergence limit and rate of the classical averaging consensus intact. Moreover, the convergence time to achieve an $\epsilon$-accurate solution by 
		PPAC scales as $\mathcal{O}(\log(1/\epsilon))$ and it also increases proportional to the logarithm of noise parameters (\cref{conv theorem}). This shows that tuning $\sigma_{k\ell}^2$ adjusts the trade-off between privacy and convergence time.
	\end{enumerate}
\end{theorem}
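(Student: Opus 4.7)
The theorem bundles four loosely coupled claims, but all of them rest on one common observation: after the preparation phase, every message $v_\ell(t)$ that node $i$ ever receives is an affine combination of the $2m$ independent randomness sources $\{u_k\}_{k\in[n]}\cup\{\Gamma_{k\ell}\}_{(k,\ell)\in S}$, and its coefficient vector in $\R^{2m}$ is a fixed polynomial expression in $W$ applied to the coefficient vectors of $\mathbf{v}(0)$. The plan is to track these coefficient vectors, argue separately about their span for (i) and (ii), and then about the Gaussian-style mutual information they induce for (iii) and (iv).

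For (i), the strategy is to apply the basis-pursuit procedure from the Preliminaries to the sequence of coefficient vectors $\ab_\ell(t)$ that enter $\mathfrak{D}_i(t)$ as $t$ grows. A new basis element is added at time $t$ precisely when $\pi_i^{(j)}(t)>\pi_i^{(j)}(t-1)$ for some $j$: the mutual information is a function of the span of the coefficient vectors modulo the noise subspace, so strictly new directions are needed to create strictly new information. Because any basis has at most $2m$ elements, the set $\{t_1,\ldots,t_k\}$ is automatically finite; the sharper bound $t_k\le n-1$ will follow from Cayley--Hamilton applied to $W$, since for $t\ge n$ the vector $W^{t}\mathbf{v}(0)$ lies in $\operatorname{span}\{W^{s}\mathbf{v}(0):0\le s<n\}$ and can produce no new direction. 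The lower bound $\operatorname{ecc}_G(i)-2\le t_k$ will come from a standard information-travel argument: a private value $u_j$ at graph distance $d$ from $i$ first enters the network as some $\Gamma_{j\cdot}$ during the preparation phase, and then propagates at most one hop per consensus iteration, so the eccentric node's contribution cannot reach $i$ before $\operatorname{ecc}_G(i)-1$ rounds. Claim (ii) is an immediate construction: for a generalized leaf I exhibit a subset of the preparation-phase messages and first-round values that forms an invertible linear system for some $u_j$, which reduces to a small rank check.

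Claim (iii) is the main obstacle and will require the bulk of the work. The target formula $\pi_i^{(j)}(t)=\tfrac12\log(1+\sigma_j^2\,\ab^{\T}\Sigma^{-1}\ab)$ is precisely the mutual-information expression for the Gaussian-equivalent channel in which the observation $\mathfrak{D}_i(t)$ decomposes as $u_j\,\ab+\mathbf{z}$, with $\mathbf{z}$ collecting all other contributions. My plan is: (a) isolate the linear dependence on $u_j$ inside $\mathfrak{D}_i(t)$ by reading off its coefficients into a vector $\ab$ from the polynomial representation above; (b) collect the remaining noise contributions into $\Sigma$, whose entries are linear in the variances $\{\sigma_{k\ell}^2\}_{(k,\ell)\in S}$; (c) invoke the standard entropy-power computation to obtain the claimed closed form. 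Two delicate points will demand care. First, since sources are arbitrary continuous distributions, I must either restrict to the Gaussian case or interpret the formula as the mutual information after matching second-order statistics; the paper's phrasing suggests the former. Second, inverting $\Sigma$ requires full rank of the noise coefficient matrix, and I expect this to be equivalent to the absence of a generalized leaf --- proving that equivalence (a graph-theoretic characterization of a rank condition on $W$-polynomials acting on $\{\Gamma_{k\ell}\}_{(k,\ell)\in S}$) will be the genuinely hard step, likely by induction on the network structure using the generalized-leaf definition.

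For (iv), the argument is short: because PPAC only modifies $\mathbf{v}(0)$ while preserving $\sum_i v_i(0)=\sum_i u_i$, the convergence characterization \eqref{cons conv}--\eqref{conv rate} of the classical method applies verbatim, so both the limit and the asymptotic rate $\rho(W-\tfrac1n\mathds{1}_n\mathds{1}_n^{\T})$ are unchanged. To obtain the $\mathcal{O}(\log(1/\epsilon))$ iteration-complexity scaling with its noise-dependent constant, I would use $\|\mathbf{v}(t)-\mathbf{u}^*\|_2\le\rho^{t}\|\mathbf{v}(0)-\mathbf{u}^*\|_2$, bound $\|\mathbf{v}(0)-\mathbf{u}^*\|_2$ in terms of $\mu_{\max}$ and $\sigma_{\max}$ using \cref{def: parameters}, and invert to get $t=\Theta(\log(\sigma_{\max}/\epsilon))$. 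Combined with the dependence of $\Sigma$ on $\sigma_{k\ell}^2$ from (iii), this yields the privacy--time trade-off claimed in the theorem.
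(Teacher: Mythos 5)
Your sketch correctly identifies the linear-algebraic backbone the paper uses: write every observed quantity as an affine combination of the $2m$ randomness sources $\mathbf{g}$ via $\mathfrak{D}_i(t)=\mathfrak{R}_i(t)\mathbf{g}$ with coefficients polynomial in $W$, reduce informativeness to linear independence of coefficient rows, get $t_k\le n-1$ from Cayley--Hamilton, get the lower bound from a distance/hop argument, prove (ii) by an explicit small linear system, and get (iv) by noting that PPAC only alters $\mathbf{v}(0)$ while preserving the sum. This is the same high-level plan as the paper's.

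The genuine gap is exactly the step you yourself flag in (iii). You correctly observe that the closed form requires $\Sigma=\mathfrak{R}_{i,-j}^r\mathcal{S}_{-j}\mathfrak{R}_{i,-j}^{r\T}$ to be invertible, that this reduces to $\mathfrak{R}_{i,-j}^r$ being full row-rank, and that the real content is connecting this rank condition to absence of a generalized leaf. But your proposal to prove that connection ``by induction on the network structure'' does not describe an argument; it names a hope. The paper's \cref{theorem:2} is not inductive at all. It is local and constructive: given that there is no generalized leaf with head $j$ and tail $i$, connectivity lets you exhibit nodes $s\in N_j\setminus\setp{i}$ and $b\in N_s\setminus\setp{i,j}$, and then the column of $\mathfrak{R}_i^r$ indexed by $u_j$ is shown to equal an explicit signed sum of the columns indexed by $u_b$, $\Gamma_{js}$, and/or $\Gamma_{bs}$. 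The verification is a four-way case split on whether $s=m_j$ and whether $s=m_b$, done separately in each of the three constituent blocks $\Delta_i$, $\Lambda_i$, $P_i^r$. Because the dependence is exhibited row-by-row, no induction on graph size or structure is ever invoked, and the argument works uniformly for all $r$. Your sketch also leaves out the explicit rank dichotomy (Cases~\eqref{case1} and~\eqref{case2}) and \cref{lemma:case2}, which is what converts Case~2 into deterministic recoverability and thereby makes the generalized-leaf characterization bite; without that bridge, full-rankness of $\Sigma$ is just a technical hypothesis rather than a consequence of the graph assumption.

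Two smaller imprecisions worth fixing. In (i), asserting that a new basis vector is added ``precisely when $\pi_i^{(j)}(t)>\pi_i^{(j)}(t-1)$ for some $j$'' overstates what is needed and is not literally what is proved; the paper (via the analogue of \cref{lemma:finite}) only needs the one-sided implication that vectors in the span of the current basis add no information, and from that finiteness of the informative times follows from the bound $|\mathcal{B}|\le 2m$. In (iv), the one-step contraction should use the operator norm $\alpha=\norm{W-\frac1n\mathds{1}_n\mathds{1}_n^{\T}}_2$, not the spectral radius $\rho$; $\rho$ controls only the asymptotic rate in \eqref{conv rate}, and the two agree only under extra structure (e.g.\ symmetric $W$), which the theorem does not assume.
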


Next, we discuss the novel techniques used to achieve the above results. The baseline in obtaining our results is transforming the information-/graph-theoretical formulation of the problem into a linear-algebraic form by writing the data collection at each node as a matrix-vector decomposition $\mathfrak{D}_i(t)=\mathfrak{R}_i(t)\mathbf{g}$, where $\mathbf{g}$ is the vector consisting of the  $2m$ randomness sources of the model. As we show later, the possibility of leakage and properties of information flow over the graph can be translated into constraints on $\mathfrak{R}_i(t)$. 

 While the size of $\mathbf{g}$ is fixed, each new iteration adds $\deg(i)$ rows to $\mathfrak{D}_i(t)$ and $\mathfrak{R}_i(t)$. The  decomposition $\mathfrak{D}_i(t)=\mathfrak{R}_i(t)\mathbf{g}$  reveals that, after sufficiently many iterations, all new rows are linear combinations of previous ones. This leads to the fact that finitely many iterations $t_1,\ldots,t_k$ have new information. Moreover, we constructively find $t_1,\ldots,t_k$ by running a basis-pursuit procedure on the rows of $\mathfrak{R}_i(t)$. As a next step, we write each element $\mathfrak{R}_i(t)$ as a function of the elements of $W^t$. Having this, we then use Cayley-Hamilton theorem (see \cite{roman2013advanced}) on matrix powers to show that $t_k\leq n-1$. The aforementioned decomposition also paves the way to translate the waiting time for node $i$ to receive the first message containing non-redundant information about $u_j$ in terms of the distance $d_G(i,m_j)$, which leads to $\operatorname{ecc}_G(i)-2 \leq t_k$.  

Considering the data collection $\mathfrak{D}_i(t)$ on only non-redundant data points corresponding to $t_1,\ldots,t_r$ for $r\leq k$ gives $\mathfrak{D}_i^r$ and accordingly $\mathfrak{R}_i^r$. The next tool to obtain the results of \cref{informal} is the fact that we translated the ability of deterministic recovery of $u_j$ by node $i$ into the rank-deficiency of $\mathfrak{R}_i^r$ when its $j$-th column is removed. Having this, the core technique in obtaining the main results of this paper is transforming the aforementioned rank-deficiency condition into the existence of a specific substructure in the graph called a \emph{generalized leaf}. A generalized leaf with head $j$ and tail $i$ consists of $i$ being connected to all degree $2$ neighbors of $j$ (and possibly $j$ itself) which is illustrated in \cref{leaf}. While it is straightforward to see why in a generalized leaf with head $j$ and tail $i$, node $i$ can fully recover $u_j$, it is far more challenging to prove that this is indeed the only possible scenario for a full recovery of some private value. We prove this fact based on a refined analysis (see \cref{main-proof}) that shows, assuming there are no  generalized leaves, the $j$-th column of $\mathfrak{R}_i^r$ can be written in terms of other columns and thus removing the $j$-th column does not decrease the rank of $\mathfrak{R}_i^r$.  
\begin{figure}[t!]
	\centering
	\includegraphics[width=0.7\textwidth]{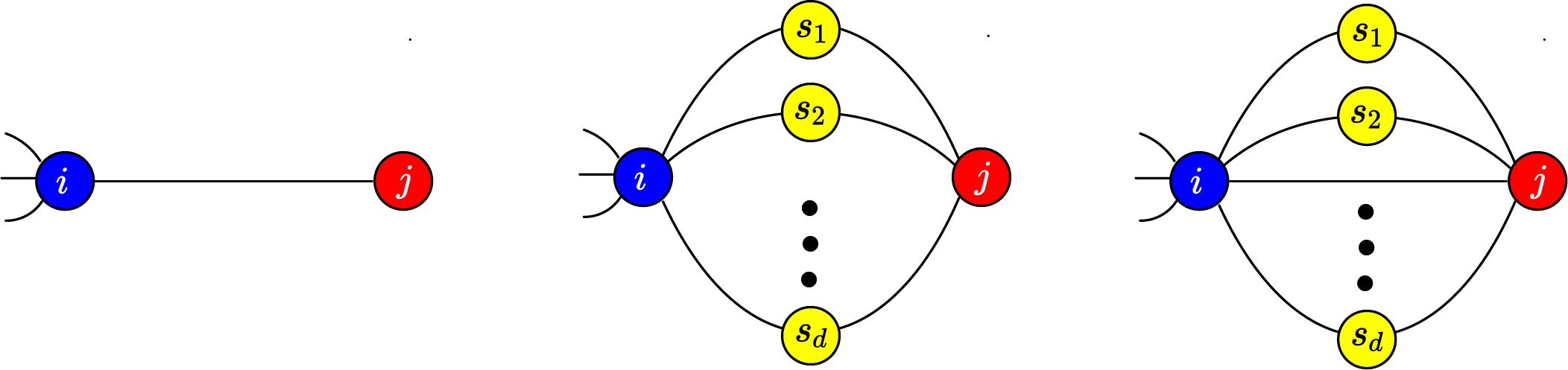}
	\vspace{-2mm}
	\caption{All possible forms of a generalized leaf with head $j$ and tail $i$: (i) When node $i$ is the only neighbor of node $j$; (ii) When node $j$ is not adjacent to node $i$ and only has neighbors of degree two and node $i$ is adjacent to all of them; (iii) The situation of case (ii) while the edge $\{i,j\}$ is also added.}
	\label{leaf}
	\vspace{-4mm}
\end{figure} 

By distinguishing the only case for a full (deterministic) recovery, the next step is to analyse and reduce the partial (stochastic) recovery of private values. Assuming that there are no generalized leaves in the graph, under having Gaussian distribution for all randomness sources in the model, we then show that joint distribution of the elements of $\mathfrak{D}_i^r$ is a non-degenerate Gaussian distribution that has an invertible covariance matrix and compute the mutual information between this joint distribution and $u_j$. This leads to a closed-form expression for $\pi_{i}^{(j)}(t)$ and shows how it can be decreased by the model's tunable parameters. 
 
\section{Privacy Analysis}\label{sec:me pr}
In this section, we first introduce a representation of $\mathfrak{D}_i(t)$ in the form of a matrix-vector decomposition (\cref{md}), and then use it to formally state our results on privacy  (\cref{FS}).

\subsection{Matrix-Vector Decomposition for $\mathfrak{D}_i(t)$}\label{md}
To obtain \cref{informal}, we start by finding a representation of the elements of $\mathfrak{D}_i(t)$ in terms of the randomness sources of our model. To this aim, let $\Gamma\in\R^{n\times n}$ be the matrix whose $ij$-th element is $\Gamma_{ij}$ defined in Algorithm~\ref{alg}. Based on the preparation phase, we have $\mathbf{v}{(0)} = \Gamma^{\T}\mathds{1}_n$ which results in
\begin{equation}\label{a9}
 \mathbf{v}{(t)}= W^{t}\mathbf{v}{(0)} = W^{t}\Gamma^{\T}\mathds{1}_n\quad \Rightarrow \quad	v_i{(t)}=\left[\mathbf{v}(t)\right]_i=  \left[W^{t}\Gamma^{\T}\mathds{1}_n\right]_{i}=\left[W^{t}\right]_{i*}\Gamma^{\T}\mathds{1}_n.
\end{equation}
Consider all the (independent) sources of randomness in our model which are $\setp{u_i}_{i\in[n]}$ and $\setp{\Gamma_{ij}}_{(i,j)\in S}$. Equation~\eqref{a9} implies that $v_i(t)$ can be written as a linear combination of these quantities. The following expression suggests a notation for this linear combination:
\begin{equation}\label{obs1}
	v_{i}{(t)}=\sum_{j \in[n]} \alpha_{j}^i(t)\, u_{j}+\sum_{(k, \ell) \in S} \beta_{k \ell}^{i}(t)\, \Gamma_{k \ell}.
\end{equation}
\eqref{a9} implies that the coefficients in \eqref{obs1} can be obtained in terms of the elements of $\left[W^{t}\right]_{i*}$ as
\begin{equation}\label{coeff}
	\alpha_{j}^i(t) = \left[W^{t}\right]_{i\,m_j}, \quad \beta_{k \ell}^{i}(t) = \left[W^{t}\right]_{i\,\ell}-\left[W^{t}\right]_{i\, m_k}.
\end{equation}
To express \eqref{obs1} in a vector-multiplication form, we define $\mathbf{g}$ to be a $2m$-dimensional (column) vector that is the concatenation of all sources of randomness and $\mathbf{p}_i(t)$ to be the concatenation of the coefficients as follows: (Note that $\mathbf{g},\mathbf{p}_i(t)\in\R^{2m}$.)
\begin{equation}\label{a16}
	\mathbf{g} = \setp{u_{j}}_{j\in[n]}\times\setp{\Gamma_{k\ell}}_{(k,\ell)\in S},\quad \mathbf{p}_i(t) = \setp{\alpha_{j}^i(t)}_{j\in[n]}\times\setp{\beta_{k \ell}^{i}(t)}_{(k,\ell)\in S}.
\end{equation}
Hence, $v_{i}{(t)}$ can be written as 
\begin{equation}\label{vectorize}
	v_{i}{(t)}= \mathbf{p}_i^{\T}(t)\,\mathbf{g}.
\end{equation}
Similar to $v_i(t)$ in \eqref{vectorize}, other elements of $\mathfrak{D}_i(t)$ are also linear combinations of the elements of $\mathbf{g}$ and can be written in vector-multiplication forms in terms of $\mathbf{g}$. To this aim, we can consider matrices $\Delta_i$ and $\Lambda_i$ (where both are of size $\deg_G(i)\times 2m$) such that
\begin{align}
	\label{a2}\setp{u_i}\times\setp{\Gamma_{i \ell}}_{\ell\in N_i \setminus \setp{m_i}} = \Delta_i \, \mathbf{g},\quad\quad  \setp{\Gamma_{\ell i}}_{\ell\in N_i} = \Lambda_i \, \mathbf{g}.
\end{align}
Putting together the representation of the elements of $\mathfrak{D}_i(t)$ in terms of $\mathbf{g}$ in \eqref{vectorize}, and \eqref{a2}, we obtain the matrix $\mathfrak{R}_i(t)$ by vertically concatenating the matrices $\Delta_i$, $\Lambda_i$, and appending vectors $\mathbf{p}_{\ell}(t)$ to the end. More formally, to be compatible with our notation for concatenation of vectors, we write
\begin{equation}\label{a23}
\mathfrak{R}_i(t) =\setp{\left[\Delta_i\right]_{s*}}_{s}\times\setp{\left[\Lambda_i\right]_{s\ast}}_{s}\times\setp{\mathbf{p}_{\ell}^{\top}(\tau)}_{\ell\in N_i, 0\leq \tau\leq t}, \quad \mathfrak{D}_i(t)=\mathfrak{R}_i(t)\mathbf{g}.
\end{equation} 
Recall that $[\cdot]_{s\ast}$ denotes the $s$-th row. Since we set the notation only for the concatenation of vectors, to concatenate matrices $\Delta_i$ and $\Lambda_i$, we first split them into their rows and then concatenate the rows.

\subsection{Formal Statement of Privacy Results}\label{FS}
In this section, we formally state the privacy results using the linear-algebraic representations obtained in \cref{md}. As the first step, we seek to spot the iterations at which node $i$ receives a message that contains new information. To do so, we define $\mathbf{a}_{s}$ as the $s$-th row of $\mathfrak{R}_i(t)$ for some $t\geq s$. We then run a basis-pursuit procedure over $\setp{\mathbf{a}_{s}}_{s=0}^{\infty}$, as described in \cref{pre}. Suppose  $\mathcal{B}(\setp{\mathbf{a}_{s}}_{s=0}^{\infty})$ is the basis obtained by the basis-pursuit scheme. It is straightforward to confirm that the elements $[\Delta_i]_{s*}$ and $[\Lambda_i]_{s*}$, for all $s$, are all linearly independent and lie in this basis. Denote the rest of the elements of the basis by $\mathbf{p}_{j_1}(t_1),\ldots,\mathbf{p}_{j_k}(t_k)$, where $j_1,\ldots,j_k \in N_i$ and $0\leq t_1\leq\ldots\leq t_k$. More formally,
\begin{align}
	\label{a322}\mathcal{B}(\setp{\mathbf{a}_{s}}_{s=0}^{\infty}) = \setp{\left[\Delta_i\right]_{s*}: s} \cup \setp{\left[\Lambda_i\right]_{s*}: s}\cup \setp{\mathbf{p}_{j_1}^{\top}(t_1),\ldots,\mathbf{p}_{j_k}^{\top}(t_k)}.
\end{align}
We want to show that the elements of $\mathcal{B}(\setp{\mathbf{a}_{s}}_{s=0}^{\infty})$ are all the data points at node $i$ that matter and the rest of data points are indeed redundant. To this aim, we first need the following definition.
\begin{definition}\label{def:5}
	Define $((j_1,t_1),\ldots,(j_k,t_k))$  in \eqref{a322} to be the informative sequence of node $i$. Moreover, for a given integer $0\leq t\leq \infty$, define $q(t)$ to be the largest $r\in \{1,\ldots,k\}$ such that $t_r\leq t$.
\end{definition}
Next, let $\mathfrak{D}^r_i$ be the accumulated data that considers the data points corresponding to the informative sequence at node $i$, i.e., the messages coming from neighbors $j_1,\ldots,j_r$ at times $t_1,\ldots,t_r$ for some $r\in\{1,\ldots,k\}$. More formally,  
\begin{align}
	\label{acc data}\mathfrak{D}^r_i = \setp{u_i}\times\setp{\Gamma_{i\ell}}_{\ell \in N_i\setminus \setp{m_i}}\times \setp{\Gamma_{\ell i}}_{\ell\in N_i}\times \left(v_{j_1}(t_1),\ldots, v_{j_r}(t_r)\right).
\end{align}
As a next step, we seek to formally show that the information content of $\mathfrak{D}_i(t)$ equals that of $\mathfrak{D}^{q(t)}_i$, with $q(t)$ defined in Definition~\ref{def:5}.
This claim is proved in the following statement.
\begin{theorem}\label{theorem:1}
	Consider Algorithm~\ref{alg} with $n\geq 2$ under Assumption~\ref{as: 1}. Moreover, consider $q(t)$ from Definition~\ref{def:5}. Suppose $0\leq t\leq \infty$ is given and let $r=q(t)$. Having $\mathfrak{D}_i^r$ defined in \eqref{acc data},  for any $j\neq i$, we have $\mathcal{I}(\mathfrak{D}_i(t)\,;\,u_j) = \mathcal{I}(\mathfrak{D}_i^r\,;\,u_j).$
\end{theorem}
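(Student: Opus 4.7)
The plan is to prove Theorem~\ref{theorem:1} by showing that $\mathfrak{D}_i(t)$ and $\mathfrak{D}_i^r$ are deterministic (measurable) functions of each other, so they generate the same $\sigma$-algebra, which forces $\mathcal{I}(\mathfrak{D}_i(t);u_j)=\mathcal{I}(\mathfrak{D}_i^r;u_j)$.

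First I would use the decomposition $\mathfrak{D}_i(t)=\mathfrak{R}_i(t)\mathbf{g}$ from \cref{md} and write $\mathfrak{D}_i^r=\mathfrak{R}_i^r\mathbf{g}$, where $\mathfrak{R}_i^r$ is the sub-matrix of $\mathfrak{R}_i(t)$ obtained by keeping every row of $\Delta_i$ and $\Lambda_i$ together with the rows $\mathbf{p}_{j_s}^{\top}(t_s)$ for $s=1,\ldots,r$. Because $r=q(t)$ enforces $t_s\leq t$ for every $s\leq r$, each such row is already a row of $\mathfrak{R}_i(t)$ and consequently every coordinate of $\mathfrak{D}_i^r$ is a coordinate of $\mathfrak{D}_i(t)$. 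This gives the easy direction: $\mathfrak{D}_i^r$ is a coordinate projection of $\mathfrak{D}_i(t)$.

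For the reverse direction I need every row of $\mathfrak{R}_i(t)$ to lie in the row span of $\mathfrak{R}_i^r$. Rows coming from $\Delta_i$ and $\Lambda_i$ are present by construction, so the task reduces to rows of the form $\mathbf{p}_\ell^{\top}(\tau)$ with $\ell\in N_i$ and $\tau\leq t$. Here I would invoke the basis-pursuit procedure of \cref{pre} applied to the sequence $\{\mathbf{a}_s\}_{s=0}^{\infty}$ of rows of $\mathfrak{R}_i(\infty)$. A short preliminary check shows the rows of $\Delta_i$ and $\Lambda_i$ are all admitted into the basis: in the concatenation $\mathbf{g}=\{u_j\}_{j\in[n]}\times\{\Gamma_{k\ell}\}_{(k,\ell)\in S}$, rows of $\Delta_i$ have non-zero support only on the $u_i$ and $\Gamma_{i\ell}$ coordinates, whereas each row of $\Lambda_i$ has support only on coordinates indexed by a neighbor $\ell\neq i$, so the two blocks have disjoint supports and are individually linearly independent, yielding exactly the form \eqref{a322}. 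Now either $\mathbf{p}_\ell^{\top}(\tau)$ is in the basis — in which case it is among $\mathbf{p}_{j_1}^{\top}(t_1),\ldots,\mathbf{p}_{j_r}^{\top}(t_r)$, since its timestamp is $\leq t$ and the informative timestamps are monotone — or it is redundant, and by definition of the basis-pursuit procedure it is a linear combination of basis vectors \emph{inserted strictly before it}. Those earlier basis vectors carry timestamp $\leq\tau\leq t$ and therefore belong to the rows of $\mathfrak{R}_i^r$. In either case $\mathbf{p}_\ell^{\top}(\tau)\in\operatorname{span}$ of rows of $\mathfrak{R}_i^r$, so there exists a deterministic linear map sending $\mathfrak{D}_i^r$ to $\mathfrak{D}_i(t)$.

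Combining the two directions, $\mathfrak{D}_i(t)$ and $\mathfrak{D}_i^r$ are deterministic functions of each other; since mutual information is invariant under such transformations of one argument, $\mathcal{I}(\mathfrak{D}_i(t);u_j)=\mathcal{I}(\mathfrak{D}_i^r;u_j)$ follows. The main subtlety I anticipate is the bookkeeping in the basis-pursuit step — verifying that when a redundant $\mathbf{p}_\ell^{\top}(\tau)$ with $\tau\leq t$ is expanded in earlier basis vectors, every vector used really has timestamp $\leq\tau$ (so that it sits inside $\mathfrak{R}_i^r$ rather than in a strictly later slice of $\mathfrak{R}_i(t)$). This relies on the fact that the basis-pursuit traverses $\{\mathbf{p}_\ell^{\top}(\tau)\}$ grouped by $\tau$ in increasing order, which is what underlies the monotonicity $t_1\leq t_2\leq\cdots$ and hence Definition~\ref{def:5}.
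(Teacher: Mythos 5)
Your proof is correct and follows essentially the same route as the paper: both use the matrix-vector decomposition $\mathfrak{D}_i(t)=\mathfrak{R}_i(t)\mathbf{g}$, identify the non-redundant rows via the basis-pursuit procedure ordered by increasing timestamp, and conclude via the invariance of mutual information under deterministic (here linear) transformations of one argument. The paper packages the final step as a standalone spanning-subset lemma (its Lemma on $\mathcal{I}(\ab_1^{\T}\Y,\ldots,\ab_r^{\T}\Y;X)=\mathcal{I}(\ab_1^{\T}\Y,\ldots,\ab_s^{\T}\Y;X)$), whereas you argue the two mutual determinations directly, but the substance is identical.
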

Theorem~\ref{theorem:1} implies that $\mathcal{I}\left(\mathfrak{D}_i(\infty)\,;\,u_j\right) = \mathcal{I}\left(\mathfrak{D}_i^k\,;\,u_j\right)$, suggesting that all but finitely many messages are redundant. The following proposition upperbounds the largest informative time instance $t_k$.
\begin{proposition}\label{lem:3}
	Consider Algorithm~\ref{alg} under Assumption~\ref{as: 1} and let $((j_1,t_1),\ldots,(j_k,t_k))$ be the informative sequence at node $i$. If $\mu_W$ is the minimal polynomial of matrix $W$, then
	\begin{equation}
		t_k \leq \deg(\mu_W)-1\leq n-1.
	\end{equation}
\end{proposition}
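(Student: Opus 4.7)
The plan is to show that for every $\tau\geq d:=\deg(\mu_W)$ and every $\ell\in N_i$, the vector $\mathbf{p}_{\ell}(\tau)$ already lies in the span of vectors that the basis-pursuit procedure has considered at earlier iterations, so no new vector can be appended to the basis at time $\tau\geq d$. This forces $t_k\leq d-1$, and the bound $d\leq n$ is immediate from Cayley--Hamilton.

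Concretely, I would proceed as follows. First, since $\mu_W(W)=0$ and $\mu_W$ has degree $d$, the matrix $W^d$ is a linear combination of $I,W,\ldots,W^{d-1}$; an easy induction (multiplying the identity $\mu_W(W)=0$ on the left by $W^s$) extends this to show that $W^\tau$ is a linear combination of $I,W,\ldots,W^{d-1}$ for every $\tau\geq d$. Taking the $\ell$-th row yields that $[W^{\tau}]_{\ell*}$ is a linear combination of $[W^{0}]_{\ell*},\ldots,[W^{d-1}]_{\ell*}$. Second, by the formulas in \eqref{coeff}, each coordinate of $\mathbf{p}_{\ell}(\tau)$ is a fixed linear function of the entries of $[W^{\tau}]_{\ell*}$ (either a single entry or a difference of two entries), with the same linear function used for every $\tau$. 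Consequently, $\mathbf{p}_{\ell}(\tau)$ inherits the same linear dependence and can be written as a linear combination of $\mathbf{p}_{\ell}(0),\mathbf{p}_{\ell}(1),\ldots,\mathbf{p}_{\ell}(d-1)$ whenever $\tau\geq d$.

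Third, I would invoke the ordering of the basis-pursuit on the rows of $\mathfrak{R}_i(\infty)$ described in \eqref{a23}: after the finitely many rows coming from $\Delta_i$ and $\Lambda_i$, the rows $\mathbf{p}_{\ell}^{\top}(\tau)$ are processed in increasing order of $\tau$. Therefore, by the time the procedure reaches $\mathbf{p}_{\ell}(\tau)$ with $\tau\geq d$, the vectors $\mathbf{p}_{\ell}(0),\ldots,\mathbf{p}_{\ell}(d-1)$ have already been examined and each of them is either in the current basis or in the span of the current basis. The linear combination from the previous step then shows that $\mathbf{p}_{\ell}(\tau)$ lies in $\operatorname{span}(\mathcal{B})$ at the moment it is inspected, so the procedure does not add it. In particular, no element of the informative sequence can have time index $\tau\geq d$, yielding $t_k\leq d-1=\deg(\mu_W)-1$. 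Combining this with the Cayley--Hamilton bound $\deg(\mu_W)\leq n$ gives the desired $t_k\leq n-1$.

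There is no real obstacle here beyond keeping the bookkeeping straight; the substantive content is purely the observation that the linear recursion satisfied by $W^\tau$ transfers to $\mathbf{p}_{\ell}(\tau)$ through the linear reading map in \eqref{coeff}, together with the fact that the basis-pursuit visits times in increasing order. The only mild subtlety worth spelling out in the write-up is that the linear coefficients expressing $\mathbf{p}_{\ell}(\tau)$ in terms of $\mathbf{p}_{\ell}(0),\ldots,\mathbf{p}_{\ell}(d-1)$ depend on $\tau$ and on the polynomial coefficients of $\mu_W$ but not on $\ell$, which is why the argument applies uniformly to each neighbor of~$i$.
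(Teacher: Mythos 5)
Your proof is correct and follows essentially the same route as the paper's: both deduce from $\mu_W(W)=0$ that $W^{\tau}$ is a linear combination of $W^{0},\ldots,W^{d-1}$, transfer this linear recursion to $\mathbf{p}_{\ell}(\tau)$ via the coordinatewise-linear reading map in \eqref{coeff}, and conclude that the basis-pursuit can add no new vector beyond time $d-1$, with Cayley--Hamilton supplying $d\leq n$. The only cosmetic difference is that you spell out the increasing-in-$\tau$ ordering of the basis pursuit explicitly, whereas the paper states the resulting span inclusion directly.
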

Proposition~\ref{lem:3} states that no more information about private values will be exchanged after $n$ rounds of consensus and from then all the messages throughout the network are redundant. Hence, up to $t=n$, it can be determined whether or not an adversary is able to recover a private value. It is also interesting to know how many consensus rounds a (curious) node $i$ must wait to hear about $u_j$ for the first time.
The following proposition is dedicated to this result.
\begin{proposition}\label{lem: 4}
	Consider Algorithm~\ref{alg} under Assumption~\ref{as: 1} and let $((j_1,t_1),\ldots,(j_k,t_k))$ be the informative sequence at node $i$ from Definition~\ref{def:5}. Further, suppose the underlying graph is connected and for the consensus matrix, $W$, we have $W \geq 0$, and $W\sim A_G$. Let $t=\tau_j^{(i)}$ be the minimum $t$ such that $\alpha_j^{i}(t)$ defined in \eqref{coeff} is not zero. Then $\tau_j^{(i)}=d_G(i,m_j)$ and $r(G)-2 \leq \operatorname{ecc}_G(i)-2 \leq t_k$.
\end{proposition}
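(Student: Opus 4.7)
The inequality $r(G)-2\leq \operatorname{ecc}_G(i)-2$ is immediate from $r(G)=\min_{i\in V}\operatorname{ecc}_G(i)$, so I focus on (a) $\tau_j^{(i)}=d_G(i,m_j)$ and (b) $\operatorname{ecc}_G(i)-2\leq t_k$. For (a), I would use the standard walk expansion $[W^t]_{i,m_j}=\sum_{i_1,\ldots,i_{t-1}}W_{i,i_1}W_{i_1,i_2}\cdots W_{i_{t-1},m_j}$. Since $W\geq 0$ and $W\sim A_G$, any summand is strictly positive iff every consecutive pair $(i_s,i_{s+1})$ is either a self-loop with $W_{i_s,i_s}>0$ or an edge of $G$; collapsing self-loops forces such a nonzero summand to correspond to a walk in $G$ of length at most $t$ from $i$ to $m_j$. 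Hence $[W^t]_{i,m_j}=0$ for $t<d_G(i,m_j)$, while at $t=d_G(i,m_j)$ any shortest $i$-to-$m_j$ path contributes a strictly positive term with nothing to cancel it, giving $\tau_j^{(i)}=d_G(i,m_j)$.

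For (b), set $e:=\operatorname{ecc}_G(i)$, pick $j^{*}$ with $d_G(i,j^{*})=e$, and assume $e\geq 2$ (otherwise $e-2\leq 0\leq t_k$ trivially), so $j^{*}\notin N_i$ and $d_G(i,m_{j^{*}})\geq e-1\geq 1$ since $m_{j^{*}}\in N_{j^{*}}$. A shortest $i$-to-$m_{j^{*}}$ walk passes through some $\ell^{*}\in N_i$ with $d_G(\ell^{*},m_{j^{*}})=d_G(i,m_{j^{*}})-1=:t^{*}$, while the triangle inequality yields $d_G(\ell,m_{j^{*}})\geq t^{*}$ for every $\ell\in N_i$. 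Applying (a) with each $\ell\in N_i$ in place of $i$ gives $\alpha_{j^{*}}^{\ell}(t)=[W^t]_{\ell,m_{j^{*}}}=0$ for all $t<t^{*}$ and $\alpha_{j^{*}}^{\ell^{*}}(t^{*})\neq 0$, and clearly $t^{*}\geq e-2$.

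The final and subtler step is to certify that $\mathbf{p}_{\ell^{*}}^{\top}(t^{*})$ is actually picked up by the basis-pursuit procedure in \eqref{a322}; once this is shown, $t^{*}$ belongs to the informative sequence and the ordering $t_1\leq\cdots\leq t_k$ forces $t_k\geq t^{*}\geq e-2$. I would verify this by inspecting the $u_{j^{*}}$-coordinate of every basis row preceding $\mathbf{p}_{\ell^{*}}^{\top}(t^{*})$: rows of $\Delta_i$ only carry $u_i\neq u_{j^{*}}$; rows of $\Lambda_i$ carry $u_\ell$ only for those $\ell\in N_i$ with $m_\ell=i$, and no such $\ell$ equals $j^{*}$ since $j^{*}\notin N_i$; and every earlier $\mathbf{p}_{\ell}^{\top}(\tau)$ with $\ell\in N_i$, $\tau<t^{*}$ has $u_{j^{*}}$-coordinate $\alpha_{j^{*}}^{\ell}(\tau)=0$ by the preceding paragraph. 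Hence $\mathbf{p}_{\ell^{*}}^{\top}(t^{*})$ cannot be spanned by earlier basis rows and is appended to the basis. The main obstacle is precisely this support bookkeeping together with the use of $e\geq 2$ to rule out $j^{*}\in N_i$; the walk-counting in (a) is standard.
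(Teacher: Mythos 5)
Your proof is correct and follows the paper's two-part route: part (a) via the nonnegative walk expansion of $[W^t]_{i\,m_j}$ (your accounting for possible nonzero diagonal entries of $W$ is in fact a touch more careful than the paper's walk-sum formula, which is written only over walks in $G$), and part (b) by choosing an eccentric node $j^*$, taking the neighbor $\ell^*$ of $i$ closest to $m_{j^*}$—the paper's $i_0=\arg\min_{s\in N_i}d_G(s,m_{j_0})$—and tracking the $u_{j^*}$-coordinate through the basis pursuit to force $t_k\ge t^*=d_G(\ell^*,m_{j^*})\ge \operatorname{ecc}_G(i)-2$. The only small imprecision, shared verbatim with the paper, is the claim that $\mathbf p_{\ell^*}(t^*)$ itself enters the basis: if several $\ell\in N_i$ have $\alpha^{\ell}_{j^*}(t^*)\neq 0$, a different $\mathbf p_\ell(t^*)$ may be appended first, but that still yields some $t_r=t^*$ and hence $t_k\ge t^*$, so the conclusion is unaffected.
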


Analogous to \eqref{a2}, we can define the data collection on non-redundant data based on the concept of the informative sequence at one node and its representation in terms of $\mathbf{g}$ in a matrix-vector multiplication format. To this aim, first consider the informative sequence $((j_1,t_1),\ldots,(j_k,t_k))$ at node $i$ and for $r\in\{1,\ldots,k\}$, define
\begin{equation}\label{ae}
\mathfrak{R}_i^r =\setp{\left[\Delta_i\right]_{s*}}_{s}\times\setp{\left[\Lambda_i\right]_{s\ast}}_{s}\times\left\{\mathbf{p}_{j_{\ell}}^{\top}(t_{\ell})\right\}_{1\leq \ell\leq r},\quad \mathfrak{D}_i^r=\mathfrak{R}_i^r\mathbf{g},
\end{equation} 
and note that $\mathfrak{R}_i^r$ is a full row-rank matrix. \cref{ae} represents all the informative data points at node $i$ in terms of the model's randomness sources, i.e., in terms of the elements of $\mathbf{g}$. In this representation, each column of $\mathfrak{R}_i^r$ corresponds to the coefficients of one randomness source for all data points.  As we show later, the difference between the rank of $\mathfrak{R}_i^r$  before and after removing the column corresponding to a private value determines the privacy-preserving properties of the network. Towards this argument, let $\mathfrak{R}_{i,-j}^r$ be the matrix obtained by removing the column corresponding to $u_j$. We use $-j$ to emphasize that this column is removed. Two cases are possible in this situation: 
\begin{align}
	\label{case1}&\text{Case 1:}\,\,\operatorname{rank}\left(\mathfrak{R}_{i,-j}^r\right)  = \operatorname{rank}\left(\mathfrak{R}_{i}^r\right),\\
	\label{case2}& \text{Case 2:}\,\,\operatorname{rank}\left(\mathfrak{R}_{i,-j}^r\right)  = \operatorname{rank}\left(\mathfrak{R}_{i}^r\right)-1.
\end{align}
We will show that under Case $2$, node $i$ can deterministically recover $u_j$ while under Case $1$, this is not possible. We start by presenting the following lemma stating that under Case $2$, the privacy fails.
\begin{lemma}\label{lemma:case2}
	Consider Algorithm~\ref{alg} with $n\geq 2$ under Assumption~\ref{as: 1}. Moreover, consider $r=q(t)$ from Definition~\ref{def:5} and suppose \eqref{case2} holds for some $j\neq i$. Then node $i$ can fully recover $u_j$ using $\mathfrak{D}^r_i$.
\end{lemma}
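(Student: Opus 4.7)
The plan is to exploit the linear representation $\mathfrak{D}_i^r=\mathfrak{R}_i^r\mathbf{g}$ together with the rank gap imposed by Case~2 to explicitly build a linear functional that isolates $u_j$ from all other randomness sources. Write $\mathbf{c}_j$ for the column of $\mathfrak{R}_i^r$ that corresponds to the coordinate $u_j$ of $\mathbf{g}$, and write $R$ for the number of rows of $\mathfrak{R}_i^r$. As noted just after \eqref{ae}, $\mathfrak{R}_i^r$ is full row-rank, so $\operatorname{rank}(\mathfrak{R}_i^r)=R$ and its left null space is $\{\mathbf{0}\}$. The Case~2 hypothesis \eqref{case2} then gives $\operatorname{rank}(\mathfrak{R}_{i,-j}^r)=R-1$, so that the left null space of $\mathfrak{R}_{i,-j}^r$ has dimension exactly one.

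The first step is to pick a nonzero $\mathbf{w}\in\R^R$ in that one-dimensional left null space, i.e., $\mathbf{w}^{\T}\mathfrak{R}_{i,-j}^r=\mathbf{0}$. Because $\mathbf{w}\neq\mathbf{0}$ and the left null space of the full matrix $\mathfrak{R}_i^r$ is trivial, $\mathbf{w}^{\T}\mathfrak{R}_i^r\neq\mathbf{0}$; but by construction $\mathbf{w}^{\T}$ kills every column of $\mathfrak{R}_i^r$ other than $\mathbf{c}_j$, so the single nonzero entry of $\mathbf{w}^{\T}\mathfrak{R}_i^r$ must sit in the $j$-th coordinate, yielding $\mathbf{w}^{\T}\mathbf{c}_j\neq 0$. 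Applying $\mathbf{w}^{\T}$ to both sides of $\mathfrak{D}_i^r=\mathfrak{R}_i^r\mathbf{g}$ and using the decomposition of $\mathbf{g}$ in \eqref{a16} produces
\begin{equation*}
  \mathbf{w}^{\T}\mathfrak{D}_i^r \;=\; \bigl(\mathbf{w}^{\T}\mathbf{c}_j\bigr)\,u_j,
\end{equation*}
so that node~$i$ recovers $u_j=(\mathbf{w}^{\T}\mathfrak{D}_i^r)/(\mathbf{w}^{\T}\mathbf{c}_j)$ deterministically.

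The last step is to verify that node~$i$ can actually implement this formula with only the information it is assumed to possess. By the adversary model in Section~\ref{sec: setup}, every node knows the graph $G$ and the full consensus matrix $W$, so it knows each $\mathbf{p}_\ell^{\T}(\tau)$ via \eqref{coeff}; it also trivially knows $\Delta_i$ and $\Lambda_i$, since these just index its own private value and the incoming and outgoing fragments at node~$i$. Thus node~$i$ can assemble $\mathfrak{R}_i^r$ from \eqref{ae}, identify the informative sequence and the value of $r=q(t)$ from the basis-pursuit description in Section~\ref{FS}, compute a generator $\mathbf{w}$ of the one-dimensional left null space of $\mathfrak{R}_{i,-j}^r$ by linear algebra, and then evaluate the quotient on the data $\mathfrak{D}_i^r$ that it has received.

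The only real subtlety, and the step that deserves careful writing, is the passage from the rank identity \eqref{case2} to the existence of a $\mathbf{w}$ that simultaneously annihilates \emph{all} columns other than $\mathbf{c}_j$ while having $\mathbf{w}^{\T}\mathbf{c}_j\neq 0$; once one notices that the full row-rank of $\mathfrak{R}_i^r$ collapses the left null space to $\{\mathbf{0}\}$, both conditions are forced simultaneously, and the remaining arithmetic is routine. No probabilistic argument is needed here: under Case~2 the recovery is exact for every realization of $\mathbf{g}$ consistent with Assumption~\ref{as: 1}.
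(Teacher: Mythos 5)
Your proof is correct and takes essentially the same route as the paper: both arguments exhibit a vector $\mathbf{w}$ in the left null space of $\mathfrak{R}_{i,-j}^r$ but not of $\mathfrak{R}_i^r$, then apply $\mathbf{w}^{\T}$ to the relation $\mathfrak{D}_i^r=\mathfrak{R}_i^r\mathbf{g}$ to isolate $u_j$. The paper builds $\mathbf{w}$ concretely by selecting a redundant row $s_0$ and its expansion coefficients $\beta_s$, whereas you invoke the full-row-rank property of $\mathfrak{R}_i^r$ to conclude that the left null space of $\mathfrak{R}_{i,-j}^r$ is exactly one-dimensional and pick a generator; the two constructions coincide up to scaling, so your version is a slightly cleaner packaging of the same idea.
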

Lemma~\ref{lemma:case2} shows that under Case $2$, node $i$ can deterministically recover $u_j$, meaning that $u_j$ is a deterministic function of $\mathfrak{D}_i(t)$. Since we assumed all the random variables in our model are continuous random variables, this leads to $\pi_{i}^{(j)}(t)=\infty$. Next, in \cref{lem:2}, we obtain a closed-form expression for $\pi_{i}^{(j)}(t)$ under Case $1$, which implies that $\pi_{i}^{(j)}(t)$ is finite under Case $1$. This implies that $u_j$ is not a deterministic function of $\mathfrak{D}_i(t)$ and thus node $i$ cannot fully recover $u_j$ using $\mathfrak{D}_i(t)$. Thus, Case $2$ holds if and only if node $i$ can deterministically  recover $u_j$. To analyse Case $1$, we need to define $\mathcal{S}_{-j}$ in the following, which is the covariance matrix of $\mathbf{g}$ when the element $u_j$ is removed:
\begin{equation}\label{covariance}
	\mathcal{S}_{-j} =\operatorname{diag}\left( \setp{\sigma_i^2}_{i\in[n]\setminus \setp{j}}\times\setp{\sigma_{k\ell}^2}_{(k,\ell)\in S}\right).
\end{equation}
Note that the parameters in \eqref{covariance} are previously defined in Definition~\ref{def: parameters}.  The following theorem computes $\pi_{i}^{(j)}(t)$ under Case $1$, for a Gaussian model.
\begin{theorem}\label{lem:2}
	Consider Algorithm~\ref{alg} with $n\geq 2$ under Assumption~\ref{as: 1}. Suppose $0\leq t\leq \infty$ is given and let $r=q(t)$ as defined in Definition~\ref{def:5}. Moreover, suppose \cref{case1} holds and assume that all random variables $u_s,\, s\in[n]$, and $\Gamma_{k\ell},\,(k,\ell)\in S$, have Gaussian distributions. Then, for $j\neq i$ 
	\begin{equation}\label{ad}
		\pi_{i}^{(j)}(t)=\frac{1}{2} \log\left(1+\sigma_j^2\,\ab^{\T}\Sigma^{-1} \ab\right), 
	\end{equation}
	where $
	\ab = \left[\mathfrak{R}_{i}^r\right]_{*j}$, and $\Sigma = \mathfrak{R}_{i,-j}^r\mathcal{S}_{-j}\mathfrak{R}_{i,-j}^{r\T}$. 
\end{theorem}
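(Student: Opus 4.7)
The plan is to reduce the computation to a standard Gaussian mutual information calculation using the matrix--vector decomposition of $\mathfrak{D}_i(t)$ developed in Section~\ref{md}. First I would invoke \cref{theorem:1} to replace $\mathcal{I}(\mathfrak{D}_i(t)\,;\,u_j)$ by $\mathcal{I}(\mathfrak{D}_i^r\,;\,u_j)$, so that the entire analysis can be done with the full row-rank matrix $\mathfrak{R}_i^r$ from \eqref{ae}. Next, I would split the randomness vector $\mathbf{g}$ as $\mathbf{g} = u_j\,\mathbf{e}_j + \mathbf{g}_{-j}$, where $\mathbf{e}_j$ is the standard basis vector picking out the coordinate of $u_j$ and $\mathbf{g}_{-j}$ is $\mathbf{g}$ with that coordinate zeroed. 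Since $\mathbf{a} = [\mathfrak{R}_i^r]_{*j}$, this yields the clean decomposition
\begin{equation*}
\mathfrak{D}_i^r \;=\; \mathfrak{R}_i^r\,\mathbf{g} \;=\; \mathbf{a}\,u_j \;+\; \mathfrak{R}_{i,-j}^r\,\mathbf{g}_{-j},
\end{equation*}
and, crucially, Assumption~\ref{as: 1} ensures that $u_j$ and $\mathbf{g}_{-j}$ are independent.

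Now I would invoke the Gaussian assumption to conclude that $\mathfrak{D}_i^r$ is jointly Gaussian with $u_j$. The unconditional covariance is $\mathfrak{R}_i^r\,\mathcal{S}\,\mathfrak{R}_i^{r\top} = \sigma_j^2\,\mathbf{a}\mathbf{a}^{\top} + \Sigma$, while conditioning on $u_j$ shifts the mean by $\mathbf{a}\,u_j$ and leaves the covariance equal to $\Sigma$. To guarantee that both differential entropies are well defined, I need $\Sigma$ to be strictly positive definite. This is exactly where Case~1 is used: the hypothesis $\operatorname{rank}(\mathfrak{R}_{i,-j}^r) = \operatorname{rank}(\mathfrak{R}_i^r)$, combined with the fact that $\mathfrak{R}_i^r$ has full row rank, forces $\mathfrak{R}_{i,-j}^r$ to have full row rank as well, and since $\mathcal{S}_{-j}$ is a positive diagonal matrix, $\Sigma$ is positive definite and therefore invertible.

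With both covariances positive definite, the standard formula $\mathcal{I}(u_j;\mathfrak{D}_i^r) = h(\mathfrak{D}_i^r) - h(\mathfrak{D}_i^r \mid u_j)$ gives
\begin{equation*}
\pi_{i}^{(j)}(t) \;=\; \tfrac{1}{2}\log\det(\Sigma + \sigma_j^2\,\mathbf{a}\mathbf{a}^{\top}) \;-\; \tfrac{1}{2}\log\det(\Sigma).
\end{equation*}
Applying the matrix determinant lemma, $\det(\Sigma + \sigma_j^2 \mathbf{a}\mathbf{a}^{\top}) = \det(\Sigma)\,(1 + \sigma_j^2\,\mathbf{a}^{\top}\Sigma^{-1}\mathbf{a})$, collapses this to the claimed formula \eqref{ad}.

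The main obstacle, in my view, is not the entropy calculation itself but the bookkeeping needed to verify that the columns, variances, and independence structure line up so that $\operatorname{Cov}(\mathfrak{D}_i^r) = \sigma_j^2\mathbf{a}\mathbf{a}^{\top} + \Sigma$ and so that the ``redundant'' entry $\Gamma_{i\,m_i}$ (which implicitly appears inside $\mathfrak{R}_i^r$ whenever $j = i$ is substituted elsewhere) does not spoil independence of $u_j$ from $\mathbf{g}_{-j}$. Once that decomposition is set up properly and Case~1 is translated into invertibility of $\Sigma$, the remainder is routine Gaussian information theory.
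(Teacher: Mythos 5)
Your proposal is correct and arrives at \eqref{ad} via a genuinely different, and arguably leaner, route than the paper's. The paper does not use the entropy-difference formula $\mathcal{I}(X;\mathbf{Z}) = h(\mathbf{Z}) - h(\mathbf{Z}\mid X)$ together with the Gaussian $\tfrac12\log\det$ entropy expression and the matrix determinant lemma. Instead, it states and proves a standalone technical lemma (Lemma~\ref{lemma:1}) that computes $\mathcal{I}(\mathbf{Y}+X\mathbf{a};X)$ directly from the definition \eqref{mut}: it performs the change of variables $(X,\mathbf{Z}) = J(X,\mathbf{Y})$, expands the log-ratio of Gaussian densities into three terms $\mathcal{I}_1,\mathcal{I}_2,\mathcal{I}_3$, evaluates each via a trace identity (Lemma~\ref{integ}), and then simplifies $\tfrac12\log(|\Sigma_{\mathbf{Z}}|/|\Sigma_{\mathbf{Y}}|)$ with $\Sigma_{\mathbf{Z}} = \Sigma_{\mathbf{Y}} + \sigma_X^2\mathbf{a}\mathbf{a}^\top$ to the same closed form. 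The trade-off is that the paper's calculation is from first principles and avoids quoting the differential entropy formula for a multivariate Gaussian, at the cost of a page of integral bookkeeping; your proof imports that formula and the determinant lemma and collapses the computation to a few lines. Both routes need (and you correctly identify) the same two structural facts: independence of $u_j$ from $\mathbf{g}_{-j}$, supplied by Assumption~\ref{as: 1}, and positive definiteness of $\Sigma$, obtained from Case~1 because full row rank of $\mathfrak{R}_{i,-j}^r$ together with $\mathcal{S}_{-j}\succ 0$ makes $\Sigma = \mathfrak{R}_{i,-j}^r\mathcal{S}_{-j}\mathfrak{R}_{i,-j}^{r\top}$ invertible; this is exactly how the paper justifies it as well, via \eqref{a31}. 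The only minor blemish in your write-up is a notational slip: you define $\mathbf{g}_{-j}$ as $\mathbf{g}$ with the $j$-th coordinate \emph{zeroed} (still $2m$-dimensional), but then multiply it by $\mathfrak{R}_{i,-j}^r$, which has only $2m-1$ columns; the paper removes the coordinate rather than zeroing it, which is what you actually want. Also, the worry about $\Gamma_{i\,m_i}$ is a non-issue: that quantity never enters $\mathbf{g}$, which by construction \eqref{a16} lists only the $n$ private values and the $2m-n$ pure noises, so Assumption~\ref{as: 1} gives independence directly; no special argument is needed.
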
	
Note that for $(k,\ell)\in S$, the quantity $\sigma_{k\ell}^2$, i.e., the variance of the generated noise term $\Gamma_{k\ell}$ lies in the diagonal of $\mathcal{S}_{-j}$. As a result, $\pi_{i}^{(j)}(t)$ is a decreasing function of $\sigma_{k\ell}^2$. Since we are free to choose $\sigma_{k\ell}^2$, we set them large enough to decrease $\pi_{i}^{(j)}(t)$ close to its minimum. However, the cost of increasing $\sigma_{k\ell}^2$ appears in the convergence time. This result is formalised in \cref{conv theorem} and the effect of increasing $\sigma_{k\ell}^2$ on $\pi_{i}^{(j)}(t)$ is also evaluated in the simulations (see \cref{sim}).

As the next step, to avoid Case $2$, we investigate under what conditions Case $1$ and Case $2$ hold. To this aim, the notion of a generalized leaf is introduced.
\begin{definition}[Generalized Leaf]\label{gene}
	Consider graph $G=(V,E)$ with distinct vertices $i,j\in V$ where $\setp{i,j}$ may or may not be in $E$. Then, $G$ has a generalized leaf with head $j$ and tail $i$ if for every $s\in N_j\setminus \setp{i}$, we have $\deg_G(s) = 2$ and $s\in N_i$.
\end{definition}
 A generalized leaf is a generalization of a leaf (i.e., a node with degree $1$). Figure~\ref{leaf} illustrates all possible forms of a generalized leaf with head $j$ and tail $i$. For a generalized leaf of head $j$ and tail $i$, node $i$ can recover $u_j$ in the first iteration of the consensus. To see this, suppose $s\in N_j$. Node $s$ receives $\Gamma_{is}$ and $\Gamma_{js}$ in the preparation phase and initializes its consensus by $\Gamma_{is}+\Gamma_{js}$. Node $i$ knows $\Gamma_{is}$ and has also received $\Gamma_{is}+\Gamma_{js}$ from node $s$ in the first iteration of the consensus, thus it obtains $\Gamma_{js}$. Since node $i$  is adjacent to all neighbors of $j$, node $i$ recovers $\Gamma_{js}$ for all $s\in N_j$. Summing these reveals $u_j$ to node $i$. Next theorem guarantees that this is the only troubling case.
\begin{theorem}\label{theorem:2}
	Consider Algorithm~\ref{alg} over a connected graph $G$. Then \eqref{case1} holds for every distinct pair $i,j\in [n]$ and all $r\in \{1,\ldots,k\}$ if $G$ does not contain any generalized leaf.
\end{theorem}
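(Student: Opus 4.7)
The starting point is to recast \eqref{case1} as the existence of a vector $\mathbf{z}\in\ker(\mathfrak{R}_i^r)$ whose $u_j$-coordinate is nonzero; moreover it suffices to treat $r=k$, since any witnessing dependence at the maximal index restricts to smaller $r$. To streamline the row structure I would pass to the $2m$-dimensional symmetric randomness vector $\tilde{\mathbf{g}}=\setp{\Gamma_{k\ell}}_{k\in[n],\,\ell\in N_k}$, which treats every directed edge on an equal footing. The relations $u_k=\sum_{\ell\in N_k}\Gamma_{k\ell}$ together with $\Gamma_{k\ell}=\tilde{\Gamma}_{k\ell}$ for $(k,\ell)\in S$ define an invertible linear map $\mathbf{g}=P\tilde{\mathbf{g}}$. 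Setting $\tilde{\mathfrak{R}}_i^r:=\mathfrak{R}_i^r P$, the goal becomes: exhibit $\tilde{\mathbf{z}}\in\ker(\tilde{\mathfrak{R}}_i^r)$ with $\sum_{\ell\in N_j}\tilde{z}_{j\ell}\neq 0$, since that sum is precisely the $u_j$-entry of $P\tilde{\mathbf{z}}$.

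\textbf{Structure of $\tilde{\mathfrak{R}}_i^r$.} Reading the rows of $P$ against the $\Delta_i$, $\Lambda_i$, and consensus blocks displayed in \eqref{a23}, membership in $\ker(\tilde{\mathfrak{R}}_i^r)$ collapses to three families of constraints: $\tilde{z}_{i\ell}=0$ and $\tilde{z}_{\ell i}=0$ for every $\ell\in N_i$, plus $\sum_\ell[W^{t_s}]_{j_s,\ell}\,h_\ell=0$ for every $s\leq k$, where $h_\ell:=\sum_{k\in N_\ell}\tilde{z}_{k\ell}$. Consequently, any $\tilde{\mathbf{z}}$ supported away from vertex $i$ and satisfying $h_\ell\equiv 0$ lies in $\ker(\tilde{\mathfrak{R}}_i^r)$ \emph{regardless} of the informative indices $(j_s,t_s)$ or the specific consensus matrix $W$; this observation will eliminate any need to reason about powers of $W$.

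\textbf{Graph-theoretic step --- the main obstacle.} The core technical step is the combinatorial claim that, when $G$ contains no generalized leaf, for every distinct pair $(i,j)$ there exist $\ell^*\in N_j\setminus\setp{i}$ and $k^*\in N_{\ell^*}\setminus\setp{i,j}$. I would prove this by contradiction: otherwise every $\ell\in N_j\setminus\setp{i}$ satisfies $N_\ell\subseteq\setp{i,j}$, which together with $j\in N_\ell$ forces $N_\ell=\setp{j}$ or $N_\ell=\setp{i,j}$. A pendant leaf $\ell$ of the first kind yields a generalized leaf with head $\ell$ and tail $j$ (vacuous condition); if no such leaf exists, then every $\ell\in N_j\setminus\setp{i}$ has $\deg_G(\ell)=2$ and $\ell\in N_i$, so $(j,i)$ itself forms a generalized leaf as in \cref{leaf} (ii) or (iii); the degenerate boundary case $N_j\setminus\setp{i}=\emptyset$ makes $j$ a pendant leaf attached to $i$, i.e.\ \cref{leaf} (i). Each alternative contradicts the hypothesis. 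This classification of the forbidden local configurations is where the bulk of the paper's refined analysis lives.

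\textbf{Producing the kernel witness.} With $\ell^*$ and $k^*$ in hand, set $\tilde{z}_{j,\ell^*}=1$, $\tilde{z}_{k^*,\ell^*}=-1$, and every other entry of $\tilde{\mathbf{z}}$ equal to zero. Since $j,\ell^*,k^*$ are all distinct from $i$, the constraints $\tilde{z}_{i\ell}=\tilde{z}_{\ell i}=0$ hold automatically. The only potentially nonzero $h_\ell$ is $h_{\ell^*}=1-1=0$, so every consensus constraint is satisfied trivially. Finally $\sum_{\ell\in N_j}\tilde{z}_{j\ell}=\tilde{z}_{j,\ell^*}=1\neq 0$, which furnishes the required kernel element and establishes \eqref{case1}. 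The full argument therefore hinges entirely on the combinatorial case analysis above; once that is in place, the linear-algebraic construction is immediate.
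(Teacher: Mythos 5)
Your proof is correct and establishes the theorem. It shares with the paper the same combinatorial heart: the observation that a generalized-leaf-free connected graph always admits $s\in N_j\setminus\{i\}$ and $b\in N_s\setminus\{i,j\}$ (your $\ell^*,k^*$), and your three-way classification of the forbidden local configurations is actually more explicit and careful than the paper's one-line justification. Where you diverge is the linear-algebraic execution. The paper works directly in the asymmetric coordinates $\mathbf{g}=\{u_j\}\times\{\Gamma_{k\ell}\}_{(k,\ell)\in S}$ and shows by hand that the $u_j$-column of $\mathfrak{R}_i^r$ is a signed sum of the $u_b$-, $\Gamma_{js}$-, and $\Gamma_{bs}$-columns; because the identity of the ``dependent'' fragment $\Gamma_{k\,m_k}$ breaks the symmetry, this forces a four-way case split on whether $s=m_j$ and whether $s=m_b$, each case checked separately across the $\Delta_i$, $\Lambda_i$, and $P_i^r$ blocks. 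Your change of variables to the symmetric vector $\tilde{\mathbf{g}}=\{\Gamma_{k\ell}\}_{k\in[n],\ell\in N_k}$ via the invertible map $P$ eliminates that asymmetry entirely: the kernel membership constraints collapse to $\tilde z_{i\ell}=\tilde z_{\ell i}=0$ and $\sum_\ell [W^{t_s}]_{j_s\ell}h_\ell=0$, the witness with exactly two nonzero entries $\tilde z_{j\ell^*}=1$, $\tilde z_{k^*\ell^*}=-1$ works uniformly regardless of the $m_k$'s, and the need to track powers of $W$ disappears because $h\equiv 0$. This is a genuine streamlining of the argument. One small remark: the claim that ``this classification \ldots is where the bulk of the paper's refined analysis lives'' is not quite accurate---the paper spends the bulk of its effort on the four-case column algebra that your coordinate change removes, while its graph-theoretic step is terse; your contribution is precisely making the linear algebra trivial so that the combinatorics becomes the only real content.
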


\section{Convergence Analysis}\label{sec: conv}
Next, we study the convergence properties of Algorithm~\ref{alg}. We first formalize the notion of convergence time based on the waiting time required to achieve an $\epsilon$-accurate estimation of the convergence limit. We denote this quantity  by $t_{\epsilon}$ and define it as follows.
\begin{definition}\label{conv time}
	For $\epsilon>0$ and $\mathbf{v}(t)$ in \eqref{a9}, let $t_{\epsilon}$ be the minimum time $t$ such that $\norm{\mathbf{v}(t)-\mathbf{u}^*}_2 \!\leq \!\epsilon$.
\end{definition}
Note that the convergence rate \eqref{conv rate} is conceptually different from the convergence time defined in Definition~\ref{conv time}. While the rate represents the ''slope" of a convergence, the convergence time refers to the time when the iterations arrive the vicinity of the solution. This for example, implies that a larger distance between the initial values and the final solution leads to a larger convergence time under the same convergence rate. Having this definition, the convergence results of this paper is summarized in the following theorem.
\begin{theorem}\label{conv theorem}
	Consider Algorithm~\ref{alg} (PPAC) under Assumption~\ref{as: 1}. Further, suppose $W\mathds{1}_n = W^{\T}\mathds{1}_n = \mathds{1}_n$ and $\rho( W - {1}/{n}\mathds{1}_n\mathds{1}_n^{\T}) <1$. Then, we have:
	\begin{enumerate}[(i)]
		\item In the limit, PPAC converges to the exact solution, i.e., $\lim_{t \to \infty } \mathbf{v}(t) = \mathbf{u}^*$.
		\item \label{gt}The convergence rate in part (i) is the same as the convergence rate of an ordinary (non-private) consensus, meaning that \eqref{conv rate} holds for PPAC too.
		\item \label{gt2} The expected convergence time for achieving an $\epsilon$-accurate solution is 
		\begin{equation}\label{ad2}
			\E\left[t_{\epsilon}\right]=\mathcal{O}\left(\log\left(\frac{1+\mu_{\max}^2+\sigma_{\max}^2}{\epsilon}\right)\right),
		\end{equation}
		where $\mu_{\max}$ and $\sigma_{\max}^2$ are from Definition~\ref{def: parameters}.
	\end{enumerate}
\end{theorem}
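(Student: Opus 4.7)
The plan is to reduce all three parts to standard facts about the classical consensus recursion applied to the (random) initialization produced by the preparation phase.

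For part (i), I would first observe that by construction $\sum_{i}v_i(0)=\sum_i\sum_{k\in N_i}\Gamma_{ki}=\sum_k\sum_{i\in N_k}\Gamma_{ki}=\sum_k u_k$, so the initial vector has the same coordinate sum as $\mathbf{u}$. Combined with the hypotheses $W\mathds{1}_n=W^{\T}\mathds{1}_n=\mathds{1}_n$ and $\rho(W-\tfrac{1}{n}\mathds{1}_n\mathds{1}_n^{\T})<1$, the classical result recalled around \eqref{cons conv} gives $\lim_{t\to\infty}W^t\mathbf{v}(0)=\tfrac{1}{n}\mathds{1}_n\mathds{1}_n^{\T}\mathbf{v}(0)=u^*\mathds{1}_n=\mathbf{u}^*$ almost surely.

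For part (ii), the key algebraic identity is that the matrices $W$ and $P:=\tfrac{1}{n}\mathds{1}_n\mathds{1}_n^{\T}$ commute (each fixes $\mathds{1}_n$) and $P^2=P$, so $(W-P)^t=W^t-P$ for $t\geq 1$. Since $P\mathbf{u}^*=\mathbf{u}^*$ and $W\mathbf{u}^*=\mathbf{u}^*$, we obtain $\mathbf{v}(t)-\mathbf{u}^*=(W-P)^t(\mathbf{v}(0)-\mathbf{u}^*)$. Because the preparation phase injects continuous independent noise into $\mathbf{v}(0)$, the vector $\mathbf{v}(0)-\mathbf{u}^*$ almost surely has a nonzero component along the eigenspace achieving $\rho(W-P)$, and therefore the $t$-th root appearing in \eqref{conv rate} tends to $\rho(W-P)$, which is exactly the non-private rate.

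For part (iii), starting from the same identity I would upper bound $\|\mathbf{v}(t)-\mathbf{u}^*\|_2\leq C\,\rho^{t}\,\|\mathbf{v}(0)-\mathbf{u}^*\|_2$, where $\rho=\rho(W-P)<1$ and $C$ is a deterministic constant depending only on $W$ (e.g. from an equivalent norm argument on the Jordan form). Then the definition of $t_{\epsilon}$ yields
\begin{equation*}
t_{\epsilon}\;\leq\;1+\frac{\log C+\log^+\|\mathbf{v}(0)-\mathbf{u}^*\|_2+\log(1/\epsilon)}{\log(1/\rho)}.
\end{equation*}
Taking expectations and using $\log^+(x)\leq \tfrac{1}{2}\log(1+x^2)$ together with Jensen's inequality on the concave function $\log(1+\cdot)$ reduces the task to bounding $\E\|\mathbf{v}(0)-\mathbf{u}^*\|_2^2$. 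Since $v_i(0)=\sum_{k\in N_i}\Gamma_{ki}$ and $\mathbf{u}^*$ is a linear combination of the $u_k$, the entries of $\mathbf{v}(0)-\mathbf{u}^*$ are each linear in the $2m$ independent random sources listed in Assumption~\ref{as: 1}. Expanding the squared norm and controlling each mean and each variance by $\mu_{\max}^2$ and $\sigma_{\max}^2$ respectively, I get $\E\|\mathbf{v}(0)-\mathbf{u}^*\|_2^2 \leq K\,(1+\mu_{\max}^2+\sigma_{\max}^2)$ for a combinatorial constant $K$ depending only on $n$ and the degrees of $G$. Substituting back gives \eqref{ad2}.

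The main obstacle I anticipate is the bookkeeping in part (iii): one must make sure the deterministic constants (coming from the norm bound on $(W-P)^t$ and from the combinatorial expansion of $\|\mathbf{v}(0)-\mathbf{u}^*\|_2^2$) are absorbed cleanly into the $\mathcal{O}(\cdot)$, and that the application of Jensen's inequality is safeguarded by the additive $1$ inside the log so that the bound remains finite when $\|\mathbf{v}(0)-\mathbf{u}^*\|_2$ is small. Parts (i) and (ii) are essentially direct applications of the classical theory once the sum-preservation property is established.
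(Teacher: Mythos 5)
Your proposal follows essentially the same outline as the paper's proof: part (i) via sum preservation plus the classical condition \eqref{cons conv}; part (iii) via a contraction recursion for $\mathbf{v}(t)-\mathbf{u}^*$, then bounding $\E[t_\epsilon]$ through $\E[\|\mathbf{v}(0)-\mathbf{u}^*\|_2^2]$ by controlling each coordinate's mean and variance with $\mu_{\max}^2$ and $\sigma_{\max}^2$, followed by a Jensen step to bring the expectation inside the logarithm. The one substantive point of divergence is in the contraction factor for part (iii). You use the asymptotic bound $\|(W-P)^t\| \leq C\,\rho(W-P)^t$ via a Jordan-form / equivalent-norm argument, where $P = \tfrac{1}{n}\mathds{1}_n\mathds{1}_n^{\T}$; the paper instead takes the one-step factor $\alpha = \|W-P\|_2$ and asserts $\alpha<1$ from $\rho(W-P)<1$. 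Your route is actually the safer one under the stated hypotheses: for a non-symmetric doubly-stochastic $W$, $\rho(W-P)<1$ does not by itself imply $\|W-P\|_2<1$, so the paper's one-step contraction silently requires $W-P$ to be normal (e.g., $W$ symmetric), whereas your asymptotic bound works for any $W$ satisfying the theorem's assumptions, at the harmless cost of carrying a deterministic constant $C$ through the $\mathcal{O}(\cdot)$.

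For part (ii), the paper's argument is one line: \eqref{conv rate} is a supremum over all $\mathbf{v}(0)\neq\mathbf{u}^*$ and hence a property of $W$ alone, which PPAC does not change. You instead prove the stronger almost-sure statement that PPAC's random $\mathbf{v}(0)$ attains the rate $\rho(W-P)$, via the identity $(W-P)^t = W^t-P$ and the claim that $\mathbf{v}(0)-\mathbf{u}^*$ almost surely has a nonzero component in the dominant eigenspace. That strengthening is appealing, but the almost-sure step needs a bit more justification than you give: you should argue that the law of $\mathbf{v}(0)-\mathbf{u}^*$, which lives on the hyperplane $\{\mathbf{x}:\mathds{1}_n^{\T}\mathbf{x}=0\}$, assigns zero mass to any proper subspace of that hyperplane (this does hold under Assumption~\ref{as: 1}, since the map from the $2m$ independent continuous sources to $\mathbf{v}(0)$ has full rank onto the hyperplane, but it must be stated). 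Since the theorem only claims \eqref{conv rate} holds, this extra work is optional; the paper's trivial reading of part (ii) suffices.
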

Note that the conditions $W\mathds{1}_n = W^{\T}\mathds{1}_n = \mathds{1}_n$ and $\rho( W - {1}/{n}\mathds{1}_n\mathds{1}_n^{\T}) <1$, as mentioned in \cref{pre} are essential for the fundamental convergence properties of the classical averaging consensus. Further, we refer to \cref{sim} for simulation results on convergence. 
\begin{remark}
	The trade-off between privacy and the convergence time in our model can be explained as follows. Earlier in this section, we mentioned that increasing $\sigma_{k \ell}^2$ for $(k,\ell)\in S$ (or similarly $\sigma_{\max}^2$) decreases the privacy leakage $\pi_{i}^{(j)}(t)$ due to \cref{ad}, while it increases the convergence time due to \cref{ad2}. Hence, the quantities $\sigma_{k \ell}^2$ for $(k,\ell)\in S$ can be seen as the tuning parameters that adjust the position in the trade-off between privacy and the convergence time. 
\end{remark}

\section{Conclusion}

In this work, we introduced a novel private consensus
averaging algorithm and analyzed its privacy from an
information-theoretic perspective. We specifically
used the mutual information function between all the
information available at the adversary node and the
initial value of the victim to measure how private the
value of the victim is with respect to the attacker.
Our results show that the convergence rate of our proposed method is similar to the convergence rate of a
classic non-private consensus method, for any level of
privacy. More importantly, any level of accuracy can
be achieved via our proposed method. Our results also
show that there exists a trade-off between the level of
information-theoretic privacy and convergence time.

\section*{Acknowledgment}
The work of M. Fereydounian and H. Hassani is funded by DCIST, NSF CPS-1837253, and NSF CIF-1943064 and  NSF CAREER
award CIF-1943064, and Air Force Office of Scientific Research Young Investigator Program
(AFOSR-YIP) under award FA9550-20-1-0111. 
The research of A. Mokhtari is supported in part by NSF Grants 2019844 and 2112471, ARO Grant
W911NF2110226, the Machine Learning Lab (MLL) at UT Austin, and the Wireless Networking and Communications Group (WNCG) Industrial Affiliates Program.
 The work of R. Pedarsani is funded by NSF Grant 2003035.

\bibliography{ref}

\newpage
\appendix


\section{Simulations}\label{sim}
In this section, we provide simulations as a visualization to theoretical results of this paper. Consider a private consensus problem with the underlying graph $G$ consisting of $n=6$ nodes as shown in Figure~\ref{graph}. The vector of private values is denoted by $\mathbf{u}=[u_1,\ldots,u_6]^{\T}$, where each component is independently generated from the normal distribution with mean $0$ and standard deviation $\sigma_0 =10$, i.e., $u_i \sim \mathcal{N}(0,\sigma_0^2)=\mathcal{N}(0,100)$ for all $i\in [6]$. The specific realization of this example was $u =[2.30,\,4.40,\,-6.17,\,2.75,\,6.01,\,0.92]^{\T}$ with average $u^* = 1.7$. We then ran Algorithm~\ref{alg} (PPAC) over this model. For each node $i$, we chose $m_i$ uniformly at random in $N_i$ and the realization for this example is $(m_1,\ldots,m_6)=( 2,4,4,3,4,1)$. All the pure noises are independently generated from normal distribution with mean $0$ and standard deviation $\sigma_N $, i.e., $\Gamma_{ij} \sim\mathcal{N}(0,\sigma_N^2)$ for all $(i,j)\in S$. In all cases, we use the following consensus matrix: $W=I_n-1/d_{\max}(D_G-A_G)$, where $I_n$ is the identity matrix, $D_G=\operatorname{diag}(\deg_G(1),\ldots,\deg_G(n))$ and $d_{\max}=\max_{i\in [n]} \deg_G(i)$.

First, one might be interested in observing a realization of the pure noises and the corresponding dynamics (convergence) of nodes' messages for such a realization. For this purpose, we set $\sigma_N=15$. Note that PPAC generates new initial values for the consensus different from $\mathbf{u}$ but with the same average. Under the aforementioned realization, the new initial values were $\mathbf{v}(0) = [v_1(0),\ldots,v_6(0)]^{\T}=  [-12.57,\,16.49,\,-20.11,\,20.78,\,-23.08,\,28.70]^{\T}$. Figure~\ref{node conv} shows the convergence of $v_i(t)$ for all nodes in terms of the iteration number (time) $t$. As it can be seen, all nodes converge to the same value, which is the true average $u^*=1.7$. 
\begin{figure}[b!]
	\begin{minipage}{0.48\linewidth}
	\centering
		\includegraphics[width=0.7\textwidth]{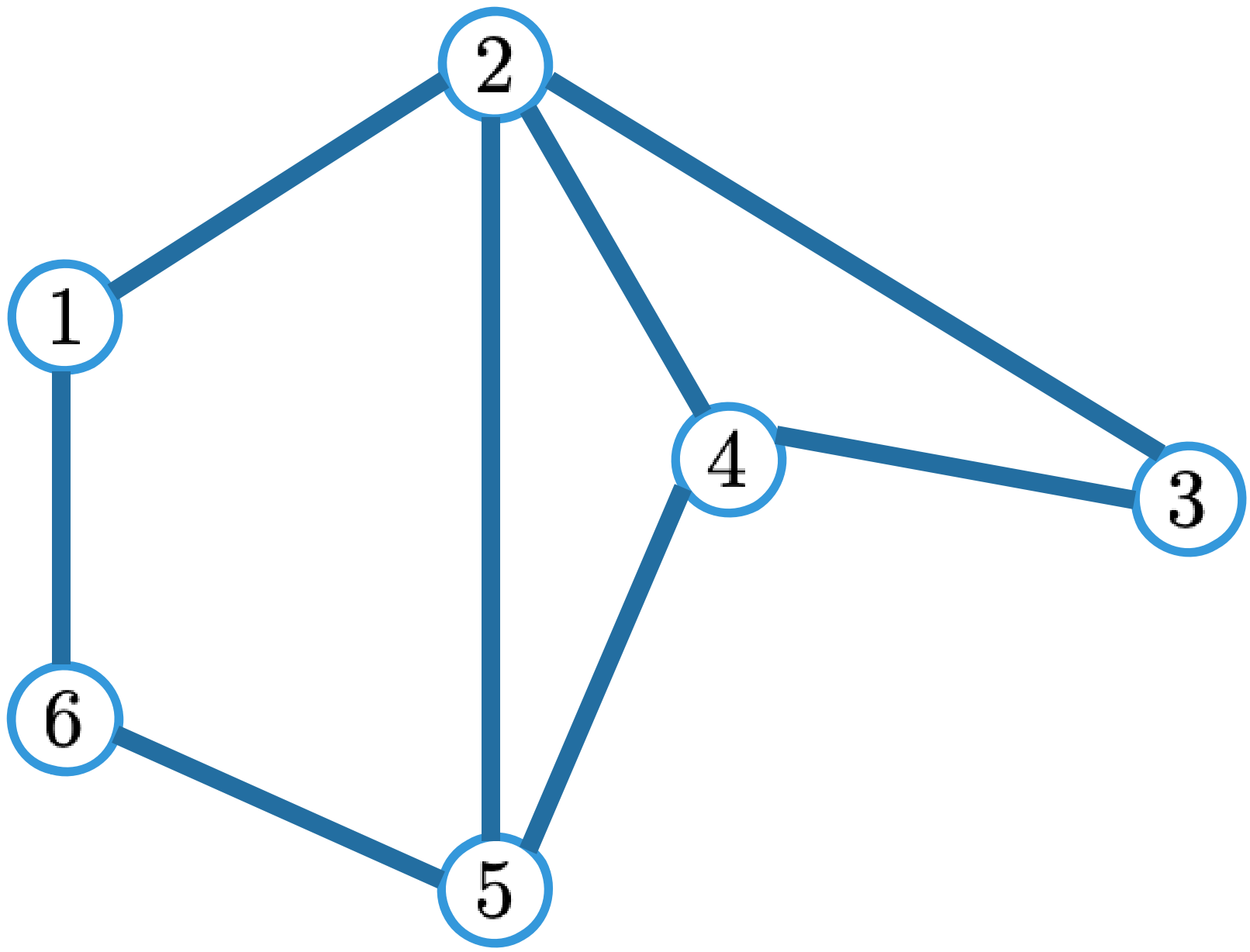}
		\caption{The graph $G$}
		\label{graph}
	\end{minipage}
	\hspace{3mm}
	\begin{minipage}{0.48\linewidth}
	\centering
		\includegraphics[width=\textwidth]{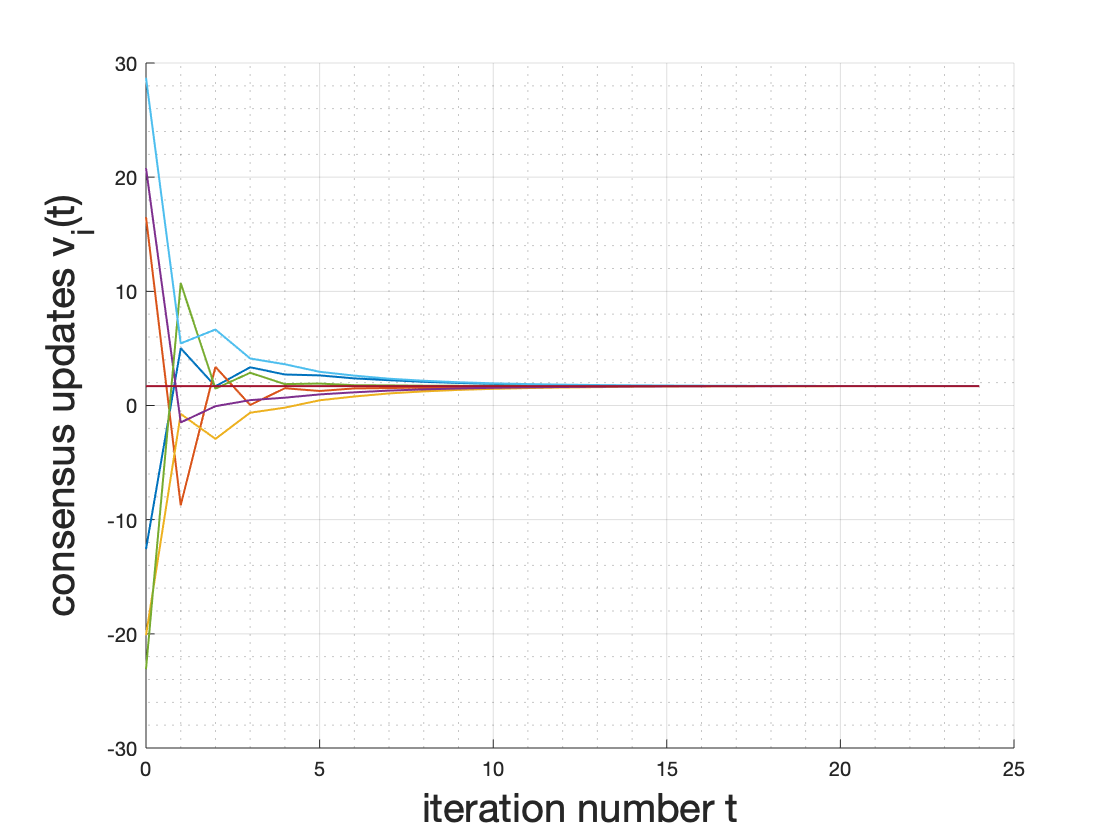}
		\caption{One realization of the convergence of $v_i(t)$ in terms of $t$ for all nodes using $\sigma_N=15$}
		\label{node conv}
	\end{minipage}
\end{figure}

The second  plot of interest answers the following question: While the initial values are kept fixed, how does the total convergence error over time behave when we increase the variance of pure noises $\sigma_{k\ell}^2$ generated by Algorithm~\ref{alg}? To this aim, we plot the total convergence error, i.e., $\norm{\mathbf{v}(t)-\mathbf{u}^*}_2$ in terms of time $t$ for different values of $\sigma_N$. We consider a wide range of values, that is $\sigma_N = 15,\, 150,\, 1500,\,15000,\,150000,\,1500000$. Recall that the generated noises $\Gamma_{ij}, \, i\in [n], j\neq m_i$ are realizations from the normal distribution with mean zero and variance $\sigma_N^2$ and thus the quantity $\norm{\mathbf{v}(t)-\mathbf{u}^*}_2$ is affected by particular realization of the values $\Gamma_{ij}$. To release the plots from this randomness, we generated $100$ realizations of $\Gamma_{ij}$ and averaged $\norm{\mathbf{v}(t)-\mathbf{u}^*}_2$ for each $t$ over these realizations. Figure~\ref{conv norm} shows the resulted plots. As it can be seen, Figure~\ref{conv norm} agrees with Theorem~\ref{conv theorem} and shows that increasing the variance of the generated noises $\Gamma_{ij}, \, (i,j)\in S$ affects the convergence time $t_{\epsilon}$ of achieving a given error $\epsilon$ only by a logarithmic factor. Note that the $y$-axis is in logarithmic scale.  
\begin{figure}[t!]
	\begin{minipage}{0.48\linewidth}
	\centering
		\includegraphics[width=\textwidth]{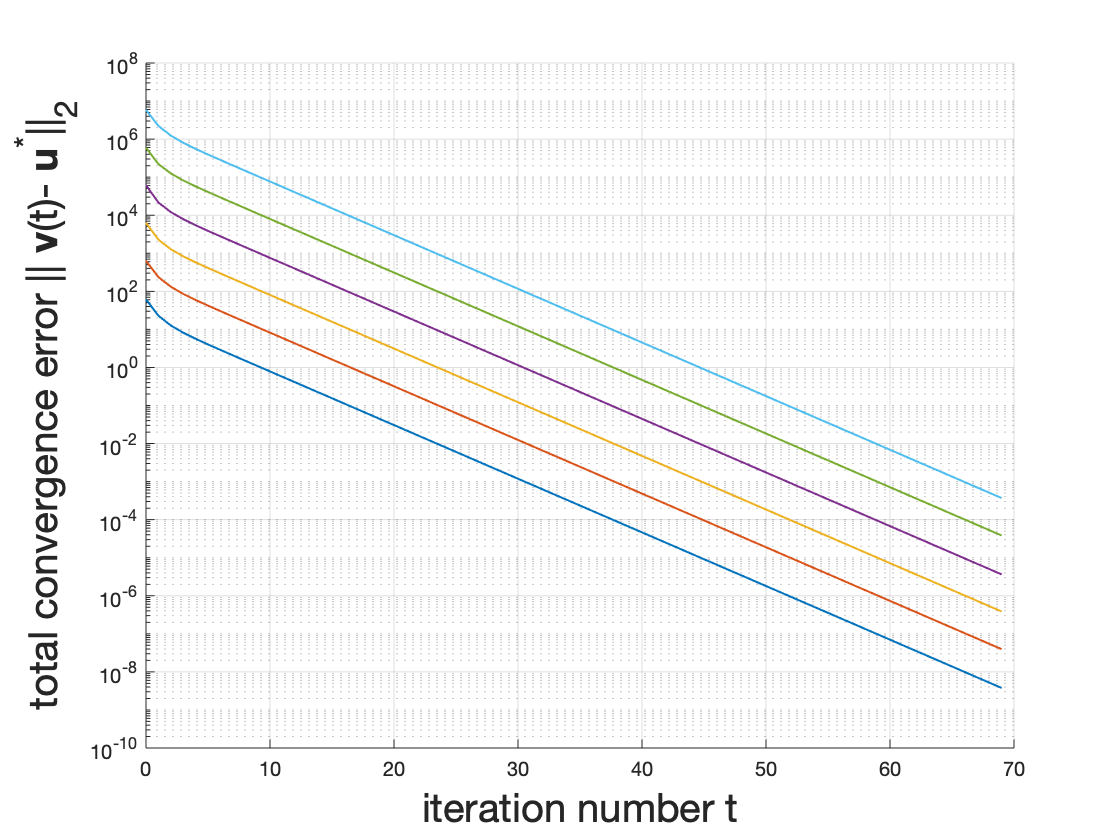}
		\caption{The total convergence error $\norm{\mathbf{v}(t)-\mathbf{u}^*}_2$ in terms of iteration number $t$ for a wide range of generated noises' standard deviation $\sigma_N$, i.e., for values $\sigma_N = 15,\, 150,\, 1500,\,15000,\,150000,\,1500000$}
		\label{conv norm}
	\end{minipage}
	\hspace{3mm}
	\begin{minipage}{0.48\linewidth}
	\centering
		\includegraphics[width=\textwidth]{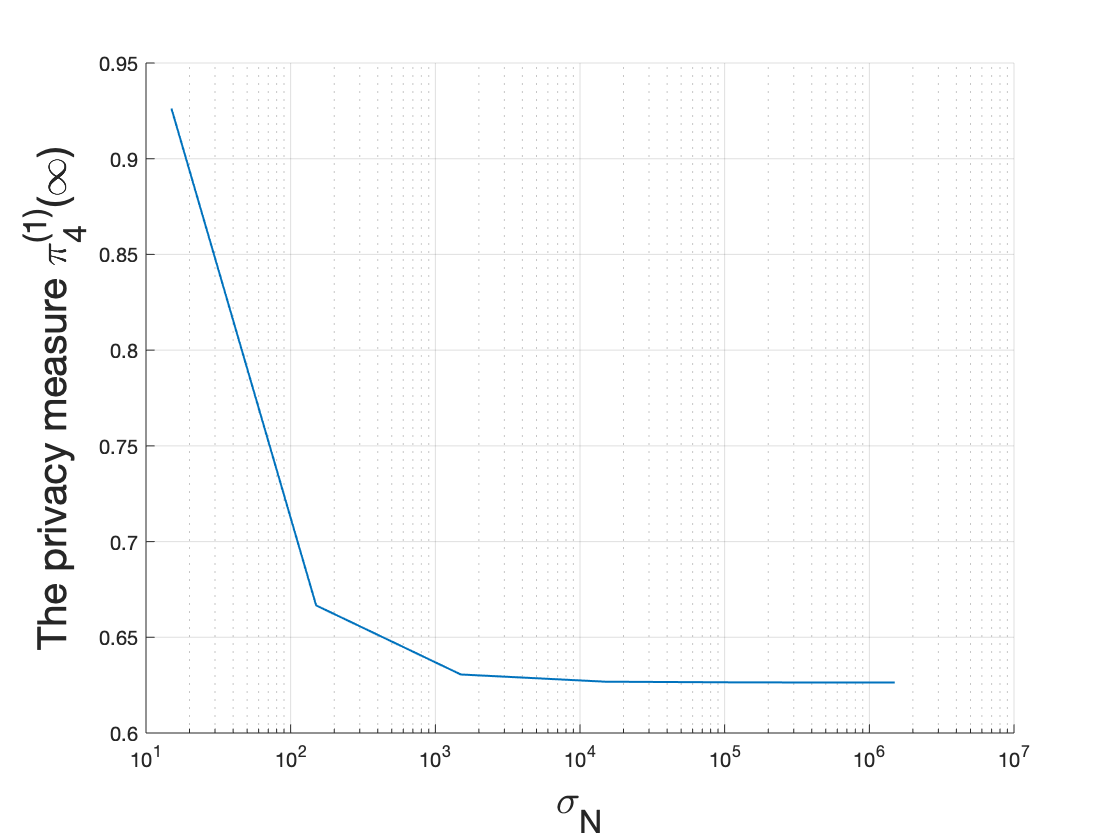}
		\caption{The leakage of node $1$ at node $4$, i.e., $\pi_{4}^{(1)}(\infty)$ in terms of generated noises' standard deviation $\sigma_N$ for values $\sigma_N = 15,\, 150,\, 1500,\,15000,\,150000,\,1500000$}
		\label{measure}
	\end{minipage}
\end{figure}
\begin{figure}[h!]
	\begin{minipage}{0.45\linewidth}
	\centering
		\includegraphics[width=0.7\textwidth]{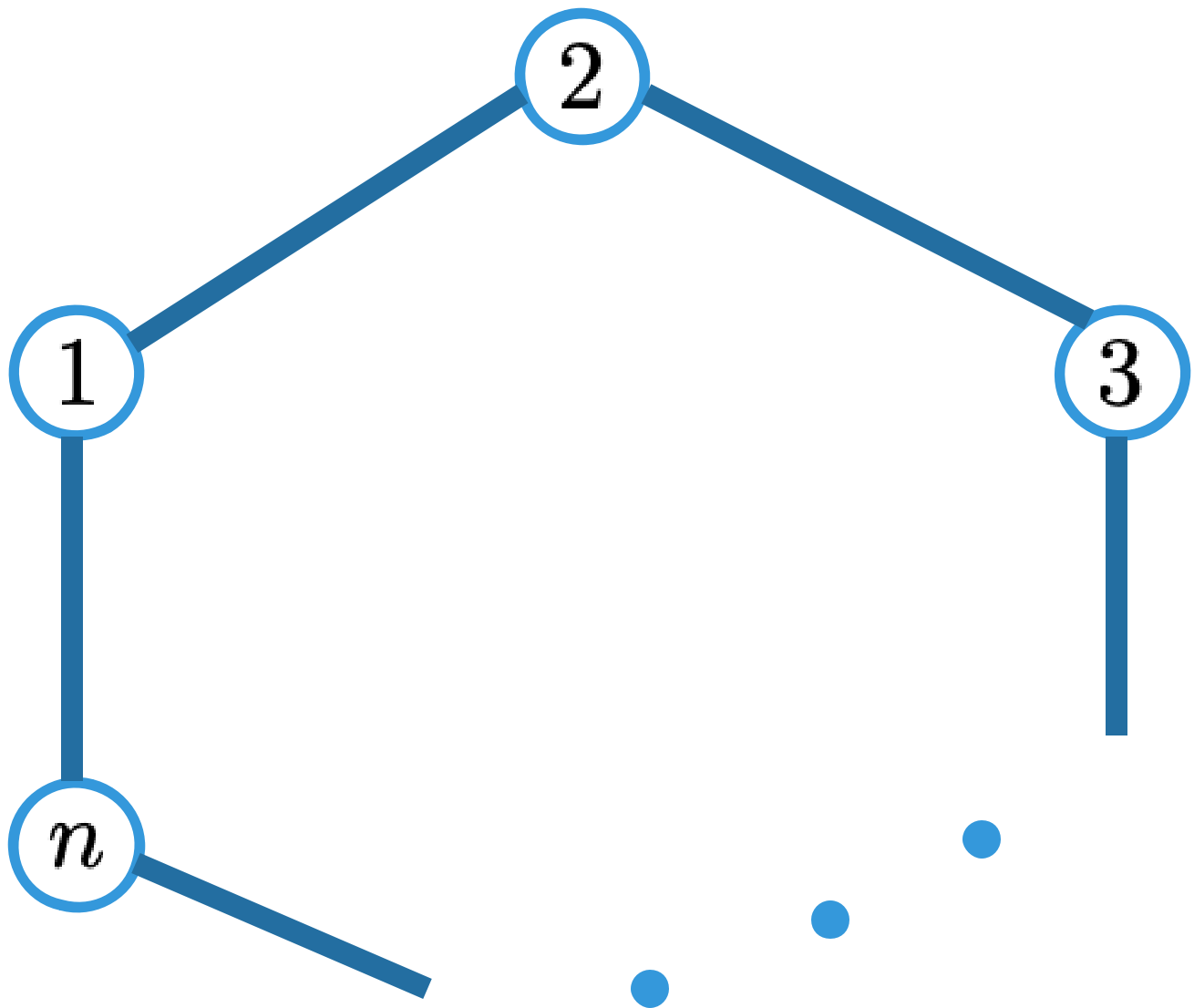}
		\caption{The graph $G=C_n$}
		\label{graph2}
	\end{minipage}
	\begin{minipage}{0.52\linewidth}
	\centering
		\includegraphics[width=\textwidth]{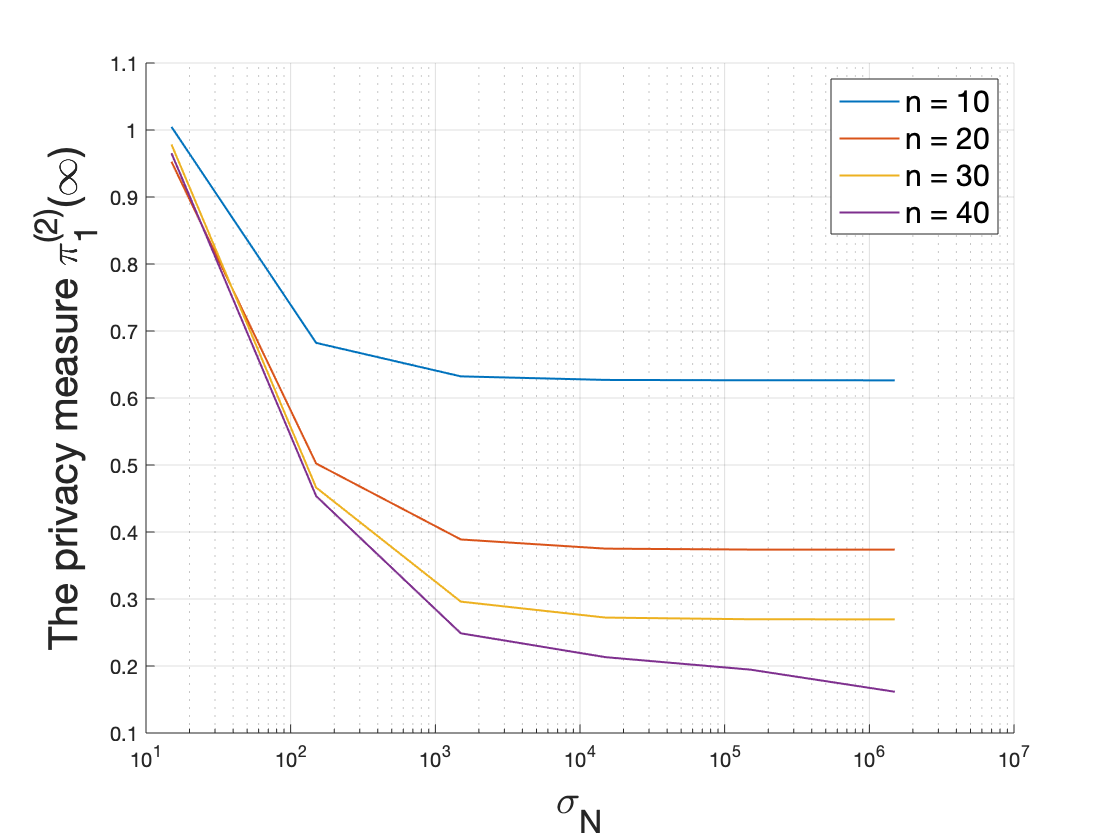}
    	\caption{The leakage of node $2$ at node $1$ in $G=C_n$, i.e., $\pi_{1}^{(2)}(\infty)$ in terms of generated noises' standard deviation $\sigma_N$ for values $\sigma_N = 15,\, 150,\, 1500,\,15000,\,150000,\,1500000$ for values $n = 10,\,20,\, 30,\, 40$}
    	\label{effect}
	\end{minipage}
\end{figure}

As the next step, we run experiments to visualize our privacy results. Note that graph $G$ in Figure~\ref{graph} does not contain any generalized leaf that are illustrated in Figure~\ref{leaf}. More generally, it is interesting to know that any graph over $n\geq 5$ nodes that has a cycle of length $n$ cannot contain a generalized leaf and is therefore a private structure. Hence, due to \cref{informal}, the privacy of every node is preserved at every other node in this example. Having this, one might be interested in computing the leakage measure $\pi_i^{(j)}(t)$ defined by Definition~\ref{def: pr}. This quantity can be computed based on Theorem~\ref{lem:2}. We are interested in considering all the data that one node can collect over all times through Algorithm~\ref{alg}. Therefore, we consider $\pi_i^{(j)}(\infty)$, i.e., the maximum leakage of node $j$ at node $i$. Here, we consider node $i=4$ is curious about $j=1$. Recall that in this experiment, we let all the generated noises have a common variance $\sigma_N^2$. Due to  \eqref{ad}, we know that increasing $\sigma_N^2$ decreases $\pi_i^{(j)}(\infty)=\mathcal{I}\left(\mathfrak{D}_i(\infty)\,;\,u_j\right)$ through a logarithmic function. This is plotted in Figure~\ref{measure} for $\pi_1^{(4)}(\infty)$. Note that the $x$-axis is in logarithmic scale. Few important items are worth mentioning about this figure which are discussed in the following: \\
First, as it can be observed, there is a ceiling on the achievable amount of privacy. This is due to the fact that some amount of information will unavoidably be transmitted by the construction of consensus model. For instance, the average of private values to which all nodes will converge, contains some amount of information about each private value. 
Second, since the scale is logarithmic, choosing smaller values of $\sigma_N$ can provide almost the same privacy level that extremely large values can but smaller values of $\sigma_N$ provide better convergence time as discussed in Figure~\ref{conv norm}. Hence, one might choose  $\sigma_N$ to be smallest value for which the plot in Figure~\ref{measure} is almost flat. $\sigma_N=1500$ seems a good option.

Following the above discussion, we know that the consensus problem inherently reveals some amount of information. For instance, the limiting value $u^*$ which is achievable by all nodes partially reveal information about each $u_i$. We know that part of the information is leaked due to achieving the exact average $u^*$. This can be computed as $\mathcal{I}(u^*;u_j)= 0.5\log(1+1/(n-1))$ in the case where $u_i$ s are Gaussian. This expression also suggests that when $n$ is large, less information will be naturally be revealed by $u^*$. One might ask if PPAC also provides a smaller leakage for a larger $n$ and thus higher levels of privacy are achievable by PPAC as $n$ gets larger. It turns out that the answer is a yes. To numerically evaluate this notion, we consider the underlying graph to be a cycle of $n$ nodes $G=C_n$ (see Figure~\ref{graph2}) and we suppose one node is curious about its neighbor. For the sake of clarity choose node $1$ as the curious node (attacker) and node $2$ as the victim. Any two neighbors can be chosen due to the symmetry. Figure~\ref{effect} shows a similar plot as in  Figure~\ref{measure} for $G=C_n$, evaluating $\pi_{1}^{(2)}(\infty)$ in terms of $\sigma_N$ when the total number of nodes $n$ is increasing, i.e., $n=10,\,20,\, 30,\, 40$. As it can be seen in Figure~\ref{effect}, larger networks provide lower levels of privacy leakage.

\section{Proofs}

\subsection{Further Notations}
\begin{itemize}
	\item For a matrix $A$, $|A|=|\det(A)|$ and for a set $A$, $|A|$ is the number of elements of $A$.
	\item For a matrix $A$, $A\succ 0$ means that $A$ is a positive definitive matrix. 
	\item For (column) vectors $\ab_{1},\ldots,\ab_s$,  $\left(\ab_{1},\ldots,\ab_s\right)$ denotes the matrix with these columns. 
	\item For $a\in\R$, $\ceil*{a}$ denotes the smallest integer greater than or equal to $a$.
	\item $X\perp Y$: $X$ and $Y$ are statistically independent. 
	\item For a multidimensional random variable $\X$: $\Sigma_{\X}=\operatorname{Cov}(\X)$ is the covariance matrix of $\X$.
	\item For a one-dimensional random variable $X$: $\sigma_X^2 = \operatorname{Var}(X)$ denotes the variance of $X$.
	\item The indicator function with condition $C$ is denoted by $\mathbf{1}_{C}$. Therefore, $\mathbf{1}_{C} =1$ if $C$ holds and $\mathbf{1}_{C} =0$ otherwise.
\end{itemize}

\subsection{Theorem~\ref{theorem:1}}
We first formally prove the fundamental property of a basis-pursuit procedure. Then we express a lemma stating that redundant data can be ignored when computing mutual information. Equivalently, if there is a spanning set that generates all other arguments of a mutual information function, considering this spanning set is enough and other elements can be ignored. The proof is based on the fact that the basis resulted from a basis-pursuit procedure is an example of such spanning sets. We start by formally proving fundamental properties of a basis-pursuit procedure.

\begin{lemma}\label{lemma:basis_persuit}
	Consider $\setp{\ab_i}_{i=0}^{\infty}$ where $\ab_i\in\Rn$ with $\ab_0\neq \mathbf{0}$ and let $B = \mathcal{B}\left(\setp{\ab_i}_{i=0}^{\infty}\right)$ be obtained from a basis pursuit procedure. Then, $|B|<\infty$ and the elements of $B$ are linearly independent. Let $B=\setp{\ab_{t_1},\ldots,\ab_{t_k}}$ where $0=t_1<t_2<\ldots<t_k$. Then for every $i\geq 0$, letting $r$ be the largest number such that $t_r \leq i$, we have
	\begin{equation}\label{r}
	\ab_{i} \in \operatorname{span}\left(\setp{\ab_{t_1},\ldots,\ab_{t_r}}\right).
	\end{equation} 
\end{lemma}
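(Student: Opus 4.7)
The plan is to track the state of the basis set across iterations of the procedure and verify all three claims by a single inductive bookkeeping argument. Let $B^{(i)}$ denote the state of $B$ immediately after processing $\ab_i$, so that $B^{(0)} = \setp{\ab_0}$ and, for $i\geq 1$, $B^{(i)} = B^{(i-1)}\cup\setp{\ab_i}$ if $\ab_i\notin\operatorname{span}(B^{(i-1)})$, and $B^{(i)} = B^{(i-1)}$ otherwise. The central bookkeeping identity, provable by a trivial induction on $i$, is
\begin{equation*}
    B^{(i)} = \setp{\ab_{t_1},\ldots,\ab_{t_{r(i)}}},
\end{equation*}
where $r(i)$ denotes the largest index with $t_{r(i)}\leq i$: indeed $B^{(i)}$ grows by exactly one element at index $i$ if and only if $i$ equals one of the $t_j$'s.

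For linear independence and finiteness, I would argue by induction on $i$. The base case $B^{(0)}=\setp{\ab_0}$ is linearly independent because $\ab_0\neq\mathbf{0}$. In the inductive step, $B^{(i)}$ either equals $B^{(i-1)}$ or is obtained from $B^{(i-1)}$ by appending a vector not in its span; both operations preserve linear independence. Since each $B^{(i)}$ is thus a linearly independent subset of $\R^n$, its size is at most $n$, so $|B|\leq n<\infty$. Consequently the procedure stops adding new vectors after some last index $t_k$ with $k\leq n$, and $B = B^{(t_k)}$ is finite and linearly independent.

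For the span claim \eqref{r}, fix $i\geq 0$ and let $r=r(i)$. If $i=t_r$, then $\ab_i\in B^{(i)}\subset B$, so the claim is immediate. Otherwise $i\notin\setp{t_1,\ldots,t_k}$, which by the procedure's rule means $\ab_i\in\operatorname{span}(B^{(i-1)})$; and since $t_r\leq i-1<t_{r+1}$ (with the convention $t_{k+1}=+\infty$), we have $r(i-1)=r$, hence by the bookkeeping identity $B^{(i-1)} = \setp{\ab_{t_1},\ldots,\ab_{t_r}}$, yielding \eqref{r}. The only step that is not automatic is the bookkeeping identity itself, but that follows directly from the construction; apart from this, each claim reduces to inspection of the procedure, so there is no substantive obstacle in this lemma.
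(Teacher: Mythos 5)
Your proof is correct and takes essentially the same approach as the paper: tracking what the basis looks like at each step of the procedure to conclude linear independence, finiteness via $|B|\le n$, and the span claim. Your explicit bookkeeping identity $B^{(i)}=\setp{\ab_{t_1},\ldots,\ab_{t_{r(i)}}}$ is a slightly more formal rendering of the paper's ``by construction of $\mathcal{B}(\cdot)$'' argument (and you argue the span claim directly where the paper uses a short contradiction), but the substance is identical.
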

\begin{proof}
	By construction of $\mathcal{B}(\cdot)$, any finite subset of $B$ is linearly independent. Thus, $|B| \leq \dim \left(\Rn\right) = n$. Moreover, if $t_r = i$, \eqref{r} trivially holds. Therefore, suppose $t_r < i$ and $\ab_{i} \notin \operatorname{span}\left(\setp{\ab_{t_1},\ldots,\ab_{t_r}}\right)$, then by construction of $\mathcal{B}(\cdot)$, we have $\ab_{i} \in B$ and thus $i=t_{\ell}$ for some $\ell>r$ which contradicts the definition of $r$.
\end{proof}

As the next step, we prove that a spanning subset of a dataset contains all the information of that dataset.
\begin{lemma}\label{lemma:finite}
	Suppose $\Y$ is an $n$-dimensional and $X$ is an arbitrary random variable. Further, suppose $\ab_1,\ldots,\ab_r\in \Rn$ are constant and for $s\leq r$, let $\ab_1,\ldots,\ab_r\in\operatorname{span}\left(\ab_{1},\ldots,\ab_{s}\right)$. Then
	\begin{equation}
	\mathcal{I}\left(\ab_1^{\T}\Y,\ldots,\ab_r^{\T}\Y\,;\,\, X \right) = \mathcal{I}\left(\ab_1^{\T}\Y,\ldots,\ab_s^{\T}\Y\,;\,\, X \right).
	\end{equation}
\end{lemma}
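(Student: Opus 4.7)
The plan is to exploit the fact that the extra coordinates $\mathbf{a}_{s+1}^{\T}\mathbf{Y},\ldots,\mathbf{a}_r^{\T}\mathbf{Y}$ are \emph{deterministic} (in fact, linear) functions of $\mathbf{a}_1^{\T}\mathbf{Y},\ldots,\mathbf{a}_s^{\T}\mathbf{Y}$, and therefore contribute no additional information about $X$. Concretely, I would introduce the shorthand $Z_i := \mathbf{a}_i^{\T}\mathbf{Y}$ and use the hypothesis $\mathbf{a}_i\in\operatorname{span}(\mathbf{a}_1,\ldots,\mathbf{a}_s)$ for every $i\in\{s+1,\ldots,r\}$ to pick constants $c_{i,1},\ldots,c_{i,s}\in\R$ with $\mathbf{a}_i=\sum_{j=1}^s c_{i,j}\mathbf{a}_j$, so that $Z_i=\sum_{j=1}^s c_{i,j}Z_j$ almost surely.

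With this reduction in hand, the argument becomes a standard two-sided application of the data processing inequality (or equivalently, one application of the chain rule for mutual information). On one hand, $(Z_1,\ldots,Z_s)$ is a coordinate projection of $(Z_1,\ldots,Z_r)$, so $\mathcal{I}(Z_1,\ldots,Z_s;X)\le \mathcal{I}(Z_1,\ldots,Z_r;X)$. On the other hand, the map $T:\R^s\to\R^r$ defined by $T(z_1,\ldots,z_s)=(z_1,\ldots,z_s,\sum_j c_{s+1,j}z_j,\ldots,\sum_j c_{r,j}z_j)$ is a measurable function with $(Z_1,\ldots,Z_r)=T(Z_1,\ldots,Z_s)$, so by data processing $\mathcal{I}(Z_1,\ldots,Z_r;X)\le \mathcal{I}(Z_1,\ldots,Z_s;X)$. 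Combining the two inequalities gives the claimed equality. Equivalently, one can write
\begin{equation*}
\mathcal{I}(Z_1,\ldots,Z_r;X)=\mathcal{I}(Z_1,\ldots,Z_s;X)+\mathcal{I}(Z_{s+1},\ldots,Z_r;X\mid Z_1,\ldots,Z_s),
\end{equation*}
and note that the conditional mutual information vanishes because $(Z_{s+1},\ldots,Z_r)$ is $\sigma(Z_1,\ldots,Z_s)$-measurable.

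The only subtle point, and what I expect to be the main obstacle when writing this up carefully, is that the statement is phrased using the PDF formula \eqref{mut}, but once we impose the linear relations the joint law of $(Z_1,\ldots,Z_r)$ is degenerate on an $s$-dimensional affine subspace of $\R^r$ and has no density with respect to Lebesgue measure on $\R^r$. To avoid this issue, I would either invoke the general measure-theoretic definition of mutual information (in which the data processing inequality and chain rule hold without absolute-continuity assumptions, and which reduces to \eqref{mut} when a joint density exists), or, staying within the paper's framework, define $\mathcal{I}(Z_1,\ldots,Z_r;X)$ as the limit of mutual informations after adding an infinitesimal independent noise to each $Z_i$, observe that this limit equals the mutual information with the $s$-dimensional non-degenerate vector by the projection argument above, and take limits. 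Either route delivers the equality cleanly, and no computation with explicit densities is required.
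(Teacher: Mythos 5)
Your proof is correct and follows essentially the same route as the paper: both write $Z_i=\sum_j c_{i,j}Z_j$ for $i>s$ and conclude from the deterministic-function property $\mathcal{I}(Z,f(Z);X)=\mathcal{I}(Z;X)$ that the extra coordinates add nothing; the paper asserts this identity directly, whereas you justify it via two-sided data processing (or the chain rule), and you additionally flag the degeneracy of the joint density on $\R^r$, a subtlety the paper's density-based definition of mutual information does not explicitly address.
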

\begin{proof}
	If $s=r$ the statement is obvious. Suppose $s<r$ and note that since $\ab_1,\ldots,\ab_r\in\operatorname{span}\left(\ab_{1},\ldots,\ab_{s}\right)$, for every $i\in\setp{s+1,\ldots,r}$ there exits $\lambda_{i1},\ldots,\lambda_{is}\in\R$ such that $\ab_i = \lambda_{i1}\ab_1+\ldots+\lambda_{is}\ab_{s}$. Let $\Lambda$ denote the matrix with elements $\lambda_{ij}$ and write
	\begin{equation}
	\left(\ab_{s+1},\ldots,\ab_r\right) = 	\left(\ab_{1},\ldots,\ab_s\right)\Lambda^{\T}.
	\end{equation}
	Hence,
	\begin{align}
	\left(\ab_{s+1}^{\T}\Y,\ldots,\ab_r^{\T}\Y\right) &= \Y^{\T}\left(\ab_{s+1},\ldots,\ab_r\right)	\\ &=\Y^{\T}\left(\ab_{1},\ldots,\ab_s\right)\Lambda^{\T} \\&= \left(\Y^{\T}\ab_{1},\ldots,\Y^{\T}\ab_s\right)\Lambda^{\T} \\\label{f1}&= f\left(\ab_{1}^{\T}\Y,\ldots,\ab_s^{\T}\Y\right).
	\end{align}
	Let $\Z=	\left(\ab_{1},\ldots,\ab_s\right)$. From \eqref{f1}, $\left(\ab_{s+1}^{\T}\Y,\ldots,\ab_r^{\T}\Y\right) = f(\Z)$, where $f(\cdot)$ is a deterministic function. This results in
	\begin{align}
	\mathcal{I}\left(\ab_1^{\T}\Y,\ldots,\ab_r^{\T}\Y\,;\,\, X \right) &= \mathcal{I}\left(\ab_1^{\T}\Y,\ldots,\ab_s^{\T}\Y,\ab_{s+1}^{\T}\Y,\ldots,\ab_r^{\T}\Y\,;\,\, X \right) \\&= \mathcal{I}\left(\Z,f(\Z)\,;\,\, X \right) \\
	&=\mathcal{I}\left(\Z\,;\,\, X \right)\\
	&=\mathcal{I}\left(\ab_1^{\T}\Y,\ldots,\ab_s^{\T}\Y\,;\,\, X \right).
	\end{align}
\end{proof}

\begin{proof}[{\bf Proof of Theorem~\ref{theorem:1}}]
	Consider $\mathfrak{R}_i(t)$ from \eqref{a2}. Then $\ab_s $ is defined to be the $s$-th row of $\mathfrak{R}_i(\infty)$ or more rigorously, the $s$-th row of $\mathfrak{R}_i(t)$ for some $t\geq s$. Moreover, consider the related basis from \eqref{a322}. When finding  $\mathcal{B}(\{\mathbf{a}_{s}\}_{s=0}^{\infty})$, the basis-pursuit procedure adds to the basis all the rows of matrices $\Lambda_i$ and $\Delta_i$. This holds because one can confirm that in the concatenation of these two matrices, i.e.,
	\begin{equation}
	\left[
		\begin{array}{c}
			\Delta_i \\
			\Lambda_i
		\end{array}\right],
	\end{equation}
	each column has at most $1$ non-zero value (the non-zero values are either $1$ or $-1$) and each row has at least one non-zero value. The basis pursuit then continues by evaluating $\mathbf{p}_{\ell}(t)$ for $\ell\in N_i$ and $t\geq 0$ and chooses $\mathbf{p}_{j_1}(t_1),\ldots,\mathbf{p}_{j_k}(t_k)$, where $((j_1,t_1),\ldots,(j_k,t_k))$ is the informative sequence of node $i$ (see Definition~\ref{def:5}). Define
	\begin{equation}
		A_r = \setp{\left[\Delta_i\right]_{s*}: 1\leq s\leq \deg_G(i)} \cup \setp{\left[\Lambda_i\right]_{s*}: 1\leq s \leq \deg_G(i)}\cup \setp{\mathbf{p}_{j_1}^{\top}(t_1),\ldots,\mathbf{p}_{j_r}^{\top}(t_r)}.
	\end{equation}
	Using the notation $\ab_s$, $\mathfrak{D}_i(t)$ and $\mathfrak{D}_i^r$ (for every $r\leq k$), can be written as
	\begin{equation}\label{a24}
		\mathfrak{D}_i(t) = \setp{\mathbf{a}_s^{\T}\mathbf{g}}_{s\in \{0,\ldots,t+2\deg_G(i)\}}, \quad \mathfrak{D}_i^r = \setp{\mathbf{a}_s^{\T}\mathbf{g}}_{\mathbf{a}_s\in A_r}.
	\end{equation}
	Due to Lemma~\ref{lemma:finite}, for $0\leq s \leq t+2\deg_G(i)$ and $r=q(t)$, 
	\begin{equation}\label{a25}
		\mathbf{a}_s \in \operatorname{span}\left(A_r\right).
	\end{equation}
	Having \eqref{a24} and \eqref{a25}, the result follows by applying Lemma~\ref{lemma:finite} in which $\mathbf{Y}=\mathbf{g}$ and $X=u_j$. 
\end{proof}

\subsection{Proposition~\ref{lem:3}}
\begin{proof} Let $\kappa=\deg(\mu_W)$ and denote the coefficients of the minimal polynomial $\mu_W(\cdot)$ by  
	\begin{align}
		\mu_W(X) = X^\kappa + a_{\kappa-1}X^{\kappa-1}+\ldots+a_0I. 
	\end{align}
	Having $\mu_W(W) =0$, $W^{\kappa}$ can be written as a linear combination of $W^0,\ldots,W^{\kappa-1}$. Using induction, this also holds for any $W^t$ when $t\geq \kappa$. Therefore, for every integer $t\geq 0$, there exists $\tilde{a}_0,\ldots,\tilde{a}_{\kappa-1}$ such that 
	\begin{align}\label{a13}
	W^t = \sum_{s=0}^{\kappa-1}\tilde{a}_{s}W^{s}. 
	\end{align}
	Having \eqref{coeff} and \eqref{a13}, we conclude that for any $t\geq 0$ and every $q,j,k,\ell\in [n]$ with $j\neq q$ and $(k,\ell)\in S$: 
	\begin{equation}\label{a15}
	\alpha_{j}^q(t) = \sum_{s=0}^{\kappa-1}\tilde{a}_{s}\alpha_{j}^q(s), \quad \beta_{k \ell}^{q}(t) = \sum_{s=0}^{\kappa-1}\tilde{a}_{s}\beta_{k \ell}^{q}(s).
	\end{equation}
	Having the definition of $\mathbf{p}_q(t)$ in \eqref{a16}, for every $q\in[n]$ and $t\geq 0$, \eqref{a15} results in
	\begin{align}\label{a26}
		\mathbf{p}_q(t) = \sum_{s=0}^{\kappa-1}\tilde{a}_{s}\mathbf{p}_q(s).
	\end{align}
	Note that \eqref{a26} holds for every $q\in [n]$, particularly when $q\in N_i$. Considering $q\in N_i$, \eqref{a26} leads to the fact that for very $t \geq 0$ and $\ell\in N_i$, 
	\begin{equation}
		\mathbf{p}_{\ell}(t) \in \operatorname{span}\left(\setp{\mathbf{p}_{q}(\tau):\, q\in N_i,\, 0\leq \tau \leq \kappa-1}\right). 
	\end{equation}
	Hence, $t_k \leq \kappa-1 = \deg(\mu_{W})-1$. Moreover, due to Cayley-Hamilton theorem, for any $n \times n$ square matrix $W$, we have $\deg(\mu_{W}) \leq n$. 
\end{proof}

\subsection{Proposition~\ref{lem: 4}}
\begin{proof}
	The statement has two parts and the proof will be enumerated accordingly.
	\begin{enumerate}[(i)]
		\item For the adjacency matrix $A_G$ of a simple graph $G$, $\left[A_G^t\right]_{ij}$ has a combinatorial meaning. It is equal to the number of walks of length $k$ between the nodes $i$ and $j$. For a consensus matrix $W\sim A_G$, $W$ can be obtained from $A_G$ by replacing the $1$s and (possibly) the diagonal elements of $A$ with some non-negative numbers. Similar to $\left[A_G^t\right]_{ij}$, $\left[W^t\right]_{ij}$ can also be represented combinatorially. To this aim, define $\Omega_t(i,j)$ to be the set of walks of length $t$ between $(i,j)$, that is
	\begin{align}
		\Omega_t(i,j) = \setp{(s_0,\ldots,s_{t})\in [n]^{t+1} \mid s_0=i, s_t=j, \forall q\in [t]:\, \setp{s_{q},s_{q-1}} \in E }.
	\end{align}
 	Based on the definition of matrix multiplication, we can write
	\begin{align}\label{a17}
		\left[W^t\right]_{ij} = \sum_{(s_0,\ldots,s_{t})\in \Omega_t(i,j)} \prod_{q=0}^{t-1} \left[W\right]_{s_qs_{q+1}}.
	\end{align} 
	Having \eqref{a17}, if $t<d_G(i,m_j)$, then there is no walk of length $t$ between $i$ and $m_j$, i.e., $\Omega_t(i,m_j)=\emptyset$ and $\alpha_j^{i}(t)=\left[W^t\right]_{im_j}=0$. Now for $t=d_G(i,m_j)$, there exists at least one positive term (corresponding to each path of length $d_G(i,m_j)$) in \eqref{a17} which results in 	$\alpha_j^{i}(t)=\left[W^t\right]_{im_j}>0$. Thus, $\tau_j^{(i)}$ exists and equals $d_G(i,m_j)$. This completes the first part.

	\item Suppose $j_0 = \arg\max_{s\in [n]} d_G(i,s)$ which leads to $d_G(i,j_0)= \max_{s\in [n]} d_G(i,s) =\operatorname{ecc}(i)$. If $d_G(i,j_0) \leq 2$, the result is trivial. Suppose $d_G(i,j_0) \geq 3$. It suffices to prove that $t_k \geq d_G(i,j_0)-2$. Define $i_0 = \arg\min_{s\in N_i} d_G(s,m_{j_0})$. We claim that $t_k \geq d_G(i_0,m_{j_0})$. The claim then leads to $t_k \geq d_G(i_0,m_{j_0})\geq d_G(i,j_0)-2$. The claim can be proved as follows: When we run the basis pursuit over the sequence $\{\mathbf{a}_s\}_s$ to obtain \eqref{a322}, consider the $j_0$-th coordinate. For example, such a coordinate in $\mathbf{p}_{\ell}(t)$ is $\alpha_{j_0}^{\ell}(t)$. Note that $\mathbf{p}_{i_0}(t)$ with $t=d_G(i_0,m_{j_0})$ lies in the basis because the $j_0$-th element of $[\Delta_i]_{s*}$ and $[\Lambda_i]_{s*}$ for all $s$ are totally zero and the first time in the basis pursuit that the $j_0$-th coordinate is not zero occurs at $t=d_G(i_0,m_{j_0})$ in $\mathbf{p}_{i_0}(t)$. Due to this, $\mathbf{p}_{i_0}(t)$ does not lie in the span of the latest updated basis and must be added to it. Note that $t_k$ is the time when the final vector will be added to the basis, hence, $t_k \geq d_G(i_0,m_{j_0})$.
	\end{enumerate}
\end{proof}

\subsection{Lemma~\ref{lemma:case2}}
\begin{proof}
	Suppose $\mathbf{g}_{-j}$ denotes $\mathbf{g}$ after removing the $j$-th coordinate (which corresponds to $u_j$). Then, we can rewrite \eqref{ae} in the following way such that for every valid index $s$: 
	\begin{equation}\label{a27}
		\left[\mathfrak{D}^r_i\right]_{s} = \left[\mathfrak{R}_i^r\right]_{s*} \, \mathbf{g} = \left[\mathfrak{R}_i^r\right]_{sj} \,u_j + \left[\mathfrak{R}_{i,-j}^r\right]_{s*}\mathbf{g}_{-j}.
	\end{equation}
	Note that the notation $[\cdot]_{s*}$ denotes a row vector. Under Case 2 given in \eqref{case2}, there exists a row $s_0$ of $\mathfrak{R}_{i,-j}^r$ that is a linear combination of other rows of this matrix, i.e., there exist coefficients $\beta_s \in \R$ such that
	\begin{equation}\label{a28}
		\left[\mathfrak{R}_{i,-j}^r\right]_{s_0*} = \sum_{s\neq s_0} \beta_s \, \left[\mathfrak{R}_{i,-j}^r\right]_{s*}.
	\end{equation}
	Multiply both sides of \eqref{a27} by $\beta_s$ for all $s\neq s_0$, then sum over all these values and subtract the corresponding equation for $s=s_0$ from it. This leads to:
	\begin{align}\label{a29}
		\left(\sum_{s\neq s_0} \beta_s \left[\mathfrak{D}^r_i\right]_{s}\right) - \left[\mathfrak{D}^r_i\right]_{s_0} &= \underbrace{\left(\sum_{s\neq s_0} \beta_s\left[\mathfrak{R}_i^r\right]_{sj} - \left[\mathfrak{R}_i^r\right]_{s_0j}\right)}_{\neq 0} \,u_j +\underbrace{\left(\sum_{s\neq s_0} \beta_s\left[\mathfrak{R}_{i,-j}^r\right]_{sj} - \left[\mathfrak{R}_{i,-j}^r\right]_{s_0j}\right)}_{=0 \text{ from \eqref{a28}}}  \mathbf{g}_{-j}
	\end{align}
	Note that the first term is definitely not zero because otherwise Case $1$ \eqref{case1} holds rather than Case $2$ \eqref{case2}. From \eqref{a29}, $u_j$ can be obtained by the observed data points of node $i$ up to $t=t_r$ as the following:
	\begin{align}\label{a30}
	u_j = \frac{\left(\sum_{s\neq s_0} \beta_s\left[\mathfrak{R}_i^r\right]_{sj}\right) - \left[\mathfrak{R}_i^r\right]_{s_0j}}{\left(\sum_{s\neq s_0} \beta_s \left[\mathfrak{D}^r_i\right]_{s}\right) - \left[\mathfrak{D}^r_i\right]_{s_0}}.
	\end{align}
	Moreover, note that in \eqref{a30}, $\mathfrak{R}_i^r$ only depends to the underlying graph $G$ and matrix consensus $W$ both of which are known to all nodes. Hence, $u_j$ leaks before time exceeds $t$.
\end{proof}

\subsection{Theorem~\ref{lem:2}}	
For the sake of clarity and ease of computation, we need to state and prove two intermediate lemmas before proving Theorem~\ref{lem:2}. The first lemma is just a formal computing of a combination of normal random variables.
\begin{lemma}\label{integ}
	Suppose $\X$ is an arbitrary multi-dimensional non-degenerate random variable i.e., $\Sigma_{\X}$ is invertible and $A\in\R^{n\times n}$ is a constant matrix. Then $\E[\left(\X-\E[\X]\right)^{\T}A\left(\X-\E[\X]\right)] = \operatorname{tr}(A\Sigma_X)$.
\end{lemma}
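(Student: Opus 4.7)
The plan is to exploit the standard ``trace trick'' for quadratic forms, which reduces the identity to a one-line manipulation involving linearity of expectation, cyclicity of trace, and the definition of the covariance matrix. Notably, the non-degeneracy of $\Sigma_\X$ is not actually needed for this identity itself (it is presumably invoked later in Theorem~\ref{lem:2}), so I will not use it in the proof.

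The key observation is that for any realization of $\X$, the quantity $(\X-\E[\X])^\T A (\X-\E[\X])$ is a scalar, hence equal to its own trace. First I would introduce the shorthand $\Y = \X - \E[\X]$, so that $\E[\Y] = \mathbf{0}$ and $\E[\Y\Y^\T] = \Sigma_\X$ directly from the definition of the covariance matrix. Then I would write
\begin{equation}
\Y^\T A \Y \;=\; \operatorname{tr}(\Y^\T A \Y) \;=\; \operatorname{tr}(A\, \Y\Y^\T),
\end{equation}
where the first equality is because the left-hand side is a $1\times 1$ matrix, and the second is the cyclic property of trace.

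Finally, I would take expectations on both sides and move the expectation inside the trace by linearity of both operations:
\begin{equation}
\E\!\left[(\X-\E[\X])^\T A (\X-\E[\X])\right] \;=\; \E\bigl[\operatorname{tr}(A\,\Y\Y^\T)\bigr] \;=\; \operatorname{tr}\!\bigl(A\,\E[\Y\Y^\T]\bigr) \;=\; \operatorname{tr}(A\,\Sigma_\X),
\end{equation}
which is exactly the claimed identity. There is no genuine obstacle here; the only subtlety is recognizing the trace trick at the outset, after which every step is a direct application of a standard property. I will keep the proof to just these few lines.
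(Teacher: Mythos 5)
Your proof is correct, and it takes a genuinely different (and arguably cleaner) route than the paper's. The paper whitens the variable, setting $\Y = \Sigma_\X^{-1/2}(\X - \E[\X])$ so that $\Sigma_\Y = I_n$, rewrites the quadratic form as $\E[\Y^\T B \Y]$ with $B = \Sigma_\X^{1/2} A \Sigma_\X^{1/2}$, expands coordinate-wise to get $\sum_{i,j} B_{ij}\E[Y_i Y_j] = \operatorname{tr}(B)$, and then cycles back to $\operatorname{tr}(A\Sigma_\X)$. You instead apply the scalar-equals-its-own-trace trick and the cyclicity of trace directly to $\Y = \X - \E[\X]$, then push expectation through the (linear) trace. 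The two proofs have the same length, but yours has a real advantage you correctly identified: it never uses $\Sigma_\X^{-1/2}$, so it does not require $\Sigma_\X$ to be invertible (or equivalently positive definite), whereas the paper's change of variables silently leans on that hypothesis to form the whitening matrix. The identity genuinely holds for any square-integrable $\X$, and your proof exposes that; the paper's invertibility assumption is, as you say, only needed downstream in Theorem~\ref{lem:2}. The coordinate-expansion step in the paper also tacitly reuses the same trace/cyclicity facts you invoke explicitly, so your version is the more transparent of the two.
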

\begin{proof}
	Letting $\Y= \Sigma_{\X}^{-\frac 12}\left(\X-\E[\X]\right) $, we have $\E[\Y] = \mathbf{0}$ and $\Sigma_{\Y}=\Sigma_{\X}^{-\frac 12}\Sigma_{\X}\Sigma_{\X}^{-\frac 12}=\id_n$. Hence, 
	\begin{align}
	\E[\left(\X-\E[\X]\right)^{\T}A\left(\X-\E[\X]\right)]  &= \E\left[\Y^{\T}\underbrace{\Sigma_{\X}^{\frac 12}A\Sigma_{\X}^{\frac 12}}_{B}\Y\right]\\
	& = \E\left[\sum_{i,\,j}B_{ij}Y_iY_j\right] \\
	&=   \operatorname{tr}(B) = \operatorname{tr}\left(\Sigma_{\X}^{\frac 12}A\Sigma_{\X}^{\frac 12}\right) = \operatorname{tr}\left(A\Sigma_{\X}\right).
	\end{align}
\end{proof}
Lemma~\ref{integ} eases the proof of the following lemma that computes the mutual information with specific form of arguments that will appear later in the proof of Theorem~\ref{lem:2}.
\begin{lemma}\label{lemma:1}
	Suppose $(X,Y_1,\ldots,Y_k)$ is a multivariate normal where $X\perp \Y=(Y_1,\ldots,Y_k)^{\T}$ and assume $\Sigma_{\Y}$ is invertible. Then 
	\begin{equation}
	\mathcal{I}\left(X+a_1 Y_1, \ldots, X+a_k Y_k\,;\,\, X \right) = \frac{1}{2} \log\left(1+\sigma_X^2\,\ab^{\T}\Sigma_{\Y}^{-1} \ab\right),
	\end{equation}
	where $\ab = \left[a_1,\ldots,a_k\right]^{\T}$.
\end{lemma}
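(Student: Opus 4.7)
The plan is to reduce the question to a determinant‑ratio formula for Gaussian mutual information and then simplify via the matrix determinant lemma. For any jointly Gaussian pair $(X,\mathbf{Z})$ with $\mathbf{Z}$ non‑degenerate one has the standard identity
\[
\mathcal{I}(\mathbf{Z};X) \;=\; h(\mathbf{Z}) - h(\mathbf{Z}\mid X) \;=\; \tfrac{1}{2}\log\frac{|\Sigma_{\mathbf{Z}}|}{|\Sigma_{\mathbf{Z}\mid X}|},
\]
so the task reduces to computing $\Sigma_{\mathbf{Z}}$ and $\Sigma_{\mathbf{Z}\mid X}$ for $\mathbf{Z}=(X+a_1Y_1,\ldots,X+a_kY_k)^\top$ and taking a ratio of determinants. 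Writing $\mathbf{Z}=X\mathbf{1}_k+D_{\mathbf{a}}\mathbf{Y}$ with $D_{\mathbf{a}}=\operatorname{diag}(a_1,\ldots,a_k)$ and exploiting $X\perp\mathbf{Y}$, the two covariances split cleanly as
\[
\Sigma_{\mathbf{Z}} \;=\; \sigma_X^2\,\mathbf{1}_k\mathbf{1}_k^\top + D_{\mathbf{a}}\Sigma_{\mathbf{Y}}D_{\mathbf{a}},\qquad \Sigma_{\mathbf{Z}\mid X} \;=\; D_{\mathbf{a}}\Sigma_{\mathbf{Y}}D_{\mathbf{a}},
\]
because conditioning on $X$ kills the rank‑one term $X\mathbf{1}_k$.

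Next I would apply the matrix determinant lemma $|A+uv^\top|=|A|(1+v^\top A^{-1}u)$ with $A=D_{\mathbf{a}}\Sigma_{\mathbf{Y}}D_{\mathbf{a}}$ (invertible by the hypothesis on $\Sigma_{\mathbf{Y}}$ together with the $a_i$'s being nonzero) and $u=v=\sigma_X\mathbf{1}_k$. This immediately gives
\[
\frac{|\Sigma_{\mathbf{Z}}|}{|\Sigma_{\mathbf{Z}\mid X}|} \;=\; 1+\sigma_X^2\,\mathbf{1}_k^\top (D_{\mathbf{a}}\Sigma_{\mathbf{Y}}D_{\mathbf{a}})^{-1}\mathbf{1}_k,
\]
and rewriting the inverse as $D_{\mathbf{a}}^{-1}\Sigma_{\mathbf{Y}}^{-1}D_{\mathbf{a}}^{-1}$ and then absorbing the diagonal rescaling into the coefficient vector collapses the quadratic form to $1+\sigma_X^2\,\mathbf{a}^\top\Sigma_{\mathbf{Y}}^{-1}\mathbf{a}$. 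Taking $\tfrac{1}{2}\log$ completes the derivation.

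Two small points will require care. First, one must justify the Gaussian closed form for differential entropy, which needs non‑degeneracy of $\Sigma_{\mathbf{Z}\mid X}$ (immediate from invertibility of $\Sigma_{\mathbf{Y}}$) and joint normality of $(X,\mathbf{Z})$ (inherited from $(X,\mathbf{Y})$ via an affine map). Second, the algebraic bookkeeping around $D_{\mathbf{a}}$ is the only nontrivial step and is where I expect the main effort; if desired, one may avoid $D_{\mathbf{a}}$ altogether by regrouping $\tilde Y_i:=a_iY_i$ with covariance $\Sigma_{\tilde{\mathbf{Y}}}=D_{\mathbf{a}}\Sigma_{\mathbf{Y}}D_{\mathbf{a}}$ and reading off $\mathcal{I}(\mathbf{Z};X)=\tfrac{1}{2}\log(1+\sigma_X^2\,\mathbf{1}_k^\top\Sigma_{\tilde{\mathbf{Y}}}^{-1}\mathbf{1}_k)$ directly. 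If one prefers to sidestep differential entropies entirely, Lemma~\ref{integ} already expresses quadratic‑form expectations in covariance form and can be used to compute $h(\mathbf{Z})$ and $h(\mathbf{Z}\mid X)$ from first principles, ending at the same determinant ratio. Everything beyond this is routine Gaussian algebra.
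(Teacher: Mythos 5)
Your strategy, i.e., reducing to $\mathcal{I}(\Z;X)=h(\Z)-h(\Z\mid X)=\tfrac12\log\bigl(|\Sigma_{\Z}|/|\Sigma_{\Z\mid X}|\bigr)$ and then applying the matrix determinant lemma, is genuinely different from the paper's, which evaluates the mutual information integral directly and bookkeeps three terms $\I_1+\I_2+\I_3$ via Lemma~\ref{integ}. Your route is shorter and cleaner. However, your final simplification is algebraically wrong. With the literal reading $Z_i=X+a_iY_i$, i.e., $\Z=X\mathds{1}_k+D_{\ab}\Y$, the matrix determinant lemma gives
\begin{equation*}
\frac{|\Sigma_{\Z}|}{|\Sigma_{\Z\mid X}|}=1+\sigma_X^2\,\mathds{1}_k^{\T}\bigl(D_{\ab}\Sigma_{\Y}D_{\ab}\bigr)^{-1}\mathds{1}_k
=1+\sigma_X^2\,\tilde{\ab}^{\T}\Sigma_{\Y}^{-1}\tilde{\ab},\qquad \tilde{\ab}=D_{\ab}^{-1}\mathds{1}_k=\bigl[1/a_1,\ldots,1/a_k\bigr]^{\T},
\end{equation*}
so ``absorbing the diagonal rescaling'' produces the vector of reciprocals $1/a_i$, not $\ab$ itself, and the claimed collapse to $1+\sigma_X^2\,\ab^{\T}\Sigma_{\Y}^{-1}\ab$ does not hold. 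A one-dimensional sanity check confirms this: $\mathcal{I}(X+aY;X)=\tfrac12\log\bigl(1+\sigma_X^2/(a^2\sigma_Y^2)\bigr)$, which differs from $\tfrac12\log(1+a^2\sigma_X^2/\sigma_Y^2)$ unless $a^2=1$.

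What you have actually uncovered is a misprint in the lemma statement: the intended observations are $Z_i=a_iX+Y_i$, not $X+a_iY_i$. This is what the paper's own proof uses throughout (it sets $\Z=\Y+X\ab$, derives $\Sigma_{\Z}=\Sigma_{\Y}+\sigma_X^2\ab\ab^{\T}$, and uses the Jacobian $J=\left[\begin{smallmatrix}1&\mathbf{0}\\\ab&\id_k\end{smallmatrix}\right]$), and it is also what the downstream application in Theorem~\ref{lem:2} instantiates, where each data point has the form $\left[\mathfrak{R}_i^r\right]_{sj}u_j+Y_s$. With $\Z=X\ab+\Y$ your determinant argument goes through with no diagonal matrix at all: $\Sigma_{\Z}=\sigma_X^2\ab\ab^{\T}+\Sigma_{\Y}$, $\Sigma_{\Z\mid X}=\Sigma_{\Y}$, and the matrix determinant lemma gives $|\Sigma_{\Z}|/|\Sigma_{\Y}|=1+\sigma_X^2\,\ab^{\T}\Sigma_{\Y}^{-1}\ab$ immediately. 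Note also that in this corrected form you no longer need the extra hypothesis that all $a_i\neq 0$; the fact that your version requires it and the lemma does not impose it was itself a warning sign that the parsing of the statement was off. So: your approach is sound and more economical than the paper's, but as written it contains a false algebraic step, and pushing the determinant computation through honestly would have revealed the typo rather than appearing to confirm the stated formula.
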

\begin{proof}
	Denote $\I = \mathcal{I}\left(X+a_1 Y_1, \ldots, X+a_k Y_k\,;\,\, X \right)$ and $\Z=\Y+X\ab=\left[Z_1,\ldots,Z_k\right]^{\T}$ and note that $Z_i = X+a_iY_i$ for $i\in \setp{1,\ldots,k}$. By definition,
	\begin{equation}\label{a1p}
	\I=\int f_{X,\Z}\left(x,z_1,\ldots,z_k\right) \log \frac{f_{X, \Z}\left(x,z_1,\ldots,z_k\right)}{f_{X}(x) f_{\Z}(z_1,\ldots,z_k)} \dd x \dd z_1\ldots \dd z_k.
	\end{equation}
	To obtain the joint distribution of $(X,\Z)$, the following transformation is useful:
	\begin{equation}
	\left[\begin{array}{l}
	X \\
	\Z
	\end{array}\right]=J\left[\begin{array}{l}
	X \\
	\Y
	\end{array}\right], \quad J=\left[\begin{array}{ll}
	1 & \mathbf{0} \\
	\ab & \id_{k}
	\end{array}\right], \quad \operatorname{det}(J)=1,
	\end{equation}
	which results in
	\begin{equation}\label{a2p}
	f_{X, \Z}(x, \mathbf{z})=f_{X, \Y}(x, \mathbf{z}-\ab x)=f_{X}(x) f_{\Y}(\mathbf{z}-\ab x).
	\end{equation}
	By replacing \eqref{a2p} into \eqref{a1p} and cancelling out the term $f_X$, we get
	\begin{equation}\label{a3p}
	\I=\int f_{X}(x) f_{\Y}(\mathbf{z}-\ab x) \log \frac{f_{\Y}(\mathbf{z}-\ab x)}{f_{\Z}(\mathbf{z})} \dd x \dd \mathbf{z}.
	\end{equation}
	To compute the terms in \eqref{a3p}, we need a quick discussion. Letting $\Y \sim \mathcal{N}\left(\mu_{\Y}, \Sigma_{\Y}\right)$ and $\Z \sim \mathcal{N}\left(\mu_{\Z}, \Sigma_{\Z}\right)$, we know that 
	\begin{equation}
	\mu_{\Z} = \mu_{\Y} + \ab \mu_{X},
	\end{equation}
	and we can relate $\Sigma_{\Z}$ and $\Sigma_{\Y}$ as follows:
	\begin{equation}
	\operatorname{Cov}\left(\left[\begin{array}{c}
	X \\
	\Z
	\end{array}\right]\right)=J \operatorname{Cov}\left(\left[\begin{array}{l}
	X \\
	\Y
	\end{array}\right]\right) J^{\top}=J\left[\begin{array}{cc}
	\sigma_{X}^{2} & \mathbf{0} \\
	\mathbf{0} & \Sigma_{\Y}
	\end{array}\right] J^{\top}=\left[\begin{array}{cc}
	\sigma_{X}^{2} & * \\
	* & \Sigma_{\Y}+ \sigma_{X}^{2}\,\ab\ab^{\T}
	\end{array}\right]
	\end{equation} 
	\begin{equation}\label{sigma_rel}
	\Longrightarrow \Sigma_{\Z}=\Sigma_{\Y}+ \sigma_{X}^{2}\,\ab\ab^{\T}.
	\end{equation}
	Since $\Sigma_{\Y} \succ 0$, from \eqref{sigma_rel} we conclude that $\Sigma_{\Z} \succ 0$ and thus $\Sigma_{\Z}$ is invertible. Therefore, $\Sigma_{\Y}^{-1}$ and $\Sigma_{\Z}^{-1}$ exist. Having this, we can write
	\begin{equation}\label{a4}
	\frac{f_{\Y}(\mathbf{z}-\ab x)}{f_{\Z}(\mathbf{z})}  =\frac{\left|\Sigma_{\Z}\right|^{\frac 12}}{\left|\Sigma_{\Y}\right|^{\frac 12}} \exp \left(-\frac{1}{2}\left(\mathbf{z}-\mathbf{a} x-\mu_{\Y}\right)^{\top} \Sigma_{\Y}^{-1}\left(\mathbf{z}-\mathbf{a} x-\mu_{\Y}\right)
	+\frac{1}{2}\left(\mathbf{z}-\mu_{\Z}\right)^{\T} \Sigma_{\Z}^{-1}\left(\mathbf{z}-\mu_{\Z}\right)\right).
	\end{equation}
	Replacing \eqref{a4} into \eqref{a3p} leads to the computation of \eqref{a3p} in three terms as follows:
	\begin{equation}\label{i0}
	\I = \I_1+\I_2+\I_3,
	\end{equation}
	where the first term is
	\begin{equation}\label{i1}
	\I_1 = \int f_{X}(x) f_{\Y}(\mathbf{z}-\ab x) \log \frac{\left|\Sigma_{\Z}\right|^{\frac 12}}{\left|\Sigma_{\Y}\right|^{\frac 12}}\, \dd x \dd\z =  \frac 12\log \frac{\left|\Sigma_{\Z}\right|}{\left|\Sigma_{\Y}\right|}.
	\end{equation}
	The second term is
	\begin{align}
	\I_2 &= -\frac{1}{2}\int f_{X}(x) f_{\Y}(\mathbf{z}-\ab x) \left(\mathbf{z}-\mathbf{a} x-\mu_{\Y}\right)^{\top} \Sigma_{\Y}^{-1}\left(\mathbf{z}-\mathbf{a} x-\mu_{\Y}\right)\, \dd\z \,\dd x\\
	&= -\frac{1}{2}\int f_{X}(x) \E_{\Y}\left[\left(\Y-\mu_{\Y}\right)^{\T}\Sigma_{\Y}^{-1}\left(\Y-\mu_{\Y}\right)\right]\, \dd x \\
	&= -\frac{1}{2}\int f_{X}(x)\operatorname{tr}\left(\Sigma_{\Y}^{-1}\Sigma_{\Y}\right)\, \dd x \quad\quad \text{due to Lemma~\ref{integ}}\\
	\label{i2}&= -\frac{\operatorname{tr}\left(\id_{k}\right)}{2}\int f_{X}(x)\, \dd x = -\frac{k}{2}.
	\end{align}
	And the third term can be obtained as follows.
	\begin{align}
	\I_3 &= \frac{1}{2}\int f_{X}(x) f_{\Y}(\mathbf{z}-\ab x)\left(\mathbf{z}-\mu_{\Z}\right)^{\T} \Sigma_{\Z}^{-1}\left(\mathbf{z}-\mu_{\Z}\right)\,  \dd\z\, \dd x\\
	\label{a7}&= \frac{1}{2}\int f_{X}(x) \E_{\Y}\left[\left(\Y+\ab x-\mu_{\Z}\right)^{\T} \Sigma_{\Z}^{-1}\left(\Y+\ab x-\mu_{\Z}\right)\right]\, \dd x.
	\end{align}
	For each given $x\in\R$, we can compute the argument of the integration in \eqref{a7} as follows.
	\begin{align}
	\label{b1}\E_{\Y}\left[\left(\Y+\ab x-\mu_{\Z}\right)^{\T} \Sigma_{\Z}^{-1}\left(\Y+\ab x-\mu_{\Z}\right)\right] =\, &\E_{\Y}\left[\left(\Y-\mu_{\Y}\right)^{\T} \Sigma_{\Z}^{-1}\left(\Y-\mu_{\Y}\right)\right] +\\ \label{b2}&\E_{\Y}\left[2\left(\Y-\mu_{\Y}\right)^{\T} \Sigma_{\Z}^{-1}\left(\ab x+\mu_{\Y}-\mu_{\Z}\right)\right]+\\
	\label{b3}&\left(\ab x+\mu_{\Y}-\mu_{\Z}\right)^{\T}\Sigma_{\Z}^{-1}\left(\ab x+\mu_{\Y}-\mu_{\Z}\right).
	\end{align}
	To compute the right-hand side of the latest equation,  note that \eqref{b1} can be obtained using Lemma~\ref{integ} as follows:
	\begin{align}
	\E_{\Y}\left[\left(\Y-\mu_{\Y}\right)^{\T} \Sigma_{\Z}^{-1}\left(\Y-\mu_{\Y}\right)\right] &=\operatorname{tr}\left(\Sigma_{\Z}^{-1}\Sigma_{\Y}\right) \\
	&= \operatorname{tr}\left(\Sigma_{\Z}^{-1}\left(\Sigma_{\Z}-\sigma_{X}^{2}\,\ab\ab^{\T}\right)\right)\quad \text{due to \eqref{sigma_rel}}\\
	&=\operatorname{tr}\left(\id_k\right)-\sigma_{X}^{2}\operatorname{tr}\left(\Sigma_{\Z}^{-1}\ab\ab^{\T}\right) \\
	\label{a5}&= k-\sigma_{X}^{2}\,\ab^{\T}\Sigma_{\Z}^{-1}\ab. 
	\end{align}
	Having \eqref{a5} together with the fact that \eqref{b2} is zero and replacing $\mu_{\Y}-\mu_{\Z} = -\ab \mu_{X}$ in \eqref{b3}, we can put together the three terms and write
	\begin{align}\label{a6}
	\E_{\Y}\left[\left(\Y+\ab x-\mu_{\Z}\right)^{\T} \Sigma_{\Z}^{-1}\left(\Y+\ab x-\mu_{\Z}\right)\right] = k-\sigma_{X}^{2}\,\ab^{\T}\Sigma_{\Z}^{-1}\ab + \left(x-\mu_{X}\right)^2\ab^{\T}\Sigma_{\Z}^{-1}\ab.
	\end{align}
	Replacing \eqref{a6} in \eqref{a7} leads to
	\begin{align}
	\I_3 &= \frac{1}{2}\left(k-\sigma_{X}^{2}\,\ab^{\T}\Sigma_{\Z}^{-1}\ab\right)+\frac{1}{2}\ab^{\T}\Sigma_{\Z}^{-1}\ab\int f_{X}(x)\left(x-\mu_{X}\right)^2\dd x \\
	\label{i3}&= \frac{k}{2}-\frac{1}{2}\sigma_{X}^{2}\,\ab^{\T}\Sigma_{\Z}^{-1}\ab+\frac{1}{2}\sigma_{X}^{2}\,\ab^{\T}\Sigma_{\Z}^{-1}\ab = \frac{k}{2}.
	\end{align}
	Putting the values of $\I_1$, $\I_2$, and $\I_3$ from \eqref{i1}, \eqref{i2}, and \eqref{i3} into \eqref{i0} results in
	\begin{equation}
	\I = \I_1+\I_2+\I_3 = \frac 12\log \frac{\left|\Sigma_{\Z}\right|}{\left|\Sigma_{\Y}\right|} -\frac{k}{2}+\frac{k}{2} = \frac 12\log \frac{\left|\Sigma_{\Z}\right|}{\left|\Sigma_{\Y}\right|}.
	\end{equation}
	To simplify further, note that from \eqref{sigma_rel} and the fact that $\Sigma_{\Y}$ is invertible, we have
	\begin{equation}
	\left|\Sigma_{\Z}\right| = \left|\Sigma_{\Y}\right|\left(1+\sigma_{X}^{2}\,\ab^{\T}\Sigma_{\Y}^{-1}\ab\right).
	\end{equation}
	Therefore,
	\begin{equation}
	\I = \frac 12\log \frac{\left|\Sigma_{\Z}\right|}{\left|\Sigma_{\Y}\right|} = \frac 12\log\left(1+\sigma_{X}^{2}\,\ab^{\T}\Sigma_{\Y}^{-1}\ab\right).
	\end{equation}
\end{proof}
Finally, the proof of Theorem~\ref{lem:2} can be provided as follows using Lemma~\ref{lemma:1}. 
\begin{proof}[{\bf Proof of Theorem~\ref{lem:2}}]
	Suppose $\mathbf{g}_{-j}$ denotes $\mathbf{g}$ after removing the $j$-th coordinate (which corresponds to $u_j$). Then, we can rewrite \eqref{ae} in the following way such that for every valid index $s$: 
	\begin{equation}
	\left[\mathfrak{D}^r_i\right]_{s} = \left[\mathfrak{R}_i^r\right]_{s*} \, \mathbf{g} = \left[\mathfrak{R}_i^r\right]_{sj} \,u_j + \left[\mathfrak{R}_{i,-j}^r\right]_{s*}\mathbf{g}_{-j}.
	\end{equation}
	Note that the notation $[\cdot]_{s*}$ denotes a row vector. For ease of notation, suppose $s_{\max}$ is the number of rows of of $\mathfrak{D}^r_i$. We directly apply Lemma~\ref{lemma:1} by considering $X=u_j$, $Y_s=\left[\mathfrak{R}_{i,-j}^r\right]_{s*}\mathbf{g}_{-j}$, $\Y=[Y_1,\ldots,Y_{s_{\max}}]^{\T}$, and $a_s = \left[\mathfrak{R}_i^r\right]_{sj}$ which also gives $\ab =[a_1,\ldots,a_{s_{\max}}]^{\T}= \left[\mathfrak{R}_{i}^r\right]_{*j}$. With these replacements, the conditions of Lemma~\ref{lemma:1} hold because due to Assumption~\ref{as: 1}, $u_j$ is independent of all other randomness sources, i.e., $u_j \perp \mathbf{g}_{-j}$ which leads to $X \perp \Y$ and also we assumed in the statement of Theorem~\ref{theorem:1} that all randomnesses are Gaussian, hence, $(X,\Y)$ is multivariate normal. Further, with this assignment of $\Y$, we need to compute the covariance matrix $\operatorname{Cov}(\Y)$ and show that it is invertible. Note that
	\begin{equation}
		\Y = \mathfrak{R}_{i,-j}^r\mathbf{g}_{-j}\quad \Rightarrow \quad\Sigma_{\Y}= \operatorname{Cov}\left(\Y\right) = \mathfrak{R}_{i,-j}^r \operatorname{Cov}\left(\mathbf{g}_{-j}\right) \mathfrak{R}_{i,-j}^{r\T} = \mathfrak{R}_{i,-j}^r\mathcal{S}_{-j}\mathfrak{R}_{i,-j}^{r\T},
	\end{equation}
	where $\mathcal{S}_{-j}$ is defined in \eqref{covariance}. Note that $\mathfrak{R}_{i}^r$ is a full row-rank matrix and under Case $1$ (i.e., \eqref{case1}), $\mathfrak{R}_{i,-j}^r$ is also full row-rank. Hence, 
	\begin{equation}\label{a31}
		\operatorname{rank}\left(\Sigma_{\Y}\right) = \operatorname{rank}\left(\mathcal{S}_{-j}\right) = 2m-1.
	\end{equation}
	Since $\Sigma_{\Y}$ is a $(2m-1)\times(2m-1)$ matrix, \eqref{a31} means $\Sigma_{\Y}$ is invertible. So far, we showed that all the conditions of Lemma~\ref{lemma:1} hold under the above assignments. Therefore, the result also holds and we have
	\begin{equation}\label{a32}
	\mathcal{I}\left(\mathfrak{D}_i^r\,;\,u_j\right)=\frac{1}{2} \log\left(1+\sigma_j^2\,\ab^{\T}\Sigma_{\Y}^{-1} \ab\right),
	\end{equation}   
	Finally, due to Theorem~\ref{theorem:1}, we know that
	\begin{equation}\label{a33}
	\mathcal{I}\left(\mathfrak{D}_i(t)\,;\,u_j\right) = \mathcal{I}\left(\mathfrak{D}_i^r\,;\,u_j\right).
	\end{equation}
	With \eqref{a32} and \eqref{a33} together, we conclude that
	\begin{equation}
	\mathcal{I}\left(\mathfrak{D}_i(t)\,;\,u_j\right)=\frac{1}{2} \log\left(1+\sigma_j^2\,\ab^{\T}\Sigma_{\Y}^{-1} \ab\right).
	\end{equation} 
\end{proof}

\subsection{Theorem~\ref{theorem:2}}\label{main-proof}
\begin{proof} Suppose $G$ does not contain any generalized leaf. We want to show that Case $1$ (\cref{case1}) holds. To do so is, the main idea is that we show that under such a condition, the $j$-th column (the column corresponding to $u_j$) in $\mathfrak{R}_{i}^r$, i.e., the column to be removed to obtain $\mathfrak{R}_{i,-j}^r$ is a linear combination of other columns of $\mathfrak{R}_{i}^r$. Therefore, removing it does not change the rank. To this aim, we need to setup a notation that we only use throughout this proof. Recall from \eqref{ae} that $\mathfrak{R}_{i}^r$ is a concatenation of three matrices, $\Delta_i$, $\Lambda_i$, and $P_i^r$ as follows:
\begin{equation}\label{a3}
P_i^r =\left[\begin{array}{c}
\mathbf{p}_{j_1}^{\T}(t_{1})\\
\vdots\\
\mathbf{p}_{j_r}^{\T}(t_{r})
\end{array}  \right],\quad \mathfrak{R}_i^r = \left[\begin{array}{c}
\Delta_i\\
\Lambda_i\\
P_i^r
\end{array} \right], 
\end{equation}

Thus, each column of $\mathfrak{R}_{i}^r$ consists of three sub-columns in each of these matrices. A set of columns of $\mathfrak{R}_{i}^r$ are linearly dependent if and only if the exact same linear combination holds for the sub-columns in each of $\Delta_i$, $\Lambda_i$, and $P_i^r$.

 \noindent{\bf Notation Setup.} All the aforementioned matrices, i.e., $\mathfrak{R}_{i}^r$, $\Delta_i$, $\Lambda_i$, and $P_i^r$ have $2m$ columns. Each column corresponds to one randomness source in the model that are $u_p$ for $p\in[n]$ and $\Gamma_{ps}$ for $(p,s)\in S$. We use index $u_p$ or $\Gamma_{ps}$ to refer to those columns. For example,  $\left[\mathfrak{R}_{i}^r\right]_{*\Gamma_{ps}}$ denotes the column of $\mathfrak{R}_{i}^r$ that corresponds to $\Gamma_{ps}$. To refer to rows, we setup a notation for each sub-matrices $\Delta_i$, $\Lambda_i$, and $P_i^r$. The rows of $\Delta_i$ correspond to $u_i$ and $\Gamma_{ip}$ for $(i,p)\in S$. We also use these as row-indices. For example, $\left[\Delta_i\right]_{\Gamma_{ip}u_q}$ equals the coefficient of $u_q$ when we write $\Gamma_{ip}$ in terms of $\mathbf{g}$. This coefficient is zero since $\Gamma_{ip}$ and $u_q$ are independent noises. The rows of $\Lambda_i$ correspond to $\Gamma_{\ell i}$ for $\ell \in N_i$. Note that here $(\ell,i)$ may or may not lie in $S$. The quantity $\Gamma_{\ell i}$ is what neighbor $\ell$ sends to $i$ in the preparation phase. It may be a pure noise (when $(\ell,i)\in S$) or in the form of $u_{\ell}-\sum_{s\in N_{\ell}\setminus \setp{m_{\ell}}}\Gamma_{\ell s}$. We also use these  $\Gamma_{\ell i}$ for $\ell \in N_i$ as row-indices. Finally, for the rows of $P_i^r$, rows are characterized by pairs $(j_1,t_1),\ldots,(j_k,t_k)$, e.g., the row of $P_i^r$ corresponding to $(\ell,t)\in \{(j_1,t_1),\ldots,(j_k,t_k)\}$ is $\mathbf{p}_{\ell}^{\T}(t)$ and is denoted by $[P_i^r]_{(\ell,t) \ast}$. When we want to refer to the element $u_b$ in this row, we write $[P_i^r]_{(\ell,t) u_b}$. As we will see in the remaining of the proof, our discussion here holds for each row identically. Hence, we refer to rows by general pair $(\ell,t)$ and use this pair as a row-index. \\

Using the notation explained above, we can now state the proof. Recall that node $i$ is the attacker and node $j$ is the victim and we want to show that $\left[\mathfrak{R}_{i}^r\right]_{*u_j}$ can be generated as a linear combination of other columns of $\mathfrak{R}_{i}^r$. Note that the following arguments hold regardless of $r$ which means it holds for all $r\in\{1,\ldots,k\}$. Therefore, we omit the superscript $r$ for simplicity. Suppose the connected graph $G$ does not have any generalized leaf. Then $j$ must have a neighbor $s\neq i$. Considering all such neighbors, there exists nodes $b,s$ such that $s\in N_j$, $s\neq i$, $b\in N_s$, and $b\notin\{i,j\}$ because otherwise either the graph is not connected or we have a generalized leaf in the graph. 
Now, we consider four cases based on whether $s=m_j$ versus $s\neq m_j$ and $s=m_b$ versus $s\neq m_b$. The proof in each case are provided in the following. Note that in obtaining the elements of $P_i$, we are basically using \eqref{coeff}. Again, note that the superscript $r$ will be omitted for simplicity and we use $(\ell,t)$ subscript to refer to the rows of $P_i$ where the choice of $\ell\in N_i$ and $t$ does not matter. Moreover, $\mathbf{0}$ denotes a vector with all zero elements.
\begin{enumerate}[(I)]
	\item Case $s=m_j$ and $s=m_b$:
	\begin{align}
	&\left.\begin{array}{l}
		\left[P_i\right]_{(\ell,t)u_j} = \left[W^t\right]_{\ell m_j}=\left[W^t\right]_{\ell s}\\
		\left[P_i\right]_{(\ell,t)u_b} = \left[W^t\right]_{\ell m_b} = \left[W^t\right]_{\ell s}\\
	\end{array}\right\} \Longrightarrow \quad \left[P_i\right]_{*u_j}= \left[P_i\right]_{*u_b},\\
		&\left[\Lambda_i\right]_{*u_j} = \left[\Lambda_i\right]_{*u_b} = \mathbf{0},\\
		&\left[\Delta_i\right]_{*u_j}  = \left[\Delta_i\right]_{*u_b}= \mathbf{0}.
	\end{align}
	Hence, 
	\begin{equation}\label{d4}
		\left[\mathfrak{R}_{i}\right]_{*u_j} = \left[\mathfrak{R}_{i}\right]_{*u_b}.
	\end{equation}
	\item Case $s= m_j$ and $s\neq m_b$:
	
	Note that $m_b$ might be $i$, $j$, or any other node other than $s$. Since $s\neq m_b$, we have $(b,s)\in S$ meaning that $\Gamma_{bs}$ is a pure noise. This means the relevant coefficient is computable based on $\beta_{bs}^{\ell}(t)$ in \eqref{coeff}.
	\begin{align}
	\label{c1}&\left.\begin{array}{l}
	\left[P_i\right]_{(\ell,t)u_j} = \left[W^t\right]_{\ell m_j}=\left[W^t\right]_{\ell s}\\
	\left[P_i\right]_{(\ell,t)u_b} = \left[W^t\right]_{\ell m_b} \\
	\left[P_i\right]_{(\ell,t)\Gamma_{bs}} = \left[W^t\right]_{\ell s}-\left[W^t\right]_{\ell m_b}\\
	\end{array}\right\} \Longrightarrow \quad \left[P_i\right]_{*u_j}= \left[P_i\right]_{*u_b}+\left[P_i\right]_{*\Gamma_{bs}},
	\end{align}
	Moreover,
	\begin{align}
	&\left[\Lambda_i\right]_{*u_j} = \mathbf{0},\\
	&\text{if }b\in N_i: \left\{\begin{array}{l}
		\left[\Lambda_i\right]_{bu_b} = \mathbf{1}_{m_b=i},\\
		\left[\Lambda_i\right]_{b\Gamma_{bs}} = -\mathbf{1}_{m_b=i},\\
		\text{ and other elements of }\left[\Lambda_i\right]_{*u_b} \text{ and } \left[\Lambda_i\right]_{*\Gamma_{bs}} \text{ are zero}, \\
	\end{array}\right. \\
	&\text{if }b\notin N_i: \left[\Lambda_i\right]_{*u_b}= \left[\Lambda_i\right]_{*\Gamma_{bs}} = \mathbf{0}.
	\end{align}
	Hence, we always get 
	\begin{equation}\label{c2}
	\left[\Lambda_{i}\right]_{*u_j} = \left[\Lambda_{i}\right]_{*u_b}+\left[\Lambda_{i}\right]_{*\Gamma_{bs}}.
	\end{equation}
	Further,
	\begin{equation}\label{c3}
		\left[\Delta_i\right]_{*u_j}  = \left[\Delta_i\right]_{*u_b}= \left[\Delta_i\right]_{*\Gamma_{bs}}=\mathbf{0} \quad \Longrightarrow \quad \left[\Delta_i\right]_{*u_j}  = \left[\Delta_i\right]_{*u_b}+ \left[\Delta_i\right]_{*\Gamma_{bs}}.  
	\end{equation}
	Putting \eqref{c1}, \eqref{c2}, and \eqref{c3}, we conclude that 
	\begin{equation}\label{d3}
	\left[\mathfrak{R}_{i}\right]_{*u_j} = \left[\mathfrak{R}_{i}\right]_{*u_b}+\left[\mathfrak{R}_{i}\right]_{*\Gamma_{bs}}.
	\end{equation}
	\item Case $s \neq m_j$ and $s=m_b$:
	
	Since $s\neq m_j$, we have $(j,s)\in S$ meaning that $\Gamma_{js}$ is a pure noise. This means the relevant coefficient is computable based on $\beta_{js}^{\ell}(t)$ in \eqref{coeff}.
	\begin{align}
	\label{c14}&\left.\begin{array}{l}
	\left[P_i\right]_{(\ell,t)u_j} = \left[W^t\right]_{\ell m_j}\\
	\left[P_i\right]_{(\ell,t)u_b} = \left[W^t\right]_{\ell m_b}= \left[W^t\right]_{\ell s}\\
	\left[P_i\right]_{(\ell,t)\Gamma_{js}} = \left[W^t\right]_{\ell s}-\left[W^t\right]_{\ell m_j}\\
	\end{array}\right\} \Longrightarrow \quad \left[P_i\right]_{*u_j}= \left[P_i\right]_{*u_b}-\left[P_i\right]_{*\Gamma_{js}}.
	\end{align}
	Now, for $\Lambda_i$ we have
	\begin{align}
	&\left[\Lambda_i\right]_{*u_b} = \mathbf{0},\\
	&\text{if }j\in N_i: \left\{\begin{array}{l}
	\left[\Lambda_i\right]_{ju_j} = \mathbf{1}_{m_j=i},\\
	\left[\Lambda_i\right]_{j\Gamma_{js}} = -\mathbf{1}_{m_j=i},\\
	\text{ and other elements of }\left[\Lambda_i\right]_{*u_j} \text{ and } \left[\Lambda_i\right]_{*\Gamma_{js}} \text{ are zero}, \\
	\end{array}\right. \\
	&\text{if }j\notin N_i: \left[\Lambda_i\right]_{*u_j}= \left[\Lambda_i\right]_{*\Gamma_{js}} = \mathbf{0}.
	\end{align}
	Hence, we always get 
	\begin{equation}\label{c13}
	\left[\Lambda_{i}\right]_{*u_j} = \left[\Lambda_{i}\right]_{*u_b}-\left[\Lambda_{i}\right]_{*\Gamma_{js}}.
	\end{equation}
	Further,
	\begin{equation}\label{c12}
	\left[\Delta_i\right]_{*u_j}  = \left[\Delta_i\right]_{*u_b}= \left[\Delta_i\right]_{*\Gamma_{js}}=\mathbf{0} \quad \Longrightarrow \quad \left[\Delta_i\right]_{*u_j}  = \left[\Delta_i\right]_{*u_b}- \left[\Delta_i\right]_{*\Gamma_{js}}.  
	\end{equation}
	Putting \eqref{c14}, \eqref{c13}, and \eqref{c12}, we conclude that 
	\begin{equation}\label{d2}
	\left[\mathfrak{R}_{i}\right]_{*u_j} = \left[\mathfrak{R}_{i}\right]_{*u_b}-\left[\mathfrak{R}_{i}\right]_{*\Gamma_{js}}.
	\end{equation}
	
	\item Case $s \neq m_j$ and $s\neq m_b$:
	
	First note that since $s\neq m_j$ and $s\neq m_b$, we have $(j,s),(b,s)\in S$ meaning that $\Gamma_{js}$ and $\Gamma_{bs}$ are pure noises. This means the relevant coefficients are computable based on $\beta_{js}^{\ell}(t)$ and $\beta_{bs}^{\ell}(t)$ in \eqref{coeff}.
	\begin{align}\label{c20}
	\left.\begin{array}{l}
	\left[P_i\right]_{(\ell,t)u_j} = \left[W^t\right]_{\ell m_j}\\
	\left[P_i\right]_{(\ell,t)u_b} = \left[W^t\right]_{\ell m_b}\\
	\left[P_i\right]_{(\ell,t)\Gamma_{js}} = \left[W^t\right]_{\ell s}-\left[W^t\right]_{\ell m_j}\\
	\left[P_i\right]_{(\ell,t)\Gamma_{bs}} = \left[W^t\right]_{\ell s}-\left[W^t\right]_{\ell m_b}\\
	\end{array}\right\} \Longrightarrow \quad \left[P_i\right]_{*u_j}= \left[P_i\right]_{*u_b}-\left[P_i\right]_{*\Gamma_{js}}+\left[P_i\right]_{*\Gamma_{bs}}.
	\end{align}
	Now, for $\Lambda_i$ we have
	\begin{align}
	&\text{if }j\in N_i: \left\{\begin{array}{l}
	\left[\Lambda_i\right]_{ju_j} = \mathbf{1}_{m_j=i},\\
	\left[\Lambda_i\right]_{j\Gamma_{js}} = -\mathbf{1}_{m_j=i},\\
	\text{ and other elements of }\left[\Lambda_i\right]_{*u_j} \text{ and } \left[\Lambda_i\right]_{*\Gamma_{js}} \text{ are zero}, \\
	\end{array}\right. \\
	&\text{if }j\notin N_i: \left[\Lambda_i\right]_{*u_j}= \left[\Lambda_i\right]_{*\Gamma_{js}} = \mathbf{0},\\
	&\text{if }b\in N_i: \left\{\begin{array}{l}
	\left[\Lambda_i\right]_{bu_b} = \mathbf{1}_{m_b=i},\\
	\left[\Lambda_i\right]_{b\Gamma_{bs}} = -\mathbf{1}_{m_b=i},\\
	\text{ and other elements of }\left[\Lambda_i\right]_{*u_b} \text{ and } \left[\Lambda_i\right]_{*\Gamma_{bs}} \text{ are zero}, \\
	\end{array}\right. \\
	&\text{if }b\notin N_i: \left[\Lambda_i\right]_{*u_b}= \left[\Lambda_i\right]_{*\Gamma_{bs}} = \mathbf{0}.
	\end{align}
	Hence, we always get 
	\begin{equation}\label{c21}
	\left[\Lambda_{i}\right]_{*u_j} = \left[\Lambda_{i}\right]_{*u_b}-\left[\Lambda_{i}\right]_{*\Gamma_{js}}+\left[\Lambda_{i}\right]_{*\Gamma_{bs}}.
	\end{equation}
	Further,
	\begin{equation}\label{c22}
	\left[\Delta_i\right]_{*u_j}  = \left[\Delta_i\right]_{*u_b}= \left[\Delta_i\right]_{*\Gamma_{js}}=\left[\Delta_i\right]_{*\Gamma_{bs}}=\mathbf{0} \quad \Longrightarrow \quad \left[\Delta_i\right]_{*u_j}  = \left[\Delta_i\right]_{*u_b}- \left[\Delta_i\right]_{*\Gamma_{js}}+\left[\Delta_i\right]_{*\Gamma_{bs}}.  
	\end{equation}
	Putting \eqref{c20}, \eqref{c21}, and \eqref{c22}, we conclude that 
	\begin{equation}\label{d1}
	\left[\mathfrak{R}_{i}\right]_{*u_j} = \left[\mathfrak{R}_{i}\right]_{*u_b}-\left[\mathfrak{R}_{i}\right]_{*\Gamma_{js}}+\left[\mathfrak{R}_{i}\right]_{*\Gamma_{bs}}.
	\end{equation}
\end{enumerate}
The equations \eqref{d4}, \eqref{d3}, \eqref{d2}, and \eqref{d1} show that in all cases the column $\left[\mathfrak{R}_{i}\right]_{*u_j}$ can be written as a linear combination of other columns of $\mathfrak{R}_{i}$. Hence, removing it does not change the rank, that is, for all $r\in\setp{1,\dots,k}$.
\begin{equation}
	\operatorname{rank}\left(\mathfrak{R}_{i,-j}^r\right)  = \operatorname{rank}\left(\mathfrak{R}_{i}^r\right).
\end{equation}
This means Case $1$, i.e., \eqref{case1} holds for all $r\in\setp{1,\dots,k}$.
\end{proof}

\subsection{Theorem~\ref{conv theorem}}
We start with a simple lemma that eases the computation of $t_{\epsilon}$.
\begin{lemma}\label{lem:12}
	Let $\{\mathbf{x}(t)\}_{t=0}^{\infty}$ be an arbitrary sequence of vectors over the integer $t$ that converges to $\mathbf{x}^{*}$ and suppose for every $t\geq 0$
	\begin{equation}\label{42}
	\norm{\mathbf{x}(t+1)-\mathbf{x}^{*}}_2 \leq \alpha \norm{\mathbf{x}(t)-\mathbf{x}^{*}}_2,
	\end{equation}
	for some $\alpha \in (0,1)$ and define $t_{\epsilon} = \min \setp{t \geq 0 \mid \norm{\mathbf{x}(t)-\mathbf{x}^{*}}_2 \leq \epsilon}$. Then
	\begin{equation}\label{a37}
	t_{\epsilon} \leq \ceil*{\frac{1}{\log \frac{1}{\alpha}}\cdot \log\frac{1+\norm{\mathbf{x}(0)-\mathbf{x}^{*}}_2}{\epsilon}}.
	\end{equation}
\end{lemma}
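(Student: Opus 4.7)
The plan is to reduce the lemma to the elementary fact that a geometric contraction with ratio $\alpha\in(0,1)$ shrinks the error by a factor $\alpha^t$ after $t$ steps, and then to invert this to find the smallest $t$ ensuring the error drops below $\epsilon$.

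First, I would unroll the one-step contraction \eqref{42} by induction on $t$ to conclude
\begin{equation}
\norm{\mathbf{x}(t)-\mathbf{x}^{*}}_2 \leq \alpha^{t} \,\norm{\mathbf{x}(0)-\mathbf{x}^{*}}_2
\end{equation}
for every integer $t\geq 0$. To obtain the bound with the symmetric factor $1+\norm{\mathbf{x}(0)-\mathbf{x}^{*}}_2$ appearing in \eqref{a37}, I would then use the trivial estimate $\norm{\mathbf{x}(0)-\mathbf{x}^{*}}_2\leq 1+\norm{\mathbf{x}(0)-\mathbf{x}^{*}}_2$, giving
\begin{equation}
\norm{\mathbf{x}(t)-\mathbf{x}^{*}}_2 \leq \alpha^{t}\bigl(1+\norm{\mathbf{x}(0)-\mathbf{x}^{*}}_2\bigr).
\end{equation}
The reason for inflating the initial distance by $1$ is purely technical: it guarantees that the argument of the logarithm in \eqref{a37} is at least $1/\epsilon$, so the right-hand side of \eqref{a37} remains non-negative even in the degenerate situation $\mathbf{x}(0)=\mathbf{x}^{*}$ (where $t_{\epsilon}=0$ trivially).

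Next, I would solve for the smallest integer $t$ making the right-hand side at most $\epsilon$. Taking logarithms and using $\log(1/\alpha)>0$ yields the sufficient condition
\begin{equation}
t\; \geq\; \frac{1}{\log(1/\alpha)}\,\log\frac{1+\norm{\mathbf{x}(0)-\mathbf{x}^{*}}_2}{\epsilon}.
\end{equation}
Because $t_{\epsilon}$ is the minimal nonnegative integer for which the error-threshold holds, it is bounded by the ceiling of the right-hand side above. This yields exactly the claimed inequality \eqref{a37}.

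There is no real obstacle here; the only point that requires a moment's care is the inclusion of the ``$+1$'' in the numerator of the logarithm, which is needed to cover the corner case $\norm{\mathbf{x}(0)-\mathbf{x}^{*}}_2<\epsilon$ (where the bound should be $0$) and to ensure the ceiling is a valid nonnegative upper bound in all regimes of $\epsilon$ and $\norm{\mathbf{x}(0)-\mathbf{x}^{*}}_2$.
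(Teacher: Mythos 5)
Your argument is correct and follows essentially the same route as the paper's. The only cosmetic difference is the order of operations: the paper first solves $\alpha^t\norm{\mathbf{x}(0)-\mathbf{x}^{*}}_2=\epsilon$ exactly to get a candidate real-valued time $\tilde{t}_{\epsilon}$, then handles the sign issue and inserts the $+1$ at the end, whereas you fold the bound $\norm{\mathbf{x}(0)-\mathbf{x}^{*}}_2\le 1+\norm{\mathbf{x}(0)-\mathbf{x}^{*}}_2$ in before solving, which lets you justify the nonnegativity of the ceiling in one place. Both rely on the same two ingredients (unrolling the contraction to $\alpha^t\norm{\mathbf{x}(0)-\mathbf{x}^{*}}_2$ and inverting via logarithms), and both implicitly take $\epsilon\le 1$ so that $\log\bigl((1+\norm{\mathbf{x}(0)-\mathbf{x}^{*}}_2)/\epsilon\bigr)\ge 0$; no substantive gap on your part.
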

\begin{proof}
	Note that
	\begin{equation}\label{a34}
	\norm{\mathbf{x}(t)-\mathbf{x}^{*}}_2\leq \alpha^t \norm{\mathbf{x}(0)-\mathbf{x}^{*}}_2.
	\end{equation}
	When one is not considering $t\geq 0$, we can let $\tilde{t}_{\epsilon}$ be the minimum real number $t$ such that the right-hand side of \eqref{a34} holds, i.e.,
	\begin{equation}
	\tilde{t}_{\epsilon} = \min \setp{t\in \R \mid \alpha^t \norm{\mathbf{x}(0)-\mathbf{x}^{*}}_2 \leq \epsilon}.
	\end{equation}
	Then by taking the logarithm of both sides of $\alpha^t \norm{\mathbf{x}(0)-\mathbf{x}^{*}}_2 = \epsilon$, we have
	\begin{equation}\label{a36}
	\tilde{t}_{\epsilon}\log \alpha + \log\norm{\mathbf{x}(0)-\mathbf{x}^{*}}_2 = \log\epsilon \quad \Longrightarrow\quad \tilde{t}_{\epsilon} = \frac{1}{\log \frac{1}{\alpha}}\cdot \log\frac{\norm{\mathbf{x}(0)-\mathbf{x}^{*}}_2}{\epsilon}.
	\end{equation}
	Note that here $\norm{\mathbf{x}(0)-\mathbf{x}^{*}}_2$ can go to zero and cause a negative $\tilde{t}_{\epsilon}$. Hence, if we are interested in minimum non-negative time, we can define  
	\begin{equation}
	t_{\epsilon} = \min \setp{t \geq 0 \mid \norm{\mathbf{x}(t)-\mathbf{x}^{*}}_2 \leq \epsilon}.
	\end{equation}
	Having $t\geq 0$ constraint and  $\norm{\mathbf{x}(t)-\mathbf{x}^{*}}_2$ instead of $\alpha^t \norm{\mathbf{x}(0)-\mathbf{x}^{*}}_2 \leq \epsilon$ in the definition of $t_{\epsilon}$, we have $t_{\epsilon}\leq \ceil*{\tilde{t}_{\epsilon}}$ if $\tilde{t}_{\epsilon}\geq 0$ and $t_{\epsilon}=0$ otherwise. This fact together with \eqref{a36} leads to 
	\begin{equation}
	t_{\epsilon} \leq \ceil*{\frac{1}{\log \frac{1}{\alpha}}\cdot \log\frac{1+\norm{\mathbf{x}(0)-\mathbf{x}^{*}}_2}{\epsilon}}.
	\end{equation}
\end{proof}
\begin{proof}[{\bf Proof of Theorem~\ref{conv theorem}}]
	\begin{enumerate}[(i)]
		\item Define $\mathbf{u}=[u_1,\ldots,u_n]^{\T}$ and note that by construction, Algorithm~\ref{alg} preserves the summation of all values over all iterations. In other words, $\mathds{1}_n^{\T}\mathbf{v}(t) =  \mathds{1}_n^{\T}\mathbf{u} = n\mathbf{u}^*$ for all $t\geq 0$. Due to \cite{xiao2004fast}, we know that the conditions $W\mathds{1}_n = W^{\T}\mathds{1}_n = \mathds{1}_n$ and $\rho( W - {1}/{n}\mathds{1}_n\mathds{1}_n^{\T}) <1$ lead to \eqref{cons conv}. Therefore,
		\begin{equation}
			\lim_{t \to \infty } \mathbf{v}(t) = \lim_{t \to \infty } W^t\mathbf{v}(0) = \frac{1}{n}\mathds{1}_n\mathds{1}_n^{\T}\mathbf{v}(0) = \mathds{1}_n^{\T}\mathbf{u} = \mathbf{u}^*.
		\end{equation}
		\item Since Algorithm~\ref{alg} only changes $\mathbf{v}(0)$ compared to an ordinary consensus, \eqref{conv rate} trivially holds.
		\item We use Lemma~\ref{lem:12} to compute $t_{\epsilon}$ for our problem. Note that for $t\geq 0$, we have 
		\begin{align}\label{a38}
			\mathbf{v}(t+1)-\mathbf{u}^* = W\mathbf{v}(t)-W\mathbf{u}^*-\frac{1}{n}\mathds{1}_n\mathds{1}_n^{\T}\mathbf{v}(t)+\frac{1}{n}\mathds{1}_n\mathds{1}_n^{\T}\mathbf{u}^*=\left(W-\frac{1}{n}\mathds{1}_n\mathds{1}_n^{\T}\right)\left(\mathbf{v}(t)-\mathbf{u}^*\right).
		\end{align}
		In obtaining \eqref{a38}, we used the following facts:
		\begin{align}
		W\mathbf{u}^* = W\frac{1}{n}\mathds{1}_n\mathds{1}_n^{\T}\mathbf{u}^* = \frac{1}{n} W\mathds{1}_n\mathds{1}_n^{\T}\mathbf{u}^*=\frac{1}{n} \mathds{1}_n\mathds{1}_n^{\T}\mathbf{u}^* =\mathbf{u}^* \quad \text{and}\quad \mathds{1}_n^{\T}\mathbf{v}(t) =\mathds{1}_n^{\T}\mathbf{u}^* \,\text{ for all }\,\, t\geq 0.
		\end{align}
		From \eqref{a38}, we conclude that for all $t\geq 0$
		\begin{align}\label{a39}
		\norm{\mathbf{v}(t+1)-\mathbf{u}^*}_2 \leq \norm{W-\frac{1}{n}\mathds{1}_n\mathds{1}_n^{\T}}_2\norm{\mathbf{v}(t)-\mathbf{u}^*}_2.
		\end{align}
		Comparing \eqref{a39} with \eqref{42}, by putting 
		\begin{equation}\label{alpha}
			\alpha = \norm{W-\frac{1}{n}\mathds{1}_n\mathds{1}_n^{\T}}_2,
		\end{equation} 
		and noting that $\alpha<1$ (because we assumed $\rho(W-{1}/{n}\mathds{1}_n\mathds{1}_n^{\T})<1$), the quantity $t_{\epsilon}$ can be obtained from \eqref{a37} as follows:
		\begin{equation}\label{40}
		t_{\epsilon} \leq \frac{1}{\log \frac{1}{\alpha}}\cdot \log\frac{1+\norm{\mathbf{v}(0)-\mathbf{u}^{*}}_2}{\epsilon}+1\leq \frac{1}{\log \frac{1}{\alpha}}\cdot \log \left(2\cdot\frac{1+\norm{\mathbf{v}(0)-\mathbf{u}^{*}}^2_2}{\epsilon}\right)+1.
		\end{equation}
		Here, the quantity $\mathbf{v}(0)-\mathbf{u}^{*}$ is a random variable. We take the expectation of both sides of \eqref{40} and apply the Jensen inequality due to the concavity of the logarithm function (Jensen inequality states that for a concave function $f$ and random variable $X$, $\E[f(x)]\leq f(\E[X])$). Hence,
		\begin{equation}\label{41}
		\E\left[t_{\epsilon}\right] \leq \frac{1}{\log \frac{1}{\alpha}}\cdot \log\left(2\cdot\frac{1+\E\left[\norm{\mathbf{v}(0)-\mathbf{u}^{*}}^2_2\right]}{\epsilon}\right)+1.
		\end{equation}
		In the remaining of the proof, we compute $\E\left[\norm{\mathbf{v}(0)-\mathbf{u}^{*}}^2_2\right]$ and replace it into \eqref{41}. To do so, we start by writing
		\begin{align}\label{43}
			\E\left[\norm{\mathbf{v}(0)-\mathbf{u}^{*}}^2_2\right] \leq n \cdot \max_{i\in [n]}\E\left[\left(v_i(0)-u^*\right)^2\right] = n\cdot \max_{i\in [n]}\left(\E\left[v_i(0)-u^*\right]\right)^2+\operatorname{Var}\left(v_i(0)-u^*\right).
		\end{align}
		To compute the right-hand side of \eqref{43}, we denote the maximum degree of a node in $G$ by $d_{\max}$, the indicator function by $\mathbf{1}$, and the number of elements of a set $A$ by $|A|$ and continue as follows:
		\begin{align}
			\abs{\E\left[v_i(0)-u^*\right]} &= \abs{\E\left[\sum_{j\in N_i} \left(u_j-\sum_{\ell\in N_j\setminus\setp{m_j}}\Gamma_{j\ell}\right)\mathbf{1}_{i=m_j}+\Gamma_{ji}\mathbf{1}_{i\neq m_j}-\frac{1}{n}\left(u_1+\ldots+u_n\right)\right]}\\
			&\leq  \abs{\setp{j\in N_i: i=m_j}} \cdot \mu_{\max}d_{\max} + \abs{\setp{j\in N_i: i\neq m_j}}\cdot \mu_{\max} + \mu_{\max}\\
			\label{45}& \leq \left(d_{\max}^2+1\right)\mu_{\max}.
		\end{align}
		To compute $\operatorname{Var}\left(v_i(0)-u^*\right)$, we write
		\begin{align}
		\operatorname{Var}\left(v_i(0)-u^*\right) &= \operatorname{Var}\left(\sum_{j\in N_i} \left(u_j-\sum_{\ell\in N_j\setminus\setp{m_j}}\Gamma_{j\ell}\right)\mathbf{1}_{i=m_j}+\Gamma_{ji}\mathbf{1}_{i\neq m_j}-\frac{1}{n}\left(u_1+\ldots+u_n\right)\right)\\
		&\leq  \abs{\setp{j\in N_i: i=m_j}} \cdot\left( \left(1-\frac{1}{n}\right)^2\sigma_{\max}^2 + \left(d_{\max}-1\right)\sigma_{\max}^2\right)\\
		&\quad \quad+ \abs{\setp{j\in N_i: i\neq m_j}}\cdot \sigma_{\max}^2 + \abs{\setp{j\in [n]: i\neq m_j}}\cdot \frac{\sigma_{\max}^2}{n^2}\\
		&\leq   \abs{\setp{j\in N_i: i=m_j}} \cdot d_{\max}\sigma_{\max}^2 + \abs{\setp{j\in [n]: i\neq m_j}}\cdot \frac{\sigma_{\max}^2}{n^2}\\
		&\leq   d_{\max}^2\sigma_{\max}^2 + \left(n-d_{\max}\right)\frac{\sigma_{\max}^2}{n^2}\\
		\label{44}&\leq   \left(d_{\max}^2+1\right)\sigma_{\max}^2 
		\end{align}
		Putting \eqref{43}, \eqref{45}, and \eqref{44} together results in
		\begin{align}\label{46}
		\E\left[\norm{\mathbf{v}(0)-\mathbf{u}^{*}}^2_2\right] \leq  n\cdot \left(\left(d_{\max}^2+1\right)^2\mu_{\max}^2+\left(d_{\max}^2+1\right)\sigma_{\max}^2 \right) \leq n \left(d_{\max}^2+1\right)^2 \left(\mu_{\max}^2+\sigma_{\max}^2\right).
		\end{align}
		Finally, replace \eqref{alpha} and \eqref{46} into \eqref{41} to obtain
		\begin{align}
		\E\left[t_{\epsilon}\right] &\leq \frac{1}{\log \frac{1}{\norm{W-\frac{1}{n}\mathds{1}_n\mathds{1}_n^{\T}}_2}}\cdot \log\left(2\cdot\frac{1+n \left(d_{\max}^2+1\right)^2 \left(\mu_{\max}^2+\sigma_{\max}^2\right)}{\epsilon}\right)+1.
		\end{align}	
\end{enumerate}
\end{proof}

\end{document}